\documentclass[envcountsame,runningheads]{llncs}
%

%ac1 comments 3 Nov 2025

\usepackage[T1]{fontenc}
% T1 fonts will be used to generate the final print and online PDFs,
% so please use T1 fonts in your manuscript whenever possible.
% Other font encondings may result in incorrect characters.
%
\usepackage{graphicx}
\usepackage{tabularray}

%%% Packages %%%%%%%%%%%%%%%%%%%%%%%%
\usepackage[all]{xy}

\usepackage[mathscr]{eucal}
\usepackage{amssymb}
\usepackage{enumerate}
\usepackage{url} 
\usepackage{multicol}
\usepackage{multirow}
\usepackage{enumitem}

\usepackage{amsmath}
\usepackage{marginnote}
\usepackage{comment}
%\numberwithin{equation}{section}

%%% Theorems %%%%%%%%%%%%%%%%%%%%%%%%%%

% TikZ
%%%%%%%%%%%%%%%%%%%%%%%%%%%%%%%%%%%%%%%%%%%%%%%%%%%%%%%%
\usepackage{pgf,tikz}

\usetikzlibrary{matrix}
\usetikzlibrary{calc,positioning,shapes,arrows.meta,fit}

\tikzset{%
 % elements
 shaded/.style={draw, shape=circle, fill=black!35, inner sep=1.4pt},
 unshaded/.style={draw, shape=circle, fill=white, inner sep=1.4pt},
 quasi/.style={draw, shape=rectangle, rounded corners=3pt, fill=white, inner sep=2.5pt, minimum height=14.5pt},
 % blobs
 blob/.style={draw, shape=rectangle, rounded corners=12pt, thin, densely dotted},
 % arrows
 arrow/.style={->, thin, >=latex, shorten >=2.5pt, shorten <=2.5pt},
 % lines
 order/.style={thin},
 curvy/.style={thin, looseness=1.2, bend angle=70},
 fatcurvy/.style={thin, looseness=1.7, bend angle=75},
 % labels
 label/.style={shape=rectangle, inner sep=6pt},
 auto}

%%% Tikz abbreviations %%%%%%%%%%%%%%%%%%%%

%%% Definition font   %%%%%%%%%%%%%%%%%%%%%%%%%%%%%%%%%

%%% Categories %%%%%%%%%%%%%%%%%%%%%%%%%%%%

%%%%%Scaled symbols %%%%%%%%%%%%%%%%%%%%%%
\font\bmi=cmmi8 scaled 1440
\newcommand{\powerset}{\raise.6ex\hbox{\bmi\char'175 }}
\newcommand{\UpE}{\mathsf{Up}(\mathbf{E})}

\newcommand{\tw}{{\sim}}

%%% Abbreviations %%%%%%%%%%%%%%%%%%%%%%%%%%%%%
\newcommand{\PSI}{\mathsf{psi}}
\newcommand{\A}{\mathbf{A}}

\newcommand{\littleand}{\mathbin{\wedge\kern -8truept \wedge}}
\newcommand{\bigor}{\mathop{\bigvee\kern -8.5truept \bigvee}}
\newcommand{\Bigor}{\mathop{\bigvee\kern -10truept \bigvee}}
\newcommand{\littleor}{\mathbin{\vee\kern -8truept \vee}}
\renewcommand{\le}{\leqslant}

%%% Side comments %%%%%%%%%%%%%%%%%%%%%%%%%%%%%%
%\newcommand{\red}[1]{{\color{red} #1}}
   %  For comments

\begin{document}

%%%%%%%%%%%%%%%%%%%%%%%%%%%%%%%%%%%%%%%%%%%%%%%%%%%%%%%%%%%%%%%%%%%%%%
%% FRONT MATTER
%%%%%%%%%%%%%%%%%%%%%%%%%%%%%%%%%%%%%%%%%%%%%%%%%%%%%%%%%%%%%%%%%%%%%%
\title{Contractions of quasi relation algebras  and applications to representability}
\titlerunning{Contractions of quasi relation algebras}
% If the paper title is too long for the running head, you can set
% an abbreviated paper title here
%
\author{Andrew Craig \inst{1,2}\orcidID{0000-0002-4787-3760} \and
Wilmari Morton \inst{1}\orcidID{0000-0003-0948-856X} \and
Claudette Robinson\inst{1}\orcidID{0000-0001-7789-4880}}
\authorrunning{A. Craig, W. Morton, C. Robinson}
% First names are abbreviated in the running head.
% If there are more than two authors, 'et al.' is used.
%
\institute{Department of Mathematics and Applied Mathematics, University of Johannesburg, Auckland Park 2006, South Africa \\
\email{\{acraig,wmorton,claudetter\}@uj.ac.za}
\and
National Institute for Theoretical and Computational Sciences (NITheCS), Johannesburg, South Africa
}

\maketitle

\begin{abstract}
%The abstract should briefly summarize the contents of the paper in
%150--250 words.
Quasi relation algebras (qRAs) were first described by Galatos and Jipsen in 2013. They are generalisations of relation algebras and can also be viewed as certain residuated lattice expansions. We identify positive symmetric idempotent elements in qRAs and show that they can be used to construct new qRAs, so-called contractions of the original algebra. 
%This construction replicates the use of equivalence elements in relation algebras. 
We then show 
%Moreover we show 
that the contraction of a distributive qRA will be  representable when the original algebra is representable. Further, we identify a class of distributive qRAs that are not finitely representable.  
\begin{comment}
In a 2020 paper Galatos and Jipsen describe a method for constructing new GBI-algebras from existing ones, using a method called relativization. The construction uses a positive idempotent element from the original algebra. It is a generalization of
Comer's double coset construction for relation algebras. We extend this method to quasi relation algebras and involutive FL-algebras. We show that if a distributive quasi relation algebra is representable, then its relativization is also representable.
\end{comment}
\keywords{quasi relation algebra  \and contraction \and positive symmetric idempotent \and  representability.}
\end{abstract}

%%%%%%%%%%%%%%%%%%%%%%%%%%%%%%%%%%%%%%%%%%%%%%%%%%%%%%%%%%%%%%%%%%%%%%
%% MAIN MATTER
%%%%%%%%%%%%%%%%%%%%%%%%%%%%%%%%%%%%%%%%%%%%%%%%%%%%%%%%%%%%%%%%%%%%%%

\section{Introduction}\label{sec:intro}
%\marginpar{\small Is the first \\ para. too  \\ simple and the\\ 2nd one\\ too technical?}

%\marginpar{\small It is a conf. \\ on RAs so I \\ think the 2nd \\ one is fine. }
One of the central problems in the study of relation algebras is determining which abstract relation algebras can be represented as algebras of binary relations~\cite{JT48}. A common method is to construct new algebras from representable ones, and to show that these new algebras are also then representable. Examples of this include Comer's $A[B]$ construction for relation algebras~\cite{Comer83}, the canonical extension of a relation algebra (check JT1 or JT2), and the  $\mathbf{K}[\mathbf{L}]$ construction for distributive quasi relation algebras~\cite{CMR25}.

Equivalence elements in relation algebras were identified already in the 1950s by J\'{o}nsson and Tarski~\cite{JT52}. For a relation algebra $\mathbf{A}=\langle A, \wedge, \vee, \neg, \top, \bot, \cdot, ^{\smallsmile},1\rangle$, an element $e\in A$ is an equivalence element if $e\cdot e=e = e^{\smallsmile}$. Using an equivalence element 
%ac2011 added for better line breaks
$e \in A$, 
there are two methods for constructing a new relation algebra 
%ac1 I've changed the reference here to cf because there they are both in one place. JT52 is the first to do the  relativization and I think Mackenzie is the first to do the contraction/quotient
from~$\mathbf{A}$ (cf.~\cite{J82}). 
%from~$\mathbf{A}$~\cite{J82}. 
One of these methods constructs an algebra $e\mathbf{A}e$ with   $eAe=\{\,e\cdot a\cdot e \mid a \in A\,\}$ as the underlying set. The identity element of the monoid operation on $eAe$ is $e$, the top element is $e \cdot \top\cdot e $ and for $x \in eAe$ its complement is $\neg x \wedge (e \cdot \top \cdot e)$. Givant and Andr\'{e}ka~\cite[Chapter 7]{GA-Simp} refer to this as a \emph{quotient} construction (modulo the equivalence element $e$). If $\mathbf{A}$ is a representable relation algebra, then $e\mathbf{A}e$ is representable~\cite{McK66}. 

In this paper we adapt the use of equivalence elements in relation algebras to the setting of quasi relation algebras (qRAs). These algebras were introduced by Galatos and Jipsen~\cite{GJ13}. Broadly, qRAs can be described as   residuated lattices with three order-reversing negation-like unary operations satisfying certain conditions. Our adaptation of equivalence elements are  \emph{positive symmetric idempotents}, where a  symmetric element is one with the  same image under all three unary operations. 
Our restriction to positive elements is inspired by the use of positive idempotents for a construction for GBI-algebras by Galatos and Jipsen~\cite{GJ20-AU}. 
For a qRA $\mathbf{A}$ and a positive symmetric idempotent element $p \in A$, we define the algebra $p\mathbf{A}p$ whose underlying set is $\{\,p\cdot a\cdot p \mid a \in A\,\}$ and show that it is a qRA (Theorem~\ref{Theorem:RelativizationDqRA}). Following the terminology of \cite[Section 7.2]{GA-Simp}, we call $p\mathbf{A}p$ a \emph{contraction} of $\mathbf{A}$. The contraction of a qRA need not be a subalgebra nor a homomorphic image.

In the spirit of relation algebras, two of the current authors gave a definition for a distributive quasi relation algebra to be \emph{representable}~\cite{RDqRA25}. In essence this requires that the algebra is representable as an algebra of binary relations, where 
the monoid operation  
is modelled  by relational composition. The algebra of binary relations is constructed via a partially ordered set $(X,\leqslant)$ equipped with an equivalence relation $E$, an order automorphism $\alpha$ and a dual order automorphism $\beta$ (see Section~\ref{sec:RDqRA}).

We show that when a distributive quasi relation algebra (DqRA) is representable, then any contraction will also be representable (Theorem~\ref{thm: pAp_representable}). This result helps to develop a catalogue of representable DqRAs. In addition, the method of contractions provides insights into which DqRAs are not finitely representable. Studies of generalisations of relation algebras and their (finite) representability has been studied in many contexts and has applications in programming semantics (see \cite{HS21}, in particular the introduction).

In Section~\ref{sec:prelim} we recall basic definitions of the algebras under consideration and we outline what it means for a DqRA to be representable. We  describe the construction of $p\mathbf{A}p$ in Section~\ref{sec:contractions} before showing in the next section that a contraction of a representable DqRA is representable.  The basic method is to produce a quotient structure of the poset used to represent $\mathbf{A}$ and to show that it can be used to represent $p\mathbf{A}p$. In the final section 
%ac1 wording was wrong. Not 100% happy even with this edit. 
we give an application which is a condition under which a DqRA cannot be represented using a finite poset. 
%we give an application to determining when a DqRA algebra can be represented using a finite poset. 
%\marginpar{\small I'm a bit  \\ worried that\\ there aren't \\ any CS links \\ in intro}

\section{Preliminaries}\label{sec:prelim}

\subsection{Quasi relation algebras}
Here we recall the basic definitions of the algebras under consideration. For more details the reader is referred to 
%ac1 I think this is a better set of references, do you agree? 
\cite{GJKO,GJ13,RDqRA25}.
%\cite{GJKO,CR24,GJ20-AU}.

An algebra $\mathbf{A}=\langle A,\wedge,\vee, \cdot,\backslash,/,1\rangle$ is 
a {\em residuated lattice} (RL) if $\langle A, \cdot,1\rangle$
is a monoid, $\langle A, \wedge, \vee\rangle$ is a lattice
and the monoid operation $\cdot$ is residuated with
residuals $\backslash$ and $/$ such that for all $a,b,c\in A$,
\[a\cdot b\leqslant c\quad\iff\quad 
a\leqslant c/b\quad\iff\quad
b\leqslant a\backslash c.\]
If the underlying lattice is distributive, then $\mathbf{A}$
is said to be a {\em distributive residuated lattice}.
%ac1 Do we use commutativity at all? 
%We might choose to keep this even if we don't use commutativity.
If $a \cdot b = b \cdot a$ for all $a, b \in A$, then
$\mathbf{A}$ is said to be {\em commutative}.

A {\em Full Lambek} (\emph{FL})-\emph{algebra} is a residuated lattice
expanded with a constant $0$, i.e.,
$\mathbf{A}=\langle A,\wedge,\vee, \cdot,\backslash,/,1,0\rangle$. The constant $0$ need not satisfy any additional properties. Linear negations, 
${\sim}:A\to A$ and $-:A\to A$, are defined on 
an FL-algebra in terms of the residuals and $0$ as follows: ${\sim} a=a\backslash 0$ and $ {-}a=0/a$. 

By residuation, 
${\sim} (a\vee b)={\sim}a \wedge {\sim}b$ and ${-}(a\vee b)=
{-a}\wedge{-}b$ for all $a,b\in A$ and we have that
${\sim}1 = 1\backslash 0 = 0= 0/1=-1$.

An {\em involutive Full Lambek} (\emph{InFL})-\emph{algebra} $\mathbf{A}$
is an FL-algebra that satisfies the condition
\[\textsf{(In)}:\quad   {\sim}{-}a=a={-}{\sim}a,\text{ for all }a\in A.\] 
If $\mathbf{A}$ is an InFL-algebra, then $-$ and $\sim$ 
are dual lattice isomorphisms since they are both order-preserving.  Furthermore, for all $a, b\in A$ it holds that
%ac1 made this a labelled eqn so I can use it later
\begin{equation}\label{Eqn:InFLprop} a\leqslant b\quad \iff \quad  a\,\cdot({\sim} b)\leqslant -1 \quad \iff \quad (-b)\cdot a\leqslant-1. \tag{$\ast$} \end{equation}
%$a\leqslant b\iff a\,\cdot({\sim} b)\leqslant -1\iff (-b)\cdot a\leqslant-1$. 
Define a binary operation $+:A\times A\to A$ by $a+b={\sim}(-a\cdot -b)$ for all $a,b\in A$.
Then $+$ is the dual of $\cdot$ and $a+b=-({\sim} b\cdot {\sim} a)$ holds for all $a,b\in A$ where $\mathbf{A}$ is
an InFL-algebra.

In~\cite[Lemma 2.2]{GJ13} it is shown that an InFL-algebra is term-equivalent to an algebra $\mathbf{A}=\langle A, \wedge,\vee,\cdot,{\sim},{-},1\rangle$ such that $\langle A, \wedge,\vee\rangle$ is a lattice, $\langle A, \cdot, 1\rangle$ is a monoid, and for all $a, b, c \in A$, we have
\begin{equation}\label{Equation:EquivalentSignature}
a\cdot b\leqslant c\quad\iff\quad 
a\leqslant -\left(b\cdot {\sim}c\right)\quad\iff\quad
b\leqslant {\sim}\left(-c\cdot a\right). 
%ac1 added this tag because (1) was confusing later 
\tag{$\ast\ast$}
\end{equation}
The residuals can be expressed in terms of $\cdot$ and the linear negations as follows:
\[c/b = -\left(b\cdot {\sim}c\right)
    \quad\text{and}\quad
    a\backslash c = {\sim}\left(-c\cdot a\right).
\]
We will  use the signature $\mathbf A = \langle A,\wedge, \vee, \cdot, {\sim}, {-}, 1\rangle$ for an InFL-algebra. 
%where $0 = -1 ={\sim}1$. 
We will usually  omit the monoid operation $\cdot$ 
in the 
%ac1
sequel
%sequal 
and simply write $ab$ to denote $a\cdot b$
for the sake of streamlining notation.

An {\em InFL$'$-algebra} $\mathbf{A}$ is an InFL-algebra expanded 
with a unary operation ${\neg}:A\to A$ ($\neg$ is used instead of $'$ for the sake of readibility) such that $\neg\neg a=a$ for all $a\in A$, i.e., 
$\mathbf{A}=\langle A,\wedge, \vee, \cdot, {\sim}, {-}, \neg, 1\rangle$.
A {\em DmInFL$'$-algebra} $\mathbf{A}$ is a InFL$'$-algebra 
that additionally satisfies the De Morgan law
\[\textsf{(Dm)}:\quad\neg (a\vee b)=\neg a\wedge \neg b,\text{ for all }a,b,\in A.\]
A {\em quasi relation algebra} (qRA) is a DmInFL$'$-algebra 
$\mathbf{A}=\langle A,\wedge,\vee,\cdot,{\sim},{-},{\neg},1\rangle$
that additionally satisfies the 
%De Morgan involution and
De Morgan product rule, i.e., for all $a,b\in A$,
\[\textsf{(Dp)}:\quad\neg 
%ac1 do we want cdot here? 
%(a\cdot b) 
(ab)
=\neg a+\neg b.\]
In a qRA $\mathbf{A}$ it holds that 
$\neg 1 = - 1= {\sim}1 = 0$ and the de Morgan involution 
\textsf{(Di)}~$\neg({\sim} a)=-(\neg a)$ holds for all $a \in A$~\cite[Lemma 1]{CJR24}. 

If the underlying lattice $\langle A, \wedge,\vee\rangle$
of a qRA $\mathbf{A}$ is distributive, then $\mathbf{A}=\langle A,\wedge,\vee,\cdot,{\sim},{-},{\neg},1\rangle$
is said to be a {\em distributive quasi relation algebra} (DqRA). An element $a \in A$ is called {\em positive} if $1\leqslant a$, {\em symmetric} if ${\sim}a={-}a=\neg a$ and \emph{idempotent} if 
%ac1 Do we want cdot here? 
%$a\cdot a = 
$a^2 = a$.

\begin{example}\label{Example:DqRAExample}
The six element algebra  in Figure~\ref{fig:DqRAExample} is an example of a 
%ac1 
(distributive) qRA.
%DqRA.
It has the identifier $D^6_{3,5,2}$. (See~\cite{CJR-DqRA} for a complete list of DInFL-algebras and DqRAs up to size 8.) In any 
%ac1
qRA
%DqRA 
the monoid identity and
the top element (if it has one) are trivial examples
of positive symmetric idempotents. In $D^6_{3,5,2}$
the elements $a$ and $b$ are non-trivial examples
of positive symmetric idempotents.
\end{example}

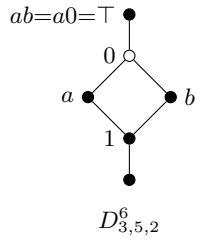
\begin{figure}[h]
    \centering
    \begin{tikzpicture}
           \begin{scope}[xshift=0cm,scale=0.55]
    % Elements
    \node[draw,circle,inner sep=1.5pt,fill] (bot) at (0,0) {};
    \node[draw,circle,inner sep=1.5pt,fill] (1) at (0,1) {};
    \node[draw,circle,inner sep=1.5pt,fill] (a) at (-1,2) {};
    \node[draw,circle,inner sep=1.5pt, fill] (b) at (1,2) {};
     \node[draw,circle,inner sep=1.5pt] (0) at (0,3) {};
    \node[draw,circle,inner sep=1.5pt,fill] (top) at (0,4) {};
    % Order
    \path [-] (bot) edge node {} (1);
    \path [-] (1) edge node {} (a);
    \path [-] (1) edge node {} (b);
    \path [-] (a) edge node {} (0);
    \path [-] (b) edge node {} (0);
    \path [-] (0) edge node {} (top);
    % Labels
    \node[label,anchor=east,xshift=1pt] at (top) {$ab{=}a0{=}\top$};
    \node[label,anchor=east,xshift=1pt] at (0) {$0$};
    \node[label,anchor=east,xshift=1pt] at (a) {$a$};
    \node[label,anchor=west,xshift=-1pt] at (b) {$b$};
    \node[label,anchor=east,xshift=1pt] at (1) {$1$};
    \node[] at (0,-1) {$D^6_{3,5,2}$};
    \end{scope} 
    \end{tikzpicture}    
        \caption{$D^6_{3,5,2}$ is a DqRA with
        positive symmetric idempotents $\top,1,a$ and $b$.}
    \label{fig:DqRAExample}
\end{figure}

\subsection{Representable DqRAs}\label{sec:RDqRA}
Below we recall the  construction of DqRAs from partially ordered equivalence relations  and hence also the definition of \emph{representability} of a DqRA~\cite{RDqRA25}. 
The lattice structure in the construction is the same as that used in~\cite{GJ20-AU,GJ20-ramics,JS23}.

Let $X$ be a set, $E$ an equivalence relation on $X$ and $R\subseteq X^2$. The converse relation of $R$ is 
$R^\smile = \left\{\left(x, y\right) \mid \left(y, x\right) \in R\right\}$, and its complement (relative to $E$) is
$R^c=\{\, (x,y) \in E \mid (x,y) \notin R\,\}$.
The identity relation is denoted by $\mathrm{id}_X=\{\,(x,x)\mid x \in X\,\}$.
For $R\subseteq E$ we have $(R^{\smile})^{\smile}=R$, $(R^{\smile})^c=(R^c)^{\smile}$, and $\mathrm{id}_X\mathbin{;} R = R\mathbin{;} \mathrm{id}_X = R$.
The composition of two binary relations $S$ and $R$
is given by 
$R \mathbin{;} S = \left\{\left(x, y\right) \mid \left(\exists z \in X\right)\left(\left(x, z\right) \in R \textnormal{ and } \left(z, y\right) \in S\right)\right\}$.
For $R,S, T\subseteq E$ we have $\left(R\, ; S\right)\mathbin{;} T = R\mathbin{;} \left(S\mathbin{;} T\right)$ and 
$\left(R\mathbin{;} S\right)^\smile = S^\smile\mathbin{;} R^\smile$. 

If $\gamma: X \to X$ is a function, then $\gamma$ will be
used to denote either the function or the binary relation 
$\{(x,\gamma(x)) \mid x\in X\}$ that is its graph. 
The following lemma will be used frequently.
Note that it applies when $\gamma$ is a bijective function $\gamma : X \to X$.
\begin{lemma}{\normalfont \cite[Lemma 3.4]{RDqRA25}}
\label{Lemma:ComplementComposition}
Let $E$ be an equivalence relation on a set $X$, and let $R, S, \gamma \subseteq E$. If 
$\gamma$
satisfies 
$\gamma^{\smile}\mathbin{;} \gamma = \mathrm{id}_X$ and $\gamma \mathbin{;} \gamma^{\smile}=\mathrm{id}_X$
then the following hold:
\begin{enumerate}[label=\textup{(\roman*)}]
\item $\left(\gamma\mathbin{;} R\right)^c = \gamma \mathbin{;} R^c$
\item $\left(R\mathbin{;} \gamma\right)^c = R^c \mathbin{;} \gamma$
\end{enumerate}	
\end{lemma}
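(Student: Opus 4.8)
The plan is to prove (i) directly from the definitions and then obtain (ii) from (i) by a converse argument. First I would record that the two hypotheses $\gamma^\smile \mathbin{;} \gamma = \mathrm{id}_X$ and $\gamma \mathbin{;} \gamma^\smile = \mathrm{id}_X$ force $\gamma$ to be the graph of a bijection of $X$: the first equation makes $\gamma$ single-valued and surjective, while the second makes it total and injective. Hence for each $x \in X$ there is a unique $z$ with $(x,z) \in \gamma$, which I write $\gamma(x)$, and left-composition by $\gamma$ simply relabels the first coordinate, $\gamma \mathbin{;} R = \{\,(x,y) \mid (\gamma(x), y) \in R\,\}$, for any $R$. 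Since $\gamma, R \subseteq E$ and $E$ is transitive, both $\gamma \mathbin{;} R$ and $\gamma \mathbin{;} R^c$ lie inside $E$, so all the complements below are taken relative to $E$ as intended.

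For (i) I would unfold both sides. On the one hand $(\gamma \mathbin{;} R)^c = \{\,(x,y) \in E \mid (\gamma(x), y) \notin R\,\}$. On the other hand $\gamma \mathbin{;} R^c = \{\,(x,y) \mid (\gamma(x), y) \in R^c\,\} = \{\,(x,y) \mid (\gamma(x), y) \in E \text{ and } (\gamma(x), y) \notin R\,\}$. The two descriptions already agree on the clause $(\gamma(x), y) \notin R$, so the claim reduces to matching the membership-in-$E$ conditions, that is, to showing $(x,y) \in E \iff (\gamma(x), y) \in E$. This is precisely where the hypothesis $\gamma \subseteq E$ enters: since $(x, \gamma(x)) \in \gamma \subseteq E$ and $E$ is an equivalence relation, $x$ and $\gamma(x)$ lie in the same $E$-class, whence by symmetry and transitivity of $E$ we get $(x,y) \in E$ if and only if $(\gamma(x), y) \in E$. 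This yields equality of the two sets and proves (i).

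For (ii) I would avoid repeating the computation and instead pass through converses, using the identities recorded just before the lemma, namely $(R \mathbin{;} S)^\smile = S^\smile \mathbin{;} R^\smile$ and $(R^\smile)^c = (R^c)^\smile$ for relations contained in $E$. Note first that $\gamma^\smile$ again satisfies the two hypotheses (its defining equations are those of $\gamma$ with the roles of the two products interchanged), so part (i) applies to the pair $\gamma^\smile, R^\smile$. Then
\[
((R \mathbin{;} \gamma)^c)^\smile = ((R \mathbin{;} \gamma)^\smile)^c = (\gamma^\smile \mathbin{;} R^\smile)^c = \gamma^\smile \mathbin{;} (R^\smile)^c = \gamma^\smile \mathbin{;} (R^c)^\smile = (R^c \mathbin{;} \gamma)^\smile,
\]
and since converse is an involution on subsets of $E$, taking converses of both ends yields $(R \mathbin{;} \gamma)^c = R^c \mathbin{;} \gamma$.

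I expect the only genuine obstacle to be the bookkeeping forced by the relativised complement: one must consistently carry the ambient condition ``$(\,\cdot\,,\,\cdot\,) \in E$'' through each composition and complement and check that it lines up. The conceptual heart of the matter is the simple observation that left-composing by $\gamma \subseteq E$ never moves a point out of its $E$-class, and it is exactly this that makes complementation relative to $E$ commute with composition by $\gamma$.
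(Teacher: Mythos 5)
Your proof is correct. Note that the paper itself states this lemma without proof, citing \cite[Lemma 3.4]{RDqRA25}, so there is no in-paper argument to compare against; your derivation of (i) from the observation that left-composing with $\gamma \subseteq E$ keeps a pair inside its $E$-class (so the two relativised complements coincide), and of (ii) from (i) by passing through converses using $(R\mathbin{;}S)^{\smile}=S^{\smile}\mathbin{;}R^{\smile}$ and $(R^{\smile})^c=(R^c)^{\smile}$ together with the fact that $\gamma^{\smile}$ satisfies the same hypotheses, is sound and complete.
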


The following lemma will also be used later.

\begin{lemma}\label{lem:alpha_E_alpha}
Let $X$ be a set,  $E$ an equivalence relation on $X$, and $\gamma: X\to X$ such that $\gamma \subseteq E$. Then  $(x,y) \in E$ iff $(\gamma(x), \gamma(y)) \in E$. 
\end{lemma}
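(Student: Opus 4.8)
The plan is to unwind the hypothesis $\gamma \subseteq E$ into a pointwise statement and then push both directions of the biconditional through the three defining properties of an equivalence relation. First I would observe that, since $\gamma$ is being read as its graph $\{(x,\gamma(x)) \mid x \in X\}$ (following the convention fixed just before the statement), the containment $\gamma \subseteq E$ says exactly that $(x,\gamma(x)) \in E$ for every $x \in X$. Because $E$ is symmetric, this immediately gives $(\gamma(x),x) \in E$ as well, so each point $x$ is $E$-related to its image $\gamma(x)$ in both orders. This pointwise reformulation is the only real content; the rest is bookkeeping.

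For the forward direction I would assume $(x,y) \in E$ and chain three pairs using transitivity: starting from $(\gamma(x),x) \in E$, composing with $(x,y) \in E$ yields $(\gamma(x),y) \in E$, and composing that with $(y,\gamma(y)) \in E$ yields $(\gamma(x),\gamma(y)) \in E$, as required. The reverse direction is the mirror image: assuming $(\gamma(x),\gamma(y)) \in E$, I would compose $(x,\gamma(x)) \in E$ with the assumption to obtain $(x,\gamma(y)) \in E$, and then compose with $(\gamma(y),y) \in E$ to conclude $(x,y) \in E$.

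There is no genuine obstacle here; the statement is a routine consequence of $E$ being an equivalence relation together with the fact that every point is $E$-related to its $\gamma$-image. The only thing to watch is keeping the symmetry applications straight so that the transitivity chains line up in the correct order. It is worth noting that neither injectivity nor surjectivity of $\gamma$ is used—only that its graph lies inside $E$—so the lemma holds for an arbitrary function $\gamma$ with $\gamma \subseteq E$, and in particular for the order (dual) automorphisms $\alpha, \beta$ appearing in the representation construction of Section~\ref{sec:RDqRA}.
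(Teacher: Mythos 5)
Your proposal is correct and follows essentially the same route as the paper's proof: unwind $\gamma \subseteq E$ to $(x,\gamma(x)) \in E$ pointwise, then use symmetry and two applications of transitivity in each direction, with identical chains of pairs. Nothing to add.
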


\begin{proof}
Let $(x, y) \in E$. Now since $(x, \gamma(x)) \in \gamma$ and $\gamma \subseteq E$, we have $(x, \gamma(x)) \in E$. Hence, since $E$ is symmetric, $(\gamma(x), x)\in E$, and so, by the transitivity of $E$, $(\gamma(x), y)\in E$. We also have $(y, \gamma(y)) \in \gamma \subseteq E$, so another application of the transitivity of $E$ yields $(\gamma(x), \gamma(y)) \in E$. 

Conversely, let $(\gamma(x), \gamma(y)) \in E$. Now 
%ac1
$(x,\gamma(x))
%$(x, \alpha(x)) 
\in \gamma\subseteq E$, and so, by the transitivity of $E$, we get $(x, \gamma(y)) \in E$. Moreover, $(y, \gamma(y)) \in \gamma\subseteq E$, and so, by the symmetry and transitivity of $E$, we obtain $(x, y) \in E$. 
\qed
\end{proof}

Let $\mathbf X = \left(X, \leqslant\right)$ be a poset
and $E$ an equivalence relation on $X$ such that ${\leqslant} \subseteq E$. Define a binary relation $\preccurlyeq$
%ac1
%ordering
%ordereding 
on $E$ as follows:  $(u, v) \preccurlyeq (x, y)  \textnormal{ iff } x \leqslant u \textnormal{ and } v\leqslant y$ for all $(u, v), (x, y) \in E$. Then $\mathbf E = \left(E, \preccurlyeq\right)$ is a poset and hence the set of up-sets of $\mathbf E$ ordered by inclusion, denoted by $\textsf{Up}\left(\mathbf E\right)$,
is a distributive lattice.
Let $\mathsf{Down}(\mathbf{E})$ denote the set of downsets of the poset 
$(E,\preccurlyeq)$. Both $\UpE$ and $\mathsf{Down}(\mathbf{E})$ are closed under composition.
Further, $R \in \UpE$ iff $R^c \in \mathsf{Down}(\mathbf{E})$ iff $R^{\smile} \in \mathsf{Down}(\mathbf{E})$.

Note that
${\leqslant}\in \mathsf{Up}(\mathbf{E})$ and it is the identity with respect to the composition,~$\mathbin{;}$. 
%and acts as an identity with respect to $\mathbin{;}$.
Furthermore, the composition, $\mathbin{;}$, is residuated with its residuals defined by
$R \backslash_{\UpE} S = (R^{\smile}\mathbin{;}S^c)^c$ and 
$R/_{\UpE}S=(R^c\mathbin{;}S^{\smile})^c$.
Therefore, the algebra $\langle \mathsf{Up}(\mathbf{E}), \cap, \cup, \mathbin{;},\backslash_{\UpE},/_{\UpE}, \leqslant \rangle
$ is a distributive residuated lattice.

Next let $\alpha : X \to X$ be an order automorphism and  $\beta : X \to X$ a self-inverse dual order automorphism.
If $R, S  \in \mathsf{Up}\left(\mathbf E\right)$, then 
$\alpha\mathbin{;} R\in\UpE$ and  $R\mathbin{;} \alpha\in\UpE$.
If $R  \in \mathsf{Down}\left(\mathbf E\right)$, then $
\beta\mathbin{;}R\mathbin{;}\beta \in \UpE$.
See~\cite[Lemma 3.5]{RDqRA25} for more details.
%Set $0= \alpha \mathbin{;} {\leqslant^{c\smile}}$. Then
%$0\in\UpE$ and ${\sim}$ and $-$ can be defined
%on $\mathsf{Up}(\mathbf{E})$ in terms of $0$ and 
%the residuals. 
%However, the linear negations can in fact be defined \emph{without} using the residuals~\cite[Lemma 3.10]{RDqRA25}:  
For all $R \in \mathsf{Up}\left(\mathbf{E}\right)$,
the linear negations can be defined by:
\[{\sim} R = R^{c\smile}\mathbin{;} \alpha\text{ and } {-}R = \alpha \mathbin{;} R^{c\smile}.\]
Lastly, $\neg R$ is defined using the self-inverse dual order automorphism $\beta$, as stated below.
\begin{theorem}{\normalfont \cite[Theorem 3.15]{RDqRA25}}\label{Theorem:Dq(E)}
Let $\mathbf{X}=\left(X,\leqslant\right)$ be a poset and $E$ an equivalence relation on $X$ such that ${\leqslant} \subseteq E$.  Let $\alpha: X \to X$ be an order automorphism of $\mathbf X$ and $\beta: X \to X$ a self-inverse dual order automorphism of $\mathbf X$ such that $\alpha, \beta \subseteq E$ and $\beta = \alpha \mathbin{;} \beta\mathbin{;} \alpha$. 
Set $1={\leqslant}$ and 
%$0= \alpha \mathbin{;} {\leqslant^{c\smile}}$. For
for 
$R \in \mathsf{Up}(\mathbf E)$, define
${\sim} R = R^{c\smile}\mathbin{;} \alpha$, $-R = \alpha \mathbin{;} R^{c\smile}$, and 
$\neg R =  \alpha\mathbin{;} \beta \mathbin{;} R^c \mathbin{;} \beta$.
Then the algebra $\mathbf{Dq}(\mathbf E) = \left\langle \mathsf{Up}\left(\mathbf{E}\right),\cap, \cup, \mathbin{;}, {\sim}, {-}, {\neg}, 1 \right\rangle$ 
is a distributive quasi relation algebra. If $\alpha$ is the identity, then $\mathbf{Dq}(\mathbf E)$ is a cyclic distributive quasi relation algebra.	
\end{theorem}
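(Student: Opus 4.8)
The plan is to peel back the definition of a DqRA layer by layer, exploiting the fact—recorded just before the statement—that $\langle \mathsf{Up}(\mathbf{E}), \cap, \cup, \mathbin{;}, \backslash_{\UpE}, /_{\UpE}, {\leqslant}\rangle$ is already a distributive residuated lattice with monoid identity $1={\leqslant}$. Thus the lattice, monoid and residuation reducts come for free, and what remains is to control the three unary maps ${\sim}, {-}, \neg$: I must show (a) each carries $\mathsf{Up}(\mathbf E)$ into itself; (b) ${\sim}, {-}$ make the structure an InFL-algebra; (c) $\neg\neg R = R$; (d) the De Morgan law \textsf{(Dm)}; and (e) the De Morgan product rule \textsf{(Dp)}. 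The workhorse throughout is Lemma~\ref{Lemma:ComplementComposition}, which slides a complement $(\cdot)^c$ past a composition with $\alpha$ or $\beta$; it applies because $\alpha$ is an order automorphism and $\beta$ a self-inverse dual order automorphism, so both are graphs of bijections inside $E$, giving $\alpha^{\smile}\mathbin{;}\alpha = \alpha\mathbin{;}\alpha^{\smile} = \mathrm{id}_X$ and $\beta^{\smile}=\beta$, $\beta\mathbin{;}\beta=\mathrm{id}_X$. I will also lean on $(R^{\smile})^c = (R^c)^{\smile}$ and $(R\mathbin{;}S)^{\smile}=S^{\smile}\mathbin{;}R^{\smile}$. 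For (a), from $R\in\mathsf{Up}(\mathbf E)$ I get $R^c\in\mathsf{Down}(\mathbf E)$, hence $R^{c\smile}\in\mathsf{Up}(\mathbf E)$, and the stated closure facts ($R\mathbin{;}\alpha,\alpha\mathbin{;}R\in\mathsf{Up}(\mathbf E)$, and $\beta\mathbin{;}R\mathbin{;}\beta\in\mathsf{Up}(\mathbf E)$ for $R\in\mathsf{Down}(\mathbf E)$) then put ${\sim}R,{-}R,\neg R$ back in $\mathsf{Up}(\mathbf E)$.

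For (b), rather than introducing $0$ explicitly, I verify the involution law \textsf{(In)} and the residual identities, which by \cite[Lemma 2.2]{GJ13} suffice. Pushing complements through with Lemma~\ref{Lemma:ComplementComposition} gives $({\sim}R)^{c\smile}=\alpha^{\smile}\mathbin{;}R$ and $({-}R)^{c\smile}=R\mathbin{;}\alpha^{\smile}$, whence ${-}{\sim}R=\alpha\mathbin{;}\alpha^{\smile}\mathbin{;}R=R$ and ${\sim}{-}R=R\mathbin{;}\alpha^{\smile}\mathbin{;}\alpha=R$, so \textsf{(In)} holds. The same bookkeeping yields $-(R\mathbin{;}{\sim}S)=(S^c\mathbin{;}R^{\smile})^c = S/_{\UpE}R$ and ${\sim}({-}S\mathbin{;}R)=(R^{\smile}\mathbin{;}S^c)^c = R\backslash_{\UpE}S$, which are exactly the residual expressions in~\eqref{Equation:EquivalentSignature}; hence $\langle \mathsf{Up}(\mathbf E),\cap,\cup,\mathbin{;},{\sim},{-},1\rangle$ is an InFL-algebra.

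Part (c) is where the hypothesis $\beta=\alpha\mathbin{;}\beta\mathbin{;}\alpha$ is used. Peeling complements through the two copies of $\beta$ and the $\alpha$ gives $(\neg R)^c=\alpha\mathbin{;}\beta\mathbin{;}R\mathbin{;}\beta$, and substituting once more produces $\neg\neg R=\alpha\mathbin{;}\beta\mathbin{;}\alpha\mathbin{;}\beta\mathbin{;}R$. Since $\beta=\alpha\mathbin{;}\beta\mathbin{;}\alpha$ and $\beta\mathbin{;}\beta=\mathrm{id}_X$ together give $\alpha\mathbin{;}\beta\mathbin{;}\alpha\mathbin{;}\beta=\beta\mathbin{;}\beta=\mathrm{id}_X$, this collapses to $R$. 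For \textsf{(Dm)}, I use $(R\cup S)^c=R^c\cap S^c$ together with the fact that, $\alpha$ and $\beta$ being graphs of bijections, composition with them on either side commutes with $\cap$; distributing then gives $\neg(R\cup S)=(\alpha\mathbin{;}\beta\mathbin{;}R^c\mathbin{;}\beta)\cap(\alpha\mathbin{;}\beta\mathbin{;}S^c\mathbin{;}\beta)=\neg R\cap\neg S$.

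The main obstacle is (e), the De Morgan product rule, which requires coordinating the composition order against complementation and converse. I compute both sides. On the left, $\neg(R\mathbin{;}S)=\alpha\mathbin{;}\beta\mathbin{;}(R\mathbin{;}S)^c\mathbin{;}\beta$. On the right I expand $\neg R+\neg S$ using the dual product in the form $a+b=-({\sim}b\cdot{\sim}a)$. Here the outer copies of $\alpha$ cancel via $\alpha^{\smile}\mathbin{;}\alpha=\mathrm{id}_X$, leaving ${\sim}\neg R=\beta\mathbin{;}R^{\smile}\mathbin{;}\beta$ and likewise for $S$, so that
\[
{\sim}\neg S\mathbin{;}{\sim}\neg R=\beta\mathbin{;}S^{\smile}\mathbin{;}R^{\smile}\mathbin{;}\beta=\beta\mathbin{;}(R\mathbin{;}S)^{\smile}\mathbin{;}\beta,
\]
where the order-reversal $S^{\smile}\mathbin{;}R^{\smile}=(R\mathbin{;}S)^{\smile}$ is precisely what forces the factor $(R\mathbin{;}S)^c$ rather than $(S\mathbin{;}R)^c$ — the delicate point, since composition is noncommutative. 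Applying $-$ and simplifying with Lemma~\ref{Lemma:ComplementComposition} gives $-({\sim}\neg S\mathbin{;}{\sim}\neg R)=\alpha\mathbin{;}\beta\mathbin{;}(R\mathbin{;}S)^c\mathbin{;}\beta$, which matches the left-hand side, establishing \textsf{(Dp)} and hence that $\mathbf{Dq}(\mathbf E)$ is a DqRA. Finally, when $\alpha=\mathrm{id}_X$ the formulas reduce to ${\sim}R={-}R=R^{c\smile}$, so the two linear negations coincide and the algebra is cyclic; note that $\beta=\alpha\mathbin{;}\beta\mathbin{;}\alpha$ then holds automatically.
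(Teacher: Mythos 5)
This theorem is imported by the paper from \cite[Theorem 3.15]{RDqRA25} and is stated without proof, so there is no in-paper argument to measure you against; I can only check your derivation against the machinery the paper sets up, and it holds up. Your decomposition is the natural one: the distributive residuated-lattice reduct of $\UpE$ is already recorded in the text, and every remaining verification reduces to sliding $(\cdot)^c$ past compositions with $\alpha$ and $\beta$ via Lemma~\ref{Lemma:ComplementComposition}, which applies because both are graphs of bijections contained in $E$ (with $\beta^{\smile}=\beta$ and $\beta\mathbin{;}\beta=\mathrm{id}_X$). I rechecked the key identities: $({\sim}R)^{c\smile}=\alpha^{\smile}\mathbin{;}R$ and $(-R)^{c\smile}=R\mathbin{;}\alpha^{\smile}$ do give \textsf{(In)}; $-(R\mathbin{;}{\sim}S)=(S^c\mathbin{;}R^{\smile})^c=S/_{\UpE}R$ and ${\sim}(-S\mathbin{;}R)=(R^{\smile}\mathbin{;}S^c)^c=R\backslash_{\UpE}S$ recover \eqref{Equation:EquivalentSignature}; $(\neg R)^c=\alpha\mathbin{;}\beta\mathbin{;}R\mathbin{;}\beta$ gives $\neg\neg R=\alpha\mathbin{;}\beta\mathbin{;}\alpha\mathbin{;}\beta\mathbin{;}R=R$ using $\alpha\mathbin{;}\beta\mathbin{;}\alpha=\beta$; intersection does commute with one-sided composition by a graph of a bijection, giving \textsf{(Dm)}; and ${\sim}\neg R=\beta\mathbin{;}R^{\smile}\mathbin{;}\beta$ makes the middle $\beta\mathbin{;}\beta$ collapse in ${\sim}\neg S\mathbin{;}{\sim}\neg R$ so that the converse reversal produces exactly $(R\mathbin{;}S)^{\smile}$, yielding \textsf{(Dp)}. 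Two points worth making explicit if you write this up: closure of $\UpE$ under $\neg$ needs the stated fact that $\beta\mathbin{;}D\mathbin{;}\beta\in\UpE$ for $D\in\mathsf{Down}(\mathbf E)$ applied to $D=R^c$ before composing with $\alpha$ on the left (you say this, but it is the one place where the dual-automorphism property, rather than mere bijectivity, is doing work), and the reduction to \cite[Lemma 2.2]{GJ13} should note that the theorem's signature omits $0$, so no separate verification for the constant is required. With those remarks your argument is a complete and correct proof.
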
	

Algebras of the form described by Theorem~\ref{Theorem:Dq(E)} are called \emph{equivalence distributive quasi relation algebras} and the class of such algebras is denoted $\mathsf{EDqRA}$. The algebra $\mathbf{Dq}(\mathbf{E})$ is
said to be a \emph{full distributive quasi relation algebra}
if $E=X^2$ and the class of such algebras denoted is by $\mathsf{FDqRA}$. Analogous to the case for relation algebras (cf.~\cite[Chapter 3]{Mad06}), 
it was shown~\cite[Theorem 4.4]{RDqRA25} that 
$\mathbb{IP}(\mathsf{FDqRA})=\mathbb{I}(\mathsf{EDqRA})$. 
\begin{definition}
{\normalfont \cite[Definition 4.5]{RDqRA25}}\label{Definition:RDqRA}
A DqRA $\mathbf{A} = \left\langle A, \wedge, \vee, \cdot,  {\sim}, {-}, {\neg}, 1 \right\rangle$ 
is \emph{representable} if 
$\mathbf{A} \in \mathbb{ISP}\left(\mathsf{FDqRA}\right)$
or, equivalently, $\mathbf{A} \in \mathbb{IS}\left(\mathsf{EDqRA}\right)$. 
\end{definition}

A DqRA $\mathbf{A}$ is \emph{finitely} representable if the poset $( X ,\leqslant)$ used in the representation of $\mathbf{A}$ is finite. 

\begin{example}\label{Example:DqRARepresentation}
Let $X=\{w,x,y,z\}$ be the four element
antichain depicted in Figure~\ref{fig:DqRARepresentation} such
that $E=X^2$, $\alpha: X\to X$ is given by $\alpha=\{(w,x), (x,w), (y,z), (z,y)\}$ and 
$\beta:X\to X$ is given by 
$\beta=\{(w,y),(y,w),(x,z),(z,x)\}$.
Also let
\begin{align*}
R_a&=\{(w,w),(x,x),(y,y),(z,z),(w,y),(y,w),(x,z),(z,x)\} \quad \text{and}\\
R_b&=\{(w,w),(x,x),(y,y),(z,z),(w,z),(z,w),(x,y),(y,x)\}.
\end{align*}
Then $D^6_{3,5,2}$ depicted in Figure~\ref{fig:DqRAExample} is finitely
representable since it is 
%ac1
isomorphic 
%ismorphic 
to
the subalgebra of $\mathbf{Dq}(\mathbf E)$ with elements  $\{\varnothing, \leqslant, R_a, R_b, \alpha;\leqslant^{c\smile},X^2\}$.
\end{example}
\begin{figure}[h]
    \centering
    \begin{tikzpicture}[scale=1.5,pics/sample/.style={code={\draw[#1] (0,0) --(0.6,0) ;}},
Dotted/.style={% https://tex.stackexchange.com/a/52856/194703
dash pattern=on 0.1\pgflinewidth off #1\pgflinewidth,line cap=round,
shorten >=#1\pgflinewidth/2,shorten <=#1\pgflinewidth/2},
Dotted/.default=3]
\begin{scope}[xshift=-5cm, box/.style = {draw,dashdotdotted,inner sep=38pt,rounded corners=5pt,thick}]
%xshift=-1.25cm,yshift=-2.5cm]
%Elements
\node[draw,circle,inner sep=1.5pt] (w) at (-2,0.7) {};
\node[draw,circle,inner sep=1.5pt] (x) at (0,0.7) {};
\node[draw,circle,inner sep=1.5pt] (y) at (2,0.7) {};
\node[draw,circle,inner sep=1.5pt] (z) at (4,0.7) {};
%alpha
\path (w) edge [->, bend left=25, dashed] node {} (x);
\path (x) edge [->, bend left=25, dashed] node {} (w);
\path (y) edge [->, bend left=25, dashed] node {} (z);
\path (z) edge [->, bend left=25, dashed] node {} (y);
%beta
\path (y) edge [->, bend right=35,dotted] node{} (w);
\path (w) edge [->, bend right=35,dotted] node{} (y);
\path (x) edge [->, bend right=35,dotted] node{} (z);
\path (z) edge [->, bend right=35,dotted] node{} (x);
\node[label,anchor=north,xshift=-1pt] at (w) {$w$};
\node[label,anchor=north,xshift=-1pt] at (x) {$x$};
\node[label,anchor=north,xshift=-1pt] at (y) {$y$};
\node[label,anchor=north,xshift=-1pt] at (z) {$z$};
\path (1.8,2.2) 
 node[matrix,anchor=east,draw,nodes={anchor=center},inner sep=2pt]  {
  \pic{sample=dashed}; & \node{$\alpha$}; \\
  \pic{sample=dotted}; & \node{$\beta$}; \\
  \pic{sample=dashdotdotted}; & \node{$E$ blocks}; \\
 };
 \node[box,fit=(w)(x)(y)(z)] {};
\node[label,anchor=north,xshift=43pt,yshift=-25pt] at (x) {$\mathbf{X}$};
\end{scope}
\end{tikzpicture}
    \caption{The poset used to represent $D^6_{3,5,2}$.}
    \label{fig:DqRARepresentation}
\end{figure}

\section{Contractions of quasi relation algebras}\label{sec:contractions}

%ANDREW TO DO SECTION 3: \\ 

In this section we present the construction of contractions of a qRA.  
Consider a qRA  $\mathbf{A}=\langle A, \wedge, \vee, \cdot, \sim, -,\neg, 1 \rangle$
%An element $ a\in A$ is said to be \emph{symmetric} if ${\sim}a=-a=\neg a$. 
and   a positive symmetric idempotent element $p$ of $\mathbf{A}$.
%and denote by $\PSI(\A)$ the set of all positive symmetric idempotent elements of $\mathbf{A}$.
Let   $pAp := \{\, pap \mid a \in A\}$. We show that $pAp$ is closed under all the operations of $\mathbf{A}$ except the monoid identity, which we redefine. 

\begin{lemma}\label{lem:TFAE-pAp}
Let $\A$ be a qRA and $p$ an idempotent element of $\A$. The following are equivalent for any $b \in A$:
\begin{enumerate}[label=\textup{(\roman*)}]
\item $b \in pAp$
\item $pbp=b$
\item $pb=b$ and $bp=b$.
\end{enumerate}
\end{lemma}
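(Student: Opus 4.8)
The plan is to show the three conditions are equivalent by proving the cycle (i) $\Rightarrow$ (ii) $\Rightarrow$ (iii) $\Rightarrow$ (i). Throughout, the only facts I would use are associativity of the monoid operation and idempotency of $p$, written $p^2 = p$ (equivalently $pp = p$). In particular, neither positivity nor symmetry of $p$ enters the argument, so the statement in fact holds for an arbitrary idempotent; this is why the hypothesis of the lemma asks only that $p$ be idempotent.

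For (i) $\Rightarrow$ (ii), I would suppose $b = pap$ for some $a \in A$. Then associativity together with $pp = p$ gives $pbp = p(pap)p = (pp)a(pp) = pap = b$, which is exactly (ii).

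For (ii) $\Rightarrow$ (iii), assume $pbp = b$. The idea is to substitute this expression for $b$ into the products $pb$ and $bp$ and then collapse the repeated factor of $p$ using idempotency: $pb = p(pbp) = (pp)bp = pbp = b$ and $bp = (pbp)p = pb(pp) = pbp = b$. This is the only step carrying any content, and I expect it to be the place where care is needed — the small trick is to replace $b$ by $pbp$ \emph{before} simplifying, rather than working directly from $pbp = b$.

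Finally, (iii) $\Rightarrow$ (i) is immediate: if $pb = b$ and $bp = b$, then $pbp = (pb)p = bp = b$, so $b$ has the form $pap$ (taking $a = b$) and therefore lies in $pAp$. Since the whole argument is a short manipulation inside the monoid reduct $\langle A, \cdot, 1\rangle$, there is no genuine obstacle; the mild subtlety is confined to the direction (ii) $\Rightarrow$ (iii) described above.
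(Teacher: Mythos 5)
Your proposal is correct and follows essentially the same route as the paper: the cycle (i) $\Rightarrow$ (ii) $\Rightarrow$ (iii) $\Rightarrow$ (i), using only associativity and $p^2=p$ at each step. The individual computations match the paper's almost verbatim, so there is nothing further to reconcile.
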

\begin{proof}
Assume (i). Then $b=pap$ for some $a \in A$. Since $p$ is idempotent we get $pbp=p(pap)p=pap=b$, showing (ii). If we assume (ii) then $pb=p(pbp)=pbp=b$ and $bp=(pbp)p=pbp=b$. Lastly, assume (iii). We have $b=pb=p(bp)=pbp$ so clearly $b \in pAp$. \qed 
\end{proof}

\newpage 
\begin{lemma}\label{lem:pAp-sublat}
Let $\A$ be a qRA and $p$ a positive symmetric idempotent.  Then $\langle pAp,\wedge,\vee\rangle$ forms a sublattice of $\langle A, \wedge, \vee\rangle$ and $\langle pAp,\cdot\rangle $ is a subsemigroup of $\langle A,\cdot\rangle$. 
\end{lemma}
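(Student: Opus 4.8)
The plan is to verify closure under each of the three operations $\cdot$, $\vee$ and $\wedge$ separately, in each case reducing membership in $pAp$ to the criterion of Lemma~\ref{lem:TFAE-pAp} (that $b\in pAp$ iff $pbp=b$ iff both $pb=b$ and $bp=b$). Throughout I would rely on two background facts about $\mathbf{A}$: that $\cdot$ is order-preserving in each argument and distributes over existing joins (standard for any residuated lattice), and that positivity of $p$, i.e. $1\leqslant p$, forces multiplication by $p$ to be \emph{inflationary}. Explicitly, from $1\leqslant p$ and order-preservation one gets $a\leqslant pa$ and $a\leqslant ap$, and hence $a\leqslant pap$, for every $a\in A$; this last inequality is the engine for the meet case.

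For the semigroup claim, given $x,y\in pAp$ I would use $px=x$ and $yp=y$ from Lemma~\ref{lem:TFAE-pAp}(iii) together with associativity of $\cdot$: then $p(xy)=(px)y=xy$ and $(xy)p=x(yp)=xy$, so $xy\in pAp$ by the same lemma. Associativity is inherited from $\mathbf{A}$, so $\langle pAp,\cdot\rangle$ is a subsemigroup. (It is only a subsemigroup and not a submonoid because $1\notin pAp$ unless $p=1$, since $p\cdot 1\cdot p=p^2=p$.) Closure under $\vee$ is then immediate from distributivity of $\cdot$ over $\vee$: for $x,y\in pAp$ I would compute $p(x\vee y)=px\vee py=x\vee y$ and likewise $(x\vee y)p=x\vee y$, so $x\vee y\in pAp$.

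The main obstacle is closure under $\wedge$, precisely because $\cdot$ need not distribute over $\wedge$, so the computation used for $\vee$ has no meet analogue. Here I would argue by a sandwich using positivity. On the one hand, the inflationary property gives $x\wedge y\leqslant p(x\wedge y)p$. On the other, order-preservation of $\cdot$ together with $x\wedge y\leqslant x$ and $x\wedge y\leqslant y$ yields $p(x\wedge y)p\leqslant pxp=x$ and $p(x\wedge y)p\leqslant pyp=y$, where $pxp=x$ and $pyp=y$ by Lemma~\ref{lem:TFAE-pAp}(ii); hence $p(x\wedge y)p\leqslant x\wedge y$. The two inequalities combine to $p(x\wedge y)p=x\wedge y$, so $x\wedge y\in pAp$. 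Since the meets and joins computed in $\mathbf{A}$ land back in $pAp$ and the operations on $pAp$ are the restrictions of those on $A$, this establishes that $\langle pAp,\wedge,\vee\rangle$ is a sublattice.
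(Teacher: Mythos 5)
Your proposal is correct and follows essentially the same route as the paper: joins via join-preservation of $\cdot$, products via Lemma~\ref{lem:TFAE-pAp}(iii) and associativity, and meets via the positivity sandwich (inflationarity of multiplication by $p$ for the lower bound, order-preservation for the upper bound). The only cosmetic difference is that you establish $p(x\wedge y)p = x\wedge y$ in a single two-sided step, whereas the paper first shows $p(c\wedge d)=c\wedge d$ and then $(c\wedge d)p = c\wedge d$ separately before invoking Lemma~\ref{lem:TFAE-pAp}(iii).
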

\begin{proof}
Let $c,d \in pAp$. Then $c=pcp$ and $d=pdp$ by Lemma~\ref{lem:TFAE-pAp}. The join-preservation of $\cdot$ gives us: $c \vee d = pcp \vee pdp = (pc \vee pd)p = p(c \vee d)p$, so $pAp$ is closed under joins. 

By Lemma~\ref{lem:TFAE-pAp}(iii) we have $pc=c$ and $pd=d$. Since $p$ is positive,  
$c \wedge d \leqslant p(c \wedge d)$. From $c \wedge d \leqslant c,d$ we get $p(c \wedge d) \leqslant pc,pd $ and hence $p(c \wedge d) \leqslant pc \wedge pd=c \wedge d$. This gives us $c \wedge d  = p(c \wedge d)$. One can similarly use the fact that $c=cp$ and $d=dp$ to prove that $c \wedge d = (c \wedge d)p$, giving $c \wedge d \in pAp$ by Lemma~\ref{lem:TFAE-pAp}. 

For the monoid operation, we get 
%ac1 We don't need some of this.
$cd=(pcp)(pdp)= p(cp^2d)p$,
%$cd=(pap)(pbp)= p(ap^2b)p=p(apb)p$,
and thus $cd \in pAp$, with the associativity of $\cdot$ simply inherited. \qed
\end{proof}

%ac2011
We emphasize that if $\langle A, \wedge,\vee\rangle$ is distributive, then so is $\langle pAp,\wedge,\vee \rangle$.   
For $p\neq 1$ the set $pAp$ will not form a submonoid of $\langle A, \cdot, 1\rangle$ since $1 \notin pAp$. Now we prove that $pAp$ is closed under the unary operations. 

\begin{lemma}\label{lem:pAp-clos-unary}
Let $\A$ be a qRA, $p$ a positive symmetric idempotent, and $b \in pAp$. Then
\begin{enumerate}[label=\textup{(\roman*)}]
\item ${\sim}b \in pAp$;
\item $-b \in pAp$;
\item $\neg b \in pAp$. 
\end{enumerate}
\end{lemma}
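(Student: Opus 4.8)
The plan is to reduce everything, via Lemma~\ref{lem:TFAE-pAp}, to showing that each of $\sim b$, $-b$ and $\neg b$ is fixed by both left- and right-multiplication by $p$. Throughout I write $q := \neg p$; since $p$ is symmetric we have $q = \sim p = -p$ as well, so $q = p\backslash 0 = 0/p$, and since $p$ is positive and the negations are order-reversing, $q \le 0$. Applying the negations to $q = \sim p = -p = \neg p$ and using \textsf{(In)} together with $\neg\neg a = a$ gives the companion identities $\sim q = -q = \neg q = p$.

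The key observation is that the three negations interchange $\cdot$ and $+$. From the two forms $a + b = \sim(-a\cdot -b) = -(\sim b\cdot \sim a)$ together with \textsf{(In)} one obtains $\sim(a+b) = \sim b\cdot \sim a$ and $-(a+b) = -a\cdot -b$, while \textsf{(Dp)} and $\neg\neg a = a$ give $\neg(a+b) = \neg a\cdot \neg b$. Hence, applying $\sim$, $-$ and $\neg$ to the single pair of additive identities $q + b = b$ and $b + q = b$ yields exactly the six multiplicative identities $p(\sim b) = \sim b$, $(\sim b)p = \sim b$, $p(-b) = -b$, $(-b)p = -b$, $p(\neg b) = \neg b$ and $(\neg b)p = \neg b$, each negation carrying $q$ to $p$. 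By Lemma~\ref{lem:TFAE-pAp} these six identities are precisely the assertions $\sim b, -b, \neg b \in pAp$, so it suffices to prove the two absorption laws $q + b = b$ and $b + q = b$.

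For these two laws I argue as follows. One inequality in each is immediate: $+$ is order-preserving with unit $0$, so $q \le 0$ gives $q + b \le 0 + b = b$ and $b + q \le b$. For the reverse inequalities I rewrite $q + b = \sim(p\cdot(-b))$ and $b + q = \sim((-b)\cdot p)$ (using $-q = p$) and apply residuation. Then $b \le b + q$ reduces, via $pb = b$, to the defining inequality $-b\cdot b \le 0$; and $b \le q + b$ reduces instead to $-b\cdot b \le p\backslash 0 = q$. This last, strengthened bound is the step I expect to be the main obstacle: the defining inequality only yields $-b\cdot b \le 0$, and the positivity of $p$ does not by itself bridge the gap down to $q$. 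The resolution is to use the \emph{other} residual description $q = 0/p$ together with $bp = b$, namely $-b\cdot b \le 0/p \Longleftrightarrow (-b\cdot b)\,p \le 0 \Longleftrightarrow -b\cdot(bp) = -b\cdot b \le 0$, which holds. With $-b\cdot b \le q$ established, both reverse inequalities follow, giving $q + b = b = b + q$ and hence, by the reduction above, all three parts of the lemma.
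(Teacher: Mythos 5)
Your proof is correct, but it takes a genuinely different route from the paper's. The paper argues directly on the multiplicative side: for ${\sim}b$ it gets ${\sim}b\leqslant p({\sim}b)p$ from positivity and then pushes $p({\sim}b)p\leqslant{\sim}b$ through a chain of applications of (\ref{Eqn:InFLprop}) and (\ref{Equation:EquivalentSignature}); it repeats this for $-b$, and handles $\neg b$ separately by two applications of \textsf{(Dp)} to reduce to the $-$ case. You instead work on the additive side: you prove the single two-sided absorption law $\neg p+b=b=b+\neg p$ (the easy half from $\neg p\leqslant 0$ and monotonicity of $+$, the harder half by residuation using $pb=b=bp$ and the two residual descriptions $q=p\backslash 0=0/p$ of the symmetric element $q=\neg p$), and then transport it through all three negations simultaneously, using the fact that each of ${\sim}$, $-$, $\neg$ converts $+$ into $\cdot$ and carries $q$ to $p$. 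I checked the details — the interchange identities ${\sim}(a+b)={\sim}b\cdot{\sim}a$, $-(a+b)=-a\cdot-b$, $\neg(a+b)=\neg a\cdot\neg b$ all follow from \textsf{(In)} and \textsf{(Dp)} as you say, and the residuation step $(-b\cdot b)\,p\leqslant 0\iff -b\cdot b\leqslant 0/p$ correctly closes the gap you flagged. Your approach buys uniformity: all six identities $p({\sim}b)={\sim}b=({\sim}b)p$, etc., drop out of one computation, and $\neg$ needs no special treatment; the paper's approach is more pedestrian but uses only the bare residuation laws and makes the role of positivity more visible. Both are of comparable length, and your version dovetails cleanly with Lemma~\ref{lem:TFAE-pAp}(iii).
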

\begin{proof}
%ac1 some words added 
To prove item (i), we will show $\tw b = p \cdot \tw b \cdot p$. 
Since $p$ is positive we obtain $\tw b \leqslant p(\tw b)p$. By Lemma~\ref{lem:TFAE-pAp} we have $b=pbp$ and hence $\tw b \leqslant \tw (pbp)$. By~\textsf{(In)}, 
 $p=-\tw p$, so we get 
$\tw b \leqslant \tw (-(\tw p)\cdot (bp))$, and then applying  
(\ref{Equation:EquivalentSignature}) gives us 
$(bp)({\sim}b) \leqslant {\sim}p$. Since $p$ is symmetric, $(bp)({\sim}b) \leqslant -p$, and since $p$ is positive, $b(p\cdot \tw b)p\leqslant -p\cdot p$. By (\ref{Eqn:InFLprop}) we get $-p\cdot p\leqslant -1$, as $p\leqslant p$. 
Hence 
$b(p\cdot \tw b \cdot p) \leqslant -1$ and again by
%ac1 adding brackets 
(\ref{Equation:EquivalentSignature})
%~\ref{Equation:EquivalentSignature} 
we get  $p\cdot \tw b\cdot p \leqslant \tw ({--}1\cdot b)=\tw b$. 

Similarly, one can use
(\ref{Equation:EquivalentSignature})
and (\ref{Eqn:InFLprop})
to show that $-b=p(-b)p$. 

For (iii), we use 
%ac1 correcting font 
(\textsf{Dp})
%(Dp) 
to get $\neg b = \neg (pbp)=\neg (pb) + \neg p = -({\sim}\neg p \cdot {\sim}\neg (pb))$. Further, 
%ac1 correcting font 
(\textsf{Dp})
%(Dp) 
gives $\neg(pb)= \neg p+\neg b= -(\tw \neg b \cdot \tw \neg p)$. Hence $\neg b = \neg (pbp)=-(\tw\neg p \cdot \tw(-(\tw \neg b \cdot \tw \neg p)))=-(\tw\neg p \cdot \tw\neg b \cdot \tw\neg p)=-(p(\tw\neg b)p)$, by \textsf{(In)} since $p$ is symmetric.  By~(ii), this is an element of $pAp$. \qed    
\end{proof}

\begin{comment}

\begin{lemma}
If $p\cdot a \cdot p =a$ for all $a \in A$, then $p \cdot ({\sim}a)p={\sim}a$.     
\end{lemma}
\begin{proof}

Since $pap=a$, we get ${\sim}a\leqslant {\sim}(pap)$ and, by \textsf{(In)},${\sim}a\leqslant {\sim}(-{\sim}(pap))$. Now by~\cite[Lemma 2.2]{GJ13}, $(ap)({\sim}a) \leqslant {\sim}p$. Since $p$ is symmetric, $(ap)\cdot({\sim}a) \leqslant -p$. Hence, 
$a(p({\sim}a)) \leqslant -(p\cdot 1)$. By the order-preservation of $\cdot$ in both coordinates, we get 
$$a (p({\sim}a))p \leqslant (-(p\cdot 1))\cdot p.$$
Now $(-(p\cdot 1))\cdot p \leqslant -1$ and $-1={\sim}1$, so $a \cdot (p\cdot ({\sim}a))\cdot p \leqslant {\sim}1$. Therefore, applying .... again  
$$p \cdot ({\sim}a)\cdot p \leqslant {\sim}((-{\sim}1)a)= {\sim}(1\cdot a)={\sim} a.$$
\end{proof}

\begin{lemma}
If $p a  p=a$ for all $a \in A$, then $p\cdot (-a)\cdot p = -a$ for all $a \in A$.   
\end{lemma}
\begin{proof}

\end{proof}
\end{comment}
\begin{comment}
\begin{definition}
Let $\mathbf{A}$ be a qRA and  $p \in \PSI(\A)$. The algebra $p\mathbf{A}p$ is 
$$\langle pAp, \wedge, \vee, \cdot, {\sim},-,\neg \rangle. $$
\end{definition}
\end{comment}

\begin{theorem}\label{Theorem:RelativizationDqRA}
Let $\mathbf{A} = \langle A,\wedge, \vee, \cdot, {\sim},-,\neg, 1\rangle $ be a quasi relation algebra and $p$ a positive symmetric idempotent. Then $p\mathbf{A}p=\langle pAp, \wedge,\vee, \cdot, \tw, -,\neg,p \rangle $ is a quasi relation algebra.   
\end{theorem}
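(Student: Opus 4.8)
The plan is to invoke the term-equivalence of \cite[Lemma 2.2]{GJ13}, which reduces the verification that $p\A p$ is an InFL-algebra to three conditions: that $\langle pAp,\wedge,\vee\rangle$ is a lattice, that $\langle pAp,\cdot,p\rangle$ is a monoid, and that the biconditional~(\ref{Equation:EquivalentSignature}) holds for all elements of $pAp$ with the restricted operations $\tw$ and $-$. The closure results of Lemmas~\ref{lem:pAp-sublat} and~\ref{lem:pAp-clos-unary} already guarantee that $pAp$ is closed under $\wedge,\vee,\cdot,\tw,-$ and $\neg$, so the bulk of the structural work is done; what remains is to see that the qRA axioms transfer from $\A$ to $p\A p$ once the monoid identity is changed from $1$ to $p$.

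First I would record that $\langle pAp,\wedge,\vee\rangle$ is a (distributive) lattice and $\langle pAp,\cdot\rangle$ a subsemigroup by Lemma~\ref{lem:pAp-sublat}, that $p\in pAp$ since $ppp=p$, and that $p$ is a two-sided identity for $\cdot$ on $pAp$ by Lemma~\ref{lem:TFAE-pAp}(iii); hence $\langle pAp,\cdot,p\rangle$ is a monoid. The key observation is that~(\ref{Equation:EquivalentSignature}) is \emph{identity-free}: the element $1$ does not occur in it. Since~(\ref{Equation:EquivalentSignature}) holds for all $a,b,c\in A$, it holds in particular when $a,b,c\in pAp$; and for such $a,b,c$ the witnessing terms $-(b\cdot{\tw}c)$ and ${\tw}(-c\cdot a)$ again lie in $pAp$ by the closure lemmas, while the order on $pAp$ is inherited from $\A$. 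Thus~(\ref{Equation:EquivalentSignature}) holds verbatim in $p\A p$. By \cite[Lemma 2.2]{GJ13} this makes $p\A p$ an InFL-algebra whose monoid identity is $p$, whose constant is $0_p:={\tw}p=-p=\neg p$ (well defined because $p$ is symmetric), and whose linear negations are exactly the restrictions of $\tw$ and $-$.

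It then remains to upgrade this InFL-algebra to a qRA. Involutivity of $\neg$ (i.e.\ $\neg\neg b=b$), the De Morgan law \textsf{(Dm)}, and the identity \textsf{(In)} are all universally quantified equations holding in $\A$, and since $pAp$ is closed under the operations involved they descend to $p\A p$. For the De Morgan product rule \textsf{(Dp)} I would first note that the dual operation of $p\A p$ agrees with that of $\A$ on $pAp$: for $a,b\in pAp$ one has $a+b={\tw}(-a\cdot -b)\in pAp$, so the restricted $+$ is the dual of the restricted $\cdot$. Consequently $\neg(bc)=\neg b+\neg c$ holds in $p\A p$ because it holds in $\A$, establishing \textsf{(Dp)} and completing the proof that $p\A p$ is a qRA.

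The only point demanding care is the simultaneous change of monoid identity and of InFL constant from $(1,0)$ to $(p,{\tw}p)$. This is the step I expect a careful reader to scrutinise, but it is harmless precisely because~(\ref{Equation:EquivalentSignature}) never mentions the identity, so it cannot be disturbed by replacing $1$ with $p$, and because the symmetry of $p$ forces ${\tw}p=-p$, keeping the new constant $0_p$ well defined. Everything else is inheritance of universally quantified axioms, made legitimate by the three closure lemmas.
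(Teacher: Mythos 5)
Your proposal is correct and follows essentially the same route as the paper: after the closure lemmas (Lemmas~\ref{lem:pAp-sublat} and~\ref{lem:pAp-clos-unary}), the paper's proof likewise reduces everything to checking that $p$ is the monoid identity on $pAp$ and then observes that all remaining qRA axioms are inherited because they only involve operations unchanged on $pAp$. Your version merely makes explicit what the paper leaves implicit, namely that the residuation biconditional~(\ref{Equation:EquivalentSignature}) and the axioms \textsf{(In)}, \textsf{(Dm)}, \textsf{(Dp)} never mention the constant $1$, so replacing the identity by $p$ is harmless.
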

\begin{proof} 
After Lemmas~\ref{lem:pAp-sublat} and~\ref{lem:pAp-clos-unary} we only need to check that $p$ is the identity of $pAp$ with respect to the monoid operation. By Lemma~\ref{lem:TFAE-pAp}(ii), for $b \in pAp$ we get $bp=(pbp)p=pbp=b=p^2bp=pb$. 

%ac1 this bit of the sentence didn't really fit
%Besides the fact the above equation showing that the monoid operation has an identity in $pAp$, all 
%
All other qRA equations 
%ac1 added
(except for the monoid identity equation)
involve operations that remain unchanged on elements of $pAp$, so the equations are satisfied by all elements of $pAp$. \qed
\end{proof}

\begin{comment}
CHECK Ch7 of Simple Relation Algebras FOR METHOD OF PROOF AND ALSO COROLLARIES (TO DO WITH SIMPLE ETC. - MAYBE ALSO PRODUCTS - TALK ABOUT THE ORDERED
PAIR (p,q) being PSI WHEN p AND q ARE. \\
\end{comment}

\begin{comment}
WHEN DOES THE TOP ELEMENT GIVE THE 2 ELEMENT CHAIN.
SEE PHOTOS IN CHAT (FROM CLAUDETTE'S OFFICE).
\end{comment}
%\includegraphics[scale=0.12,angle=270]{pApDqRA1.jpg}

%\includegraphics[scale=0.12,angle=270]{pApDqRA2.jpg}

\begin{example}\label{Example:ConstructionExample}
Recall from Example~\ref{Example:DqRAExample}
that $D^6_{3,5,2}$ depicted in Figure~\ref{fig:DqRAExample} has four
positive symmetric idempotents: $1$, $\top$, $a$
and $b$. From Theorem~\ref{Theorem:RelativizationDqRA} it
follows that each $p\mathbf{A}p$, depicted
in Figure~\ref{fig:ConstructionExample}, 
is a qRA for $\mathbf{A}=D^6_{3,5,2}$ and $p\in\{1,\top,a,b\}$.  Note that for $p=1$ it
will always be the case that 
%ac1 Do you agree this is more correct?
$p\mathbf{A}p =\mathbf{A}$.
%$p\mathbf{A}p$ is isomorphic to $\mathbf{A}$.
\end{example}

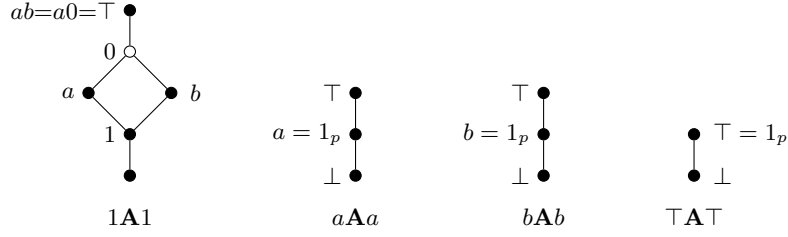
\begin{figure}
    \centering
        \begin{tikzpicture}
           \begin{scope}[xshift=-5cm,scale=0.55]
    % Elements
    \node[draw,circle,inner sep=1.5pt,fill] (bot) at (0,0) {};
    \node[draw,circle,inner sep=1.5pt,fill] (1) at (0,1) {};
    \node[draw,circle,inner sep=1.5pt,fill] (a) at (-1,2) {};
    \node[draw,circle,inner sep=1.5pt, fill] (b) at (1,2) {};
     \node[draw,circle,inner sep=1.5pt] (0) at (0,3) {};
    \node[draw,circle,inner sep=1.5pt,fill] (top) at (0,4) {};
    % Order
    \path [-] (bot) edge node {} (1);
    \path [-] (1) edge node {} (a);
    \path [-] (1) edge node {} (b);
    \path [-] (a) edge node {} (0);
    \path [-] (b) edge node {} (0);
    \path [-] (0) edge node {} (top);
    % Labels
    \node[label,anchor=east,xshift=1pt] at (top) {$ab{=}a0{=}\top$};
    \node[label,anchor=east,xshift=1pt] at (0) {$0$};
    \node[label,anchor=east,xshift=1pt] at (a) {$a$};
    \node[label,anchor=west,xshift=1pt] at (b) {$b$};
    \node[label,anchor=east,xshift=1pt] at (1) {$1$};
    \node[] at (0,-1) {$1\mathbf{A}1$};
    \end{scope} 
    \begin{scope}[xshift=-2cm,scale=0.55]
    % Elements
    \node[draw,circle,inner sep=1.5pt,fill] (bot) at (0,0) {};
    \node[draw,circle,inner sep=1.5pt, fill] (a) at (0,1) {};
    \node[draw,circle,inner sep=1.5pt,fill] (top) at (0,2) {};
    % Order
    \path [-] (bot) edge node {} (a);
    \path [-] (a) edge node {} (top);
    % Labels
    \node[label,anchor=east,xshift=1pt] at (top) {$\top$};
    \node[label,anchor=east,xshift=1pt] at (a) {$a=1_p$};
    \node[label,anchor=east,xshift=1pt] at (bot) {$\bot$};    
    \node[] at (0,-1) {$a\mathbf{A}a$};
    \end{scope}
    \begin{scope}[xshift=0.5cm,scale=0.55]
    % Elements
    \node[draw,circle,inner sep=1.5pt,fill] (bot) at (0,0) {};
    \node[draw,circle,inner sep=1.5pt, fill] (a) at (0,1) {};
    \node[draw,circle,inner sep=1.5pt,fill] (top) at (0,2) {};
    % Order
    \path [-] (bot) edge node {} (a);
    \path [-] (a) edge node {} (top);
    % Labels
    \node[label,anchor=east,xshift=1pt] at (top) {$\top$};
    \node[label,anchor=east,xshift=1pt] at (a) {$b=1_p$};
    \node[label,anchor=east,xshift=1pt] at (bot) {$\bot$};    
    \node[] at (0,-1) {$b\mathbf{A}b$};
    \end{scope}
    \begin{scope}[xshift=2.5cm,scale=0.55]
    % Elements
    \node[draw,circle,inner sep=1.5pt,fill] (bot) at (0,0) {};
    \node[draw,circle,inner sep=1.5pt,fill] (top) at (0,1) {};
    % Order
    \path [-] (bot) edge node {} (top);
    % Labels
    \node[label,anchor=west,xshift=1pt] at (top) {$\top=1_p$};
    \node[label,anchor=west,xshift=1pt] at (bot) {$\bot$};    
    \node[] at (0,-1) {$\top \mathbf{A}\top$};
    \end{scope}
    \end{tikzpicture} 
\caption{The 
algebras $p\mathbf{A}p$
%$p\mathbf{A}p$'s 
    for $\mathbf{A}=D^6_{3,5,2}$ and $p\in\{1,\top,a,b\}$. }
\label{fig:ConstructionExample}
\end{figure}
%ac1 I get the use of 1_p but I don't think we have defined it? 

\section{Representability of contractions of DqRAs}
%\section{$p\mathbf{A}p$ is representable} 
\label{sec: pAp_representable}

In this section, we prove that if $\mathbf{A}$ is a representable distributive quasi relation algebra and $p$ is a positive symmetric  idempotent of $\mathbf{A}$, then its contraction $p\mathbf{A}p$ is also representable.

Let $\mathbf{A} = \langle A,\wedge, \vee, \cdot, {\sim},{-},{\neg}, 1\rangle$ be a representable DqRA. Then 
there exists a poset $\mathbf X_\mathbf{A} = \left(X_\mathbf{A}, \leqslant_{X_\mathbf{A}}\right)$,
an equivalence relation $E_\mathbf{A}\subseteq X_\mathbf{A}\times X_\mathbf{A}$ with
${\leqslant_{X_\mathbf{A}}}\subseteq E_\mathbf{A}$, an
order automorphism $\alpha_\mathbf{A}:X_\mathbf{A}\to X_\mathbf{A}$
and a self-inverse dual order automorphism $\beta_\mathbf{A}:X_\mathbf{A}\to X_\mathbf{A}$ such that 
$\alpha_\mathbf{A},\beta_\mathbf{A}\subseteq E_\mathbf{A}$ and
$\alpha_\mathbf{A}\mathbin{;} \beta_\mathbf{A} \mathbin{;} \alpha_\mathbf{A} = \beta_\mathbf{A}$.  Moreover, there exists an embedding
$\varphi: \mathbf{A} \hookrightarrow \mathbf{Dq}\left(E_\mathbf{A}, \preccurlyeq_{E_\mathbf{A}}\right)$.

It is easy to check that if $p$ is a positive symmetric idempotent of $\mathbf{A}$, then 
$\varphi(p)$ is a preorder on $X_\mathbf{A}$ such that ${\leqslant_{X_\mathbf{A}}} \subseteq \varphi(p)$. Now define ${\equiv} \subseteq X_\mathbf{A} \times X_\mathbf{A}$ by 
$$x \equiv y \quad \text{ iff } \quad (x,y) \in \varphi(p) \text{ and } (y,x) \in \varphi(p).$$
Then $\equiv$ is an equivalence relation on $X_\mathbf{A}$. For each $x \in X_\mathbf{A}$, we will denote the equivalence class of $x$ with respect to the equivalence relation $\equiv$ by $[x]$. Now let ${X_\mathbf{A}/{\equiv}} = \{[x] \mid x \in X_\mathbf{A}\}$. Define  ${\leqslant_{X_\mathbf{A}/{\equiv}}} \subseteq X_{\mathbf{A}}/{\equiv} \times X_{\mathbf{A}}/{\equiv}$ by
$$[x]\leqslant_{X_\mathbf{A}/\equiv} [y] \quad \text{ iff } \quad (x, y) \in \varphi(p).
$$
It is straightforward to check that this is a well-defined partial order on $X_\mathbf{A}/{\equiv}$.

Next define $E_{p\mathbf{A}p} \subseteq {X_\mathbf{A}/{\equiv}} \times {X_\mathbf{A}/{\equiv}}$ by
\[
([x], [y]) \in E_{p\mathbf{A}p} \quad \textnormal{ iff } \quad (x, y) \in E_\mathbf{A}.
\]
Using the fact that $E_\mathbf{A}$ is an equivalence relation on $X_\mathbf{A}$, we can show that $E_{p\mathbf{A}p}$ is an equivalence relation on $X_\mathbf{A}/{\equiv}$. Moreover, since ${\leqslant_{X_\mathbf{A}}} \subseteq E_\mathbf{A}$, it follows that ${\leqslant_{X_\mathbf{A}/{\equiv}}}\subseteq E_{p\mathbf{A}p}$. 

Now define $\alpha_{p\mathbf{A}p}: {X_\mathbf{A}/{\equiv}} \to {X_\mathbf{A}/{\equiv}}$ by setting, for all $[x]\in X_\mathbf{A}/{\equiv}$, $$\alpha_{p\mathbf{A}p}([x]) = [\alpha_\mathbf{A}(x)].$$
Likewise, define $\beta_{p\mathbf{A}p}: {X_\mathbf{A}/{\equiv}} \to {X_\mathbf{A}/{\equiv}}$ by setting, for all $[x] \in X_\mathbf{A}/{\equiv}$, $$\beta_{p\mathbf{A}p}([x]) = [\beta_\mathbf{A}(x)].$$

\begin{example}\label{Example:Consturction1}
Recall that the DqRA $\mathbf{A}=D^6_{3,5,2}$ from Example~\ref{Example:DqRAExample}, depicted in
    Figure~\ref{fig:DqRAExample}, has four
    positive symmetric idempotent, namely $1,a,b$ and $\top$. Also recall from Example~\ref{Example:DqRARepresentation} that $\mathbf{A}$ is representable using the poset $X$, depicted in Figure~\ref{fig:DqRARepresentation}.
%ac1 full stop added 
    Using the construction described above,
    we can obtain an equivalence relation ${\equiv}_p$,
    a quotient structure $X/{\equiv}_p$, an equivalence
    relation  $E_{p\mathbf{A}p}$ and the maps 
    $\alpha_{p\mathbf{A}p}$ and $\beta_{p\mathbf{A}p}$
    for each $p\in \{1,a,b,\top\}$.  In this example we have that ${\equiv}_p=\varphi(p)$ and the
    posets obtained are depicted in Figure~\ref{fig:Construction1}.
\end{example}
\begin{figure}[h]
    \centering
    \begin{tikzpicture}[scale=1.5,pics/sample/.style={code={\draw[#1] (0,0) --(0.6,0) ;}},
Dotted/.style={% https://tex.stackexchange.com/a/52856/194703
dash pattern=on 0.1\pgflinewidth off #1\pgflinewidth,line cap=round,
shorten >=#1\pgflinewidth/2,shorten <=#1\pgflinewidth/2},
Dotted/.default=3]
\begin{scope}[xshift=-4cm, box/.style = {draw,dashdotdotted,inner sep=38pt,rounded corners=5pt,thick}]
%xshift=-1.25cm,yshift=-2.5cm]
%Elements
\node[draw,circle,inner sep=1.5pt] (w) at (-2,0.7) {};
\node[draw,circle,inner sep=1.5pt] (x) at (0,0.7) {};
\node[draw,circle,inner sep=1.5pt] (y) at (2,0.7) {};
\node[draw,circle,inner sep=1.5pt] (z) at (4,0.7) {};
%alpha
\path (w) edge [->, bend left=25, dashed] node {} (x);
\path (x) edge [->, bend left=25, dashed] node {} (w);
\path (y) edge [->, bend left=25, dashed] node {} (z);
\path (z) edge [->, bend left=25, dashed] node {} (y);
%beta
\path (w) edge [->, bend right=35,dotted] node{} (y);
\path (y) edge [->, bend right=35,dotted] node{} (w);
\path (x) edge [->, bend right=35,dotted] node{} (z);
\path (z) edge [->, bend right=35,dotted] node{} (x);
\node[label,anchor=north,xshift=-1pt] at (w) {$[w]$};
\node[label,anchor=north,xshift=-1pt] at (x) {$[x]$};
\node[label,anchor=north,xshift=-1pt] at (y) {$[y]$};
\node[label,anchor=north,xshift=-1pt] at (z) {$[z]$};
\path (3.5,3) 
 node[matrix,anchor=east,draw,nodes={anchor=center},inner sep=2pt]  { \pic{sample=dashed}; & \node{$\alpha_{p\mathbf{A}p}$}; \\
  \pic{sample=dotted}; & \node{$\beta_{p\mathbf{A}p}$}; \\
  \pic{sample=dashdotdotted}; & \node{$E_{p\mathbf{A}p}$ blocks}; \\
 };
 \node[box,fit=(w)(x)(y)(z)] {};
\node[label,anchor=north,xshift=43pt,yshift=-25pt] at (x) {$\mathbf{X}/{\equiv}_1$};
\end{scope}
\begin{scope}[xshift=-6cm, yshift=2.2cm, box/.style = {draw,dashdotdotted,inner sep=38pt,rounded corners=5pt,thick}]
%xshift=-1.25cm,yshift=-2.5cm]
%Elements
\node[draw,circle,inner sep=1.5pt] (e) at (1,0.7) {};
%alpha
\draw [->,dashed] (e) edge[loop right,looseness=60]node{} (e);
%beta
\draw [->,dotted] (e) edge[loop left,looseness=60]node{} (e);
\node[label,anchor=north,xshift=0pt,yshift=-20pt] at (e) {$\mathbf{X}/{\equiv}_\top$};
\node[label,anchor=north,xshift=-1pt] at (e) {$[w]$};
 \node[box,fit=(e)] {};
\end{scope}

\begin{scope}[xshift=-2cm,yshift=-2.2cm, box/.style = {draw,dashdotdotted,inner sep=38pt,rounded corners=5pt,thick}]
%xshift=-1.25cm,yshift=-2.5cm]
%Elements
\node[draw,circle,inner sep=1.5pt] (a) at (-4,0.7) {};
\node[draw,circle,inner sep=1.5pt] (b) at (-2,0.7) {};
%alpha
\path (a) edge [->, bend left=25, dashed] node {} (b);
\path (b) edge [->, bend left=25, dashed] node {} (a);
%beta
\draw [->,dotted] (a) edge[loop above,looseness=60]node{} (a);
\draw [->,dotted] (b) edge[loop above,looseness=60]node{} (b);
\node[label,anchor=north,xshift=35pt,yshift=-20pt] at (a) {$\mathbf{X}/{\equiv}_a$};
\node[label,anchor=north,xshift=-7pt] at (a) {$[w]=[y]\;$};
\node[label,anchor=north,xshift=7pt] at (b) {$\;[x]=[z]$};
 \node[box,fit=(a)(b)] {};
\end{scope}
\begin{scope}[xshift=-2.7cm, yshift=-2.2cm, box/.style = {draw,dashdotdotted,inner sep=38pt,rounded corners=5pt,thick}]
%xshift=-1.25cm,yshift=-2.5cm]
%Elements
\node[draw,circle,inner sep=1.5pt] (c) at (1,0.7) {};
\node[draw,circle,inner sep=1.5pt] (d) at (3,0.7) {};
%alpha
\path (c) edge [->, bend left=25, dashed] node {} (d);
\path (d) edge [->, bend left=25, dashed] node {} (c);
%beta
\path (c) edge [->, bend right=45, dotted] node {} (d);
\path (d) edge [->, bend right=45, dotted] node {} (c);
\node[label,anchor=north,xshift=35pt,yshift=-20pt] at (c) {$\mathbf{X}/{\equiv}_b$};
\node[label,anchor=north,xshift=-10pt] at (c) {$[w]=[z]\quad$};
\node[label,anchor=north,xshift=10pt] at (d) {$\quad[x]=[y]$};
 \node[box,fit=(c)(d)] {};
\end{scope}
\end{tikzpicture}
    \caption{The posets $\mathbf X/{\equiv}_p$ for $p\in\{1,a,b,\top\}$}
    \label{fig:Construction1}
\end{figure}
%ac1 Do you think we can remove X_{\equiv_1} from the above figure? 

The next two lemmas show that the preorder $\varphi(p)$ is invariant with respect to the maps $\alpha_\mathbf{A}$ and $\beta_\mathbf{A}$. 
%The next lemma says that $\alpha_\mathbf{A}$ is an order automorphism of the preordered set $(X_\mathbf{A},\varphi(p))$.
%ac1 I don't think it's right to say order automorphism. I will think about correct terminology. 

\begin{lemma}\label{lem:alpha_order_aut_varphi(p)}
For all $x, y \in X_\mathbf{A}$, we have $(x,y) \in \varphi(p)$ iff $(\alpha_\mathbf{A}(x), \alpha_{\mathbf{A}}(y)) \in \varphi(p)$.
\end{lemma}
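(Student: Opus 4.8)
The plan is to recast the pointwise ``iff'' as a single relational identity and then derive that identity from the symmetry of $p$. Write $P := \varphi(p)$ and $\alpha := \alpha_\mathbf{A}$, with all complements taken relative to $E_\mathbf{A}$. Since $\alpha$ is a bijective function, one checks directly that $(x,y) \in \alpha \mathbin{;} P \mathbin{;} \alpha^\smile$ holds iff $(\alpha_\mathbf{A}(x),\alpha_\mathbf{A}(y)) \in P$. Hence the lemma is equivalent to the equation $\alpha \mathbin{;} P \mathbin{;} \alpha^\smile = P$, and, using $\alpha^\smile \mathbin{;} \alpha = \alpha \mathbin{;} \alpha^\smile = \mathrm{id}_{X_\mathbf{A}}$, this in turn is equivalent to the commutativity identity $\alpha \mathbin{;} P = P \mathbin{;} \alpha$. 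So it suffices to show that $P$ commutes with $\alpha$ under composition.

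To obtain this, I would exploit that $p$ is symmetric, so in particular ${\sim}p = {-}p$. Applying the embedding $\varphi$ and the formulas ${\sim}R = R^{c\smile} \mathbin{;} \alpha$ and ${-}R = \alpha \mathbin{;} R^{c\smile}$ from Theorem~\ref{Theorem:Dq(E)}, this yields $P^{c\smile} \mathbin{;} \alpha = \alpha \mathbin{;} P^{c\smile}$. From here the strategy is to strip off the converse and the complement one at a time. Taking converses of both sides and using $(R \mathbin{;} S)^\smile = S^\smile \mathbin{;} R^\smile$ together with $(R^\smile)^\smile = R$ gives $\alpha^\smile \mathbin{;} P^c = P^c \mathbin{;} \alpha^\smile$.

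Next I would remove the complement. As $\alpha$ is a bijection, so is $\alpha^\smile$, and $\alpha^\smile \subseteq E_\mathbf{A}$, so Lemma~\ref{Lemma:ComplementComposition} applies with $\gamma = \alpha^\smile$. Taking the complement of both sides of $\alpha^\smile \mathbin{;} P^c = P^c \mathbin{;} \alpha^\smile$ and invoking $(\alpha^\smile \mathbin{;} R)^c = \alpha^\smile \mathbin{;} R^c$, $(R \mathbin{;} \alpha^\smile)^c = R^c \mathbin{;} \alpha^\smile$, and $(P^c)^c = P$ (valid because $P = \varphi(p)$ is an up-set of $(E_\mathbf{A}, \preccurlyeq_{E_\mathbf{A}})$, hence $P \subseteq E_\mathbf{A}$), I get $\alpha^\smile \mathbin{;} P = P \mathbin{;} \alpha^\smile$. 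Composing this on both sides with $\alpha$ and simplifying with $\alpha \mathbin{;} \alpha^\smile = \alpha^\smile \mathbin{;} \alpha = \mathrm{id}_{X_\mathbf{A}}$ yields $\alpha \mathbin{;} P = P \mathbin{;} \alpha$, which is exactly the commutativity identity isolated in the first step.

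The main obstacle I anticipate is the careful bookkeeping of the converse/complement manipulations rather than any deep idea: each application of Lemma~\ref{Lemma:ComplementComposition} must be justified by verifying that $\gamma = \alpha^\smile$ meets its hypotheses, and the step $(P^c)^c = P$ relies crucially on $P \subseteq E_\mathbf{A}$. The algebraic content is light; the delicacy lies in not dropping or duplicating a $\smile$ or a $c$ along the way.
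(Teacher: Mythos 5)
Your proposal is correct, and every step checks out: the reduction of the pointwise statement to $\alpha\mathbin{;}P\mathbin{;}\alpha^{\smile}=P$ is valid because $\alpha_\mathbf{A}$ is a bijective function, the converse-reversal identity applies since all relations involved are contained in $E_\mathbf{A}$, Lemma~\ref{Lemma:ComplementComposition} is correctly applicable with $\gamma=\alpha^{\smile}$, and $(P^c)^c=P$ holds because $\varphi(p)\in\mathsf{Up}(\mathbf{E}_\mathbf{A})$ forces $P\subseteq E_\mathbf{A}$. However, your route is genuinely different in style from the paper's. The paper proves the lemma pointwise, in two directions, each by contradiction: assuming for instance $(x,y)\in\varphi(p)$ and $(\alpha_\mathbf{A}(x),\alpha_\mathbf{A}(y))\notin\varphi(p)$, it chases elements through $\varphi(p)^c$, $\varphi(p)^{c\smile}$ and the compositions defining ${-}\varphi(p)$ and ${\sim}\varphi(p)$ until it lands back at $(x,y)\notin\varphi(p)$. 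Both arguments pivot on exactly the same key fact, namely that symmetry of $p$ together with $\varphi$ preserving the linear negations gives $\varphi(p)^{c\smile}\mathbin{;}\alpha_\mathbf{A}={\sim}\varphi(p)={-}\varphi(p)=\alpha_\mathbf{A}\mathbin{;}\varphi(p)^{c\smile}$; the difference is that you extract the commutation identity $\alpha_\mathbf{A}\mathbin{;}\varphi(p)=\varphi(p)\mathbin{;}\alpha_\mathbf{A}$ once, by a point-free calculation with converses and complements, and obtain both directions of the ``iff'' simultaneously, whereas the paper never isolates that identity and instead repeats a membership-chasing argument for each implication. Your version is arguably cleaner and makes the role of Lemma~\ref{Lemma:ComplementComposition} explicit (the paper's proof of this lemma does not invoke it), at the cost of requiring the reader to trust the equivalence between the relational equation and the pointwise statement; the paper's version is more elementary and self-contained but longer and less transparent about where the symmetry of $p$ is actually doing the work.
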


\begin{proof}
Assume $(x,y) \in \varphi(p)$. Suppose for the sake of a contradiction that $(\alpha_\mathbf{A}(x), \alpha_{\mathbf{A}}(y)) \notin \varphi(p)$. Since $(x,y) \in \varphi(p)$ and $\varphi(p)\subseteq E_\mathbf{A}$, we have $(x,y) \in E_\mathbf{A}$, and so $(\alpha_\mathbf{A}(x), \alpha_{\mathbf{A}}(y)) \in E_\mathbf{A}$ by Lemma~\ref{lem:alpha_E_alpha}. Hence,  $(\alpha_\mathbf{A}(x), \alpha_{\mathbf{A}}(y)) \in \varphi(p)^c$, and thus  $(\alpha_\mathbf{A}(y), \alpha_{\mathbf{A}}(x)) \in \varphi(p)^{c\smile}$. Now $(y, \alpha_\mathbf{A}(y)) \in \alpha_\mathbf{A}$, so it follows that $(y, \alpha_\mathbf{A}(x)) \in \alpha_\mathbf{A}\mathbin{;} \varphi(p)^{c\smile} = -\varphi(p)$. Since $p$ is symmetric and $\varphi$ preserves the linear negations, we get $-\varphi(p) = \varphi(-p) = \varphi({\sim}p) = {\sim}\varphi(p)$, and consequently $(y, \alpha_\mathbf{A}(x)) \in {\sim}\varphi(p) = \varphi(p)^{c\smile}\mathbin{;} \alpha_\mathbf{A}$. This shows $(y, \alpha^{-1}_\mathbf{A}(\alpha_\mathbf{A}(x))) \in \varphi(p)^{c\smile}$, which means $(x, y) \notin \varphi(p)$, a contradiction. 

For the converse implication, assume $(\alpha_\mathbf{A}(x), \alpha_{\mathbf{A}}(y)) \in \varphi(p)$ and suppose $(x, y) \notin  \varphi(p)$. The first part implies $(x, y) \in E_\mathbf{A}$, and therefore, since we have $(x, y) \notin  \varphi(p)$, it follows that $(y, x) \in \varphi(p)^{c\smile}$. Now $(x, \alpha_\mathbf{A}(x)) \in \alpha_\mathbf{A}$, and so $(y, \alpha_\mathbf{A}(x)) \in \varphi(p)^{c\smile}\mathbin{;}\alpha_\mathbf{A} = {\sim}\varphi(p)$. But $-\varphi(p) = {\sim}\varphi(p)$, so $(y, \alpha_\mathbf{A}(x)) \in -\varphi(p)$. This gives $(\alpha_\mathbf{A}(y), \alpha_{\mathbf{A}}(x)) \in \varphi(p)^{c\smile}$, which means $(\alpha_\mathbf{A}(x), \alpha_{\mathbf{A}}(y)) \notin \varphi(p)$, a contradiction.
\qed
\end{proof}

%Similarly, $\beta_\mathbf{A}$ is a dual order automorphism of the preordered set  $(X_\mathbf{A},\varphi(p))$.
%ac1 again unsure about terminology

\begin{lemma}\label{lem:beta_dual_order_aut_varphi(p)}
For all $x, y \in X_\mathbf{A}$, we have $(x,y) \in \varphi(p)$ iff $(\beta_\mathbf{A}(y), \beta_{\mathbf{A}}(x)) \in \varphi(p)$.
\end{lemma}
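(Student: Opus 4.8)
The plan is to recast the biconditional as a single relational identity and then verify that identity by a short computation in the algebra of binary relations, rather than chasing elements. Writing $R := \varphi(p)$, note that $\beta_\mathbf{A}$ is a self-inverse bijective function, so $(x,y) \in \beta_\mathbf{A}\mathbin{;}R^{\smile}\mathbin{;}\beta_\mathbf{A}$ holds iff $(\beta_\mathbf{A}(x),\beta_\mathbf{A}(y))\in R^{\smile}$, i.e. iff $(\beta_\mathbf{A}(y),\beta_\mathbf{A}(x))\in R$. Hence the assertion ``$(x,y)\in\varphi(p)$ iff $(\beta_\mathbf{A}(y),\beta_\mathbf{A}(x))\in\varphi(p)$, for all $x,y$'' is equivalent to the identity $R = \beta_\mathbf{A}\mathbin{;}R^{\smile}\mathbin{;}\beta_\mathbf{A}$, and this is what I would establish.

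First I would collect the ingredients. Since $p$ is symmetric and $\varphi$ preserves all three negations, $\neg R = {\sim} R = {-}R = R$; in particular $\neg R = R$ and ${\sim} R = R$. By Theorem~\ref{Theorem:Dq(E)} these read $\alpha_\mathbf{A}\mathbin{;}\beta_\mathbf{A}\mathbin{;}R^c\mathbin{;}\beta_\mathbf{A} = R$ and $R^{c\smile}\mathbin{;}\alpha_\mathbf{A} = R$. From the compatibility condition $\alpha_\mathbf{A}\mathbin{;}\beta_\mathbf{A}\mathbin{;}\alpha_\mathbf{A}=\beta_\mathbf{A}$ together with $\beta_\mathbf{A}\mathbin{;}\beta_\mathbf{A}=\mathrm{id}_{X_\mathbf{A}}$ and $\alpha_\mathbf{A}^{\smile}\mathbin{;}\alpha_\mathbf{A}=\mathrm{id}_{X_\mathbf{A}}$, I would derive the auxiliary identity $\alpha_\mathbf{A}^{\smile}\mathbin{;}\beta_\mathbf{A}=\beta_\mathbf{A}\mathbin{;}\alpha_\mathbf{A}$ (left-compose the compatibility condition with $\alpha_\mathbf{A}^{\smile}$), and hence the collapse $\beta_\mathbf{A}\mathbin{;}\alpha_\mathbf{A}^{\smile}\mathbin{;}\beta_\mathbf{A}=\alpha_\mathbf{A}$.

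The computation then runs as follows. Taking converses in $R = \alpha_\mathbf{A}\mathbin{;}\beta_\mathbf{A}\mathbin{;}R^c\mathbin{;}\beta_\mathbf{A}$ and using $\beta_\mathbf{A}^{\smile}=\beta_\mathbf{A}$ gives $R^{\smile} = \beta_\mathbf{A}\mathbin{;}R^{c\smile}\mathbin{;}\beta_\mathbf{A}\mathbin{;}\alpha_\mathbf{A}^{\smile}$. Conjugating by $\beta_\mathbf{A}$, cancelling the leading $\beta_\mathbf{A}\mathbin{;}\beta_\mathbf{A}=\mathrm{id}_{X_\mathbf{A}}$, and then replacing the trailing $\beta_\mathbf{A}\mathbin{;}\alpha_\mathbf{A}^{\smile}\mathbin{;}\beta_\mathbf{A}$ by $\alpha_\mathbf{A}$, I obtain $\beta_\mathbf{A}\mathbin{;}R^{\smile}\mathbin{;}\beta_\mathbf{A} = R^{c\smile}\mathbin{;}\alpha_\mathbf{A} = {\sim}R = R$, which is exactly the identity sought. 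Alternatively one could mimic the contradiction argument of Lemma~\ref{lem:alpha_order_aut_varphi(p)}, chasing a putative pair through the definition of $\neg\varphi(p)$; the relational computation above is shorter and makes the role of the hypotheses transparent.

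The main obstacle is essentially bookkeeping: keeping the order of composition and the placement of converses correct, since $\beta_\mathbf{A}$ is a dual (order-reversing) automorphism and the statement itself swaps the two arguments. The one genuinely structural step is the collapse $\beta_\mathbf{A}\mathbin{;}\alpha_\mathbf{A}^{\smile}\mathbin{;}\beta_\mathbf{A}=\alpha_\mathbf{A}$, which is precisely where the compatibility hypothesis $\alpha_\mathbf{A}\mathbin{;}\beta_\mathbf{A}\mathbin{;}\alpha_\mathbf{A}=\beta_\mathbf{A}$ is used; once that identity is in hand, everything else is routine cancellation using that $\alpha_\mathbf{A}$ and $\beta_\mathbf{A}$ are bijections and that $\beta_\mathbf{A}$ is self-inverse.
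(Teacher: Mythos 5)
There is a genuine gap at the very first step. You assert that, because $p$ is symmetric, $\neg\varphi(p) = {\sim}\varphi(p) = -\varphi(p) = \varphi(p)$. But in this paper \emph{symmetric} means only that the three negations agree on $p$, i.e.\ ${\sim}p = -p = \neg p$; it does not say that $p$ is a fixed point of these operations. Indeed ${\sim}p = p$ already fails for the trivial positive symmetric idempotents: ${\sim}1 = 0$, which differs from $1$ in general, and ${\sim}\top \leqslant {\sim}1 = 0$, so ${\sim}\top = \top$ would force $\top \leqslant 0$. Since your entire computation is launched from the two identities $\alpha_\mathbf{A}\mathbin{;}\beta_\mathbf{A}\mathbin{;}\varphi(p)^c\mathbin{;}\beta_\mathbf{A} = \varphi(p)$ and $\varphi(p)^{c\smile}\mathbin{;}\alpha_\mathbf{A} = \varphi(p)$, which encode $\neg\varphi(p)=\varphi(p)$ and ${\sim}\varphi(p)=\varphi(p)$, the proof as written does not go through.

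The relational-identity strategy itself is sound, however, and can be repaired; the repaired version is arguably cleaner than the paper's element-chasing contradiction argument. What symmetry actually gives you (via the homomorphism $\varphi$) is $\neg\varphi(p) = -\varphi(p)$, i.e.\ $\alpha_\mathbf{A}\mathbin{;}\beta_\mathbf{A}\mathbin{;}\varphi(p)^c\mathbin{;}\beta_\mathbf{A} = \alpha_\mathbf{A}\mathbin{;}\varphi(p)^{c\smile}$. Cancel the leading $\alpha_\mathbf{A}$ to get $\beta_\mathbf{A}\mathbin{;}\varphi(p)^c\mathbin{;}\beta_\mathbf{A} = \varphi(p)^{c\smile}$, take complements relative to $E_\mathbf{A}$ using Lemma~\ref{Lemma:ComplementComposition} to get $\beta_\mathbf{A}\mathbin{;}\varphi(p)\mathbin{;}\beta_\mathbf{A} = \varphi(p)^{\smile}$, and take converses (using $\beta_\mathbf{A}^{\smile} = \beta_\mathbf{A}$) to arrive at $\beta_\mathbf{A}\mathbin{;}\varphi(p)^{\smile}\mathbin{;}\beta_\mathbf{A} = \varphi(p)$, which, as you correctly observe, is exactly the statement of the lemma. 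Note that this repaired argument does not even need the compatibility condition $\alpha_\mathbf{A}\mathbin{;}\beta_\mathbf{A}\mathbin{;}\alpha_\mathbf{A} = \beta_\mathbf{A}$ or your auxiliary collapse identity; the paper's own proof likewise only ever invokes the equalities $\neg\varphi(p) = {\sim}\varphi(p) = -\varphi(p)$ among the three negations, never a fixed-point identity.
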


\begin{proof}
Assume $(x,y) \in \varphi(p)$. Suppose for the sake of a contradiction that $(\beta_\mathbf{A}(y), \beta_{\mathbf{A}}(x)) \notin \varphi(p)$. Since $(x,y) \in \varphi(p)\subseteq E_\mathbf{A}$, we have $(x,y) \in E_\mathbf{A}$, and so $(\beta_\mathbf{A}(y), \beta_{\mathbf{A}}(x)) \in E_\mathbf{A}$ by Lemma~\ref{lem:alpha_E_alpha}. Hence, $(\beta_\mathbf{A}(y), \beta_{\mathbf{A}}(x)) \in \varphi(p)^c$, and so, since $(y, \beta_\mathbf{A}(y)) \in \beta_\mathbf{A}$ and $(\beta_\mathbf{A}(x), x) \in \beta_\mathbf{A}$, we get $(y, x) \in \beta_\mathbf{A}\mathbin{;}\varphi(p)^c\mathbin{;}\beta_\mathbf{A}$. We also know that $(\alpha^{-1}_\mathbf{A}(y), y) \in \alpha_\mathbf{A}$, and therefore $(\alpha^{-1}_\mathbf{A}(y), x) \in \alpha_\mathbf{A}\mathbin{;}\beta_\mathbf{A}\mathbin{;}\varphi(p)^c\mathbin{;}\beta_\mathbf{A} = \neg\varphi(p)$. Now $\neg \varphi(p) = {\sim}\varphi(p)$, so $(\alpha^{-1}_\mathbf{A}(y), x) \in -\varphi(p)= \alpha_\mathbf{A}\mathbin{;}\varphi(p)^{c\smile}$. This shows $(y, x) \in \varphi(p)^{c\smile}$, which implies $(x, y) \notin \varphi(p)$, a contradiction. 

Conversely, assume $(\beta_\mathbf{A}(y), \beta_{\mathbf{A}}(x)) \in \varphi(p)$ and suppose $(x, y) \notin  \varphi(p)$. The first part implies that $(\beta_\mathbf{A}(y), \beta_{\mathbf{A}}(x)) \in E_\mathbf{A}$, and so $(x, y) \in E_\mathbf{A}$. %(by Lemma~\ref{lem:alpha_E_alpha}). 
Hence, since $(x, y) \notin  \varphi(p)$, we have $(y, x) \in \varphi(p)^{c\smile}$. Now $(\alpha^{-1}_\mathbf{A}(y), y) \in \alpha_\mathbf{A}$, and therefore $(\alpha^{-1}_\mathbf{A}(y), x) \in \alpha_\mathbf{A}\mathbin{;}\varphi(p)^{c\smile} = -\varphi(p)$. But $\neg\varphi(p) = -\varphi(p)$, so $(\alpha^{-1}_\mathbf{A}(y), x) \in \neg\varphi(p)$. Consequently, $(\beta_\mathbf{A}(y), \beta_{\mathbf{A}}(x)) \in \varphi(p)^{c}$, and therefore $(\beta_\mathbf{A}(y), \beta_{\mathbf{A}}(x)) \notin \varphi(p)$, which is a contradiction.
\qed
\end{proof}

Using the previous two lemmas we can show that that $\alpha_{p\mathbf{A}p}$ and $\beta_{p\mathbf{A}p}$ are well-defined and satisfy the conditions of Theorem~\ref{Theorem:Dq(E)}. 

\begin{lemma}\label{lem:new_alpha_beta}
\begin{enumerate}[label=\textup{(\roman*)}]
\item The map $\alpha_{p\mathbf{A}p}: {X_\mathbf{A}/{\equiv}} \to {X_\mathbf{A}/{\equiv}}$ is a well-defined order automorphism of $\left(X_\mathbf{A}/{\equiv}, {\leqslant_{X_\mathbf{A}/{\equiv}}}\right)$ such that $\alpha_{p\mathbf{A}p}\subseteq E_{p\mathbf{A}p}$. 
\item The map $\beta_{p\mathbf{A}p}: {X_\mathbf{A}/{\equiv}} \to {X_\mathbf{A}/{\equiv}}$ is a well-defined self-inverse dual order automorphism of $\left(X_\mathbf{A}/{\equiv}, {\leqslant_{X_\mathbf{A}/{\equiv}}}\right)$ such that $\beta_{p\mathbf{A}p}\subseteq E_{p\mathbf{A}p}$.
\item $\alpha_{p\mathbf{A}p}\mathbin{;}\beta_{p\mathbf{A}p}\mathbin{;}\alpha_{p\mathbf{A}p} = \beta_{p\mathbf{A}p}$
\end{enumerate}
\end{lemma}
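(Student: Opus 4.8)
The plan is to handle (i) and (ii) by the same four-step template—well-definedness, the (dual) order biconditional, bijectivity, and the containment in $E_{p\mathbf{A}p}$—letting Lemmas~\ref{lem:alpha_order_aut_varphi(p)} and~\ref{lem:beta_dual_order_aut_varphi(p)} do essentially all the work, and then to dispatch (iii) by transporting the identity $\alpha_\mathbf{A}\mathbin{;}\beta_\mathbf{A}\mathbin{;}\alpha_\mathbf{A}=\beta_\mathbf{A}$ through the quotient maps. Throughout I may take for granted that $\leqslant_{X_\mathbf{A}/\equiv}$ is a partial order and $E_{p\mathbf{A}p}$ an equivalence relation, as already noted in the text.

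For (i) I would first verify well-definedness: if $[x]=[x']$ then $(x,x'),(x',x)\in\varphi(p)$, and applying Lemma~\ref{lem:alpha_order_aut_varphi(p)} to each pair gives $(\alpha_\mathbf{A}(x),\alpha_\mathbf{A}(x')),(\alpha_\mathbf{A}(x'),\alpha_\mathbf{A}(x))\in\varphi(p)$, that is $[\alpha_\mathbf{A}(x)]=[\alpha_\mathbf{A}(x')]$. The order property is read straight off the same lemma: $[x]\leqslant_{X_\mathbf{A}/\equiv}[y]$ iff $(x,y)\in\varphi(p)$ iff $(\alpha_\mathbf{A}(x),\alpha_\mathbf{A}(y))\in\varphi(p)$ iff $\alpha_{p\mathbf{A}p}([x])\leqslant_{X_\mathbf{A}/\equiv}\alpha_{p\mathbf{A}p}([y])$; since this is a biconditional, $\alpha_{p\mathbf{A}p}$ is an order embedding. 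Surjectivity and injectivity follow from those of $\alpha_\mathbf{A}$ together with the well-definedness argument run backwards, so $\alpha_{p\mathbf{A}p}$ is an order automorphism. Finally $\alpha_{p\mathbf{A}p}\subseteq E_{p\mathbf{A}p}$ reduces, by the definition of $E_{p\mathbf{A}p}$, to $(x,\alpha_\mathbf{A}(x))\in E_\mathbf{A}$, which holds because $\alpha_\mathbf{A}\subseteq E_\mathbf{A}$. Part (ii) is identical in shape, now invoking Lemma~\ref{lem:beta_dual_order_aut_varphi(p)}: well-definedness again uses the two pairs $(x,x')$ and $(x',x)$; the biconditional $(x,y)\in\varphi(p)$ iff $(\beta_\mathbf{A}(y),\beta_\mathbf{A}(x))\in\varphi(p)$ shows that $\beta_{p\mathbf{A}p}$ reverses $\leqslant_{X_\mathbf{A}/\equiv}$; self-inverseness comes directly from $\beta_\mathbf{A}(\beta_\mathbf{A}(x))=x$, since $\beta_{p\mathbf{A}p}(\beta_{p\mathbf{A}p}([x]))=[\beta_\mathbf{A}(\beta_\mathbf{A}(x))]=[x]$, and this already delivers bijectivity; and $\beta_{p\mathbf{A}p}\subseteq E_{p\mathbf{A}p}$ follows from $\beta_\mathbf{A}\subseteq E_\mathbf{A}$.

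For (iii) I would simply evaluate both sides on a class $[x]$. Using the paper's left-to-right reading of $\mathbin{;}$ and the defining formulas for the quotient maps, $(\alpha_{p\mathbf{A}p}\mathbin{;}\beta_{p\mathbf{A}p}\mathbin{;}\alpha_{p\mathbf{A}p})([x])=[\alpha_\mathbf{A}(\beta_\mathbf{A}(\alpha_\mathbf{A}(x)))]$, and $\alpha_\mathbf{A}(\beta_\mathbf{A}(\alpha_\mathbf{A}(x)))=(\alpha_\mathbf{A}\mathbin{;}\beta_\mathbf{A}\mathbin{;}\alpha_\mathbf{A})(x)=\beta_\mathbf{A}(x)$ by hypothesis, so both sides equal $[\beta_\mathbf{A}(x)]=\beta_{p\mathbf{A}p}([x])$. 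I expect no genuine obstacle; the only point that needs care is well-definedness, and this is exactly what the two preceding lemmas were engineered to supply—their phrasing as biconditionals in $\varphi(p)$ lets me push each membership and its converse independently through the quotient, after which everything else is routine.
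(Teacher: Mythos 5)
Your proposal is correct and follows essentially the same route as the paper: well-definedness and the order (respectively dual order) biconditionals are pulled directly from Lemmas~\ref{lem:alpha_order_aut_varphi(p)} and~\ref{lem:beta_dual_order_aut_varphi(p)}, bijectivity comes from that of $\alpha_\mathbf{A}$ and the self-inverseness of $\beta_\mathbf{A}$, the containments in $E_{p\mathbf{A}p}$ reduce to $\alpha_\mathbf{A},\beta_\mathbf{A}\subseteq E_\mathbf{A}$, and (iii) is verified pointwise on a class $[x]$ exactly as in the paper. The only cosmetic difference is that the paper spells out the containment $\alpha_{p\mathbf{A}p}\subseteq E_{p\mathbf{A}p}$ via transitivity of $E_\mathbf{A}$ on a general pair $([x],[y])$ rather than appealing to the well-definedness of $E_{p\mathbf{A}p}$, which does not change the substance.
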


\begin{proof}
(i) Firstly, since $\alpha_\mathbf{A}$ is surjective, it follows that $\alpha_{p\mathbf{A}p}$ is 
%ac1 small typo
surjective. 
%surjetive. 
To see that $\alpha_{p\mathbf{A}p}$ is well-defined, assume $[x]= [y]$. Then $x \equiv y$, and so $(x, y) \in \varphi(p)$ and $(y, x) \in \varphi(p)$. Hence, by Lemma~\ref{lem:alpha_order_aut_varphi(p)}, $(\alpha_\mathbf{A}(x), \alpha_\mathbf{A}(y)) \in \varphi(p)$ and $(\alpha_\mathbf{A}(y), \alpha_\mathbf{A}(x)) \in \varphi(p)$, which implies $\alpha_\mathbf{A}(x) \equiv \alpha_\mathbf{A}(y)$. This gives $[\alpha_\mathbf{A}(a)] = [\alpha_\mathbf{A}(y)]$, i.e., $\alpha_{p\mathbf{A}p}([x]) = \alpha_{p\mathbf{A}p}([y])$. 

Next we show that $\alpha_{p\mathbf{A}p}$ is an order automorphism of $\left(X_\mathbf{A}/{\equiv}, {\leqslant_{X_\mathbf{A}/{\equiv}}}\right)$. Let $[x], [y] \in X_\mathbf{A}/{\equiv}$. Then:
\begin{align*}
[x] \leqslant_{X_\mathbf{A}/{\equiv}} [y] & \textnormal{ iff } (x, y) \in \varphi(p)\\
& \textnormal{ iff } (\alpha_\mathbf{A}(x), \alpha_\mathbf{A}(y)) \in \varphi(p) & (\textnormal{by Lemma~\ref{lem:alpha_order_aut_varphi(p)}}) \\
& \textnormal{ iff } 
[\alpha_\mathbf{A}(x)]\leqslant_{X_\mathbf{A}/{\equiv}} [\alpha_\mathbf{A}(y)]\\
& \textnormal{ iff } 
\alpha_{p\mathbf{A}p}([x]) \leqslant_{X_\mathbf{A}/{\equiv}} \alpha_{p\mathbf{A}p}([y]).
\end{align*}
Finally, let $([x],  [y]) \in \alpha_{p\mathbf{A}p}$. Then we have $\alpha_{p\mathbf{A}p}([x]) = [y]$, and so $[\alpha_\mathbf{A}(x)] = [y]$. Hence, $\alpha_\mathbf{A}(x) \equiv y$, which implies that $(\alpha_\mathbf{A}(x), y) \in \varphi(p)$ and $(y, \alpha_\mathbf{A}(x)) \in \varphi(p)$.  Since $\varphi(p) \subseteq E_\mathbf{A}$, we have $(\alpha_\mathbf{A}(x), y) \in E_\mathbf{A}$. Now $(x, \alpha_\mathbf{A}(x)) \in \alpha_\mathbf{A}\subseteq E_\mathbf{A}$, so $(x, y) \in E_\mathbf{A}$, which shows that $([x], [y])\in E_{p\mathbf{A}p}$. 

(ii) Using Lemma~\ref{lem:beta_dual_order_aut_varphi(p)}, we can show  that $\beta_{p\mathbf{A}p}$ is a well-defined dual order automorphism. To show that $\beta_{p\mathbf{A}p}$ is self-inverse, let $[x]\in X_\mathbf{A}/{\equiv}$. Then, since $\beta_\mathbf{A}$ is self-inverse, we obtain $\beta_{p\mathbf{A}p}(\beta_{p\mathbf{A}p}([x])) = \beta_{p\mathbf{A}p}([\beta_\mathbf{A}(x)]) = [\beta_\mathbf{A}(\beta_\mathbf{A}(x))] = [x]$. 

(iii) We will show that $\alpha_{p\mathbf{A}p}\circ \beta_{p\mathbf{A}p}\circ\alpha_{p\mathbf{A}p} = \beta_{p\mathbf{A}p}$. Let $[x] \in X_\mathbf{A}/{\equiv}$. Then, since $\alpha_\mathbf{A}\mathbin{;}\beta_\mathbf{A}\mathbin{;}\alpha_\mathbf{A} = \beta_\mathbf{A}$, %or, equivalently $\alpha_\mathbf{A}\circ\beta_\mathbf{A}\circ\alpha_\mathbf{A} = \beta_\mathbf{A}$, 
we get 
\begin{align*}
\alpha_{p\mathbf{A}p}(\beta_{p\mathbf{A}p}(\alpha_{p\mathbf{A}p}([x]))) & = \alpha_{p\mathbf{A}p}(\beta_{p\mathbf{A}p}([\alpha_{\mathbf{A}}(x)])) \\
& = \alpha_{p\mathbf{A}p}([\beta_{\mathbf{A}}(\alpha_{\mathbf{A}}(x))]) \\
& = [\alpha_\mathbf{A}(\beta_\mathbf{A}(\alpha_\mathbf{A}(x)))]\\
& = [\beta_\mathbf{A}(x)] = \beta_{p\mathbf{A}p}([x]).
\end{align*}
\qed
\end{proof}

An immediate consequence of the above is that the poset $(X/{\equiv}, \leqslant_{X_\mathbf{A}/\equiv})$, equipped with $E_{p\mathbf{A}p}$, $\alpha_{p\A p}$ and $\beta_{p \A p}$, can be used to build a DqRA as per  Theorem~\ref{Theorem:Dq(E)}. 

%Since $\alpha_{p\mathbf{A}p}$ and $\beta_{p\mathbf{A}p}$ satisfy the conditions of Theorem~\ref{Theorem:Dq(E)}, we obtain the following result:

\begin{theorem}\label{thm:new_DqRA}
The algebra $\left\langle \mathsf{Up}\left(E_{p\mathbf{A}p}, \preccurlyeq_{E_{p\mathbf{A}p}}\right), \cap, \cup, \mathbin{;}, {\sim}, -, \neg, \leqslant_{X_{\mathbf{A}}/{\equiv}}\right\rangle$ is a DqRA. 
\end{theorem}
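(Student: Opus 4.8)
The plan is to recognise this theorem as a direct instance of Theorem~\ref{Theorem:Dq(E)}: it suffices to confirm that the data $(X_\mathbf{A}/{\equiv}, \leqslant_{X_\mathbf{A}/{\equiv}})$, $E_{p\mathbf{A}p}$, $\alpha_{p\mathbf{A}p}$ and $\beta_{p\mathbf{A}p}$ constructed above jointly satisfy every hypothesis of that theorem, and then simply to apply it. Once this checklist is complete, Theorem~\ref{Theorem:Dq(E)} produces the DqRA $\mathbf{Dq}(E_{p\mathbf{A}p})$, which is precisely the algebra appearing in the statement, with the monoid identity $1$ taken to be $\leqslant_{X_\mathbf{A}/{\equiv}}$ in accordance with the convention $1 = {\leqslant}$ of Theorem~\ref{Theorem:Dq(E)}.

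First I would collect the three structural facts already established in the construction preceding this theorem: that $(X_\mathbf{A}/{\equiv}, \leqslant_{X_\mathbf{A}/{\equiv}})$ is a poset, that $E_{p\mathbf{A}p}$ is an equivalence relation on $X_\mathbf{A}/{\equiv}$, and that $\leqslant_{X_\mathbf{A}/{\equiv}} \subseteq E_{p\mathbf{A}p}$. These are exactly the lattice-theoretic prerequisites required to form $\mathbf{Up}(E_{p\mathbf{A}p}, \preccurlyeq_{E_{p\mathbf{A}p}})$ and to invoke Theorem~\ref{Theorem:Dq(E)}.

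The remaining hypotheses concern the two descended automorphisms, and these are supplied verbatim by Lemma~\ref{lem:new_alpha_beta}. Part~(i) gives that $\alpha_{p\mathbf{A}p}$ is a well-defined order automorphism of $(X_\mathbf{A}/{\equiv}, \leqslant_{X_\mathbf{A}/{\equiv}})$ with $\alpha_{p\mathbf{A}p} \subseteq E_{p\mathbf{A}p}$; part~(ii) gives that $\beta_{p\mathbf{A}p}$ is a well-defined self-inverse dual order automorphism with $\beta_{p\mathbf{A}p} \subseteq E_{p\mathbf{A}p}$; and part~(iii) gives the compatibility identity $\alpha_{p\mathbf{A}p} \mathbin{;} \beta_{p\mathbf{A}p} \mathbin{;} \alpha_{p\mathbf{A}p} = \beta_{p\mathbf{A}p}$. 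With every hypothesis of Theorem~\ref{Theorem:Dq(E)} now verified, the theorem applies directly and the result follows.

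I expect there to be no genuine obstacle at this stage, precisely because all the substantive work has been front-loaded into Lemma~\ref{lem:new_alpha_beta} and, through it, into Lemmas~\ref{lem:alpha_order_aut_varphi(p)} and~\ref{lem:beta_dual_order_aut_varphi(p)}, which establish the $\alpha_\mathbf{A}$- and $\beta_\mathbf{A}$-invariance of the preorder $\varphi(p)$ that guarantees well-definedness on the quotient. The proof of this theorem is therefore a short assembling step: it checks the clauses against the hypothesis list of Theorem~\ref{Theorem:Dq(E)} and cites that theorem. The only point warranting a word of care is the identification of $\leqslant_{X_\mathbf{A}/{\equiv}}$ as the monoid identity of the resulting algebra, so that the signature of the stated algebra matches that of $\mathbf{Dq}(E_{p\mathbf{A}p})$.
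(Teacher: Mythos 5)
Your proposal is correct and matches the paper exactly: the paper likewise treats this theorem as an immediate consequence of Theorem~\ref{Theorem:Dq(E)}, with the hypotheses supplied by the quotient construction (poset, equivalence relation, ${\leqslant_{X_\mathbf{A}/{\equiv}}}\subseteq E_{p\mathbf{A}p}$) and by Lemma~\ref{lem:new_alpha_beta}, and gives no further argument.
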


Our next goal is to show that the DqRA $p\mathbf{A}p$ can be embedded into the DqRA of binary relations $\left\langle \mathsf{Up}\left(E_{p\mathbf{A}p}, \preccurlyeq_{E_{p\mathbf{A}p}}\right), \cap, \cup, \mathbin{;}, {\sim}, -, \neg, \leqslant_{X_{\mathbf{A}}/{\equiv}}\right\rangle$. To this end, define $\psi: pAp \to \mathcal{P}\left((X_\mathbf{A}/{\equiv})^2\right)$ by setting, for all $a \in pAp$,
\[
\psi(a) := \{([x], [y]) \mid (x, y) \in \varphi(a)\}.
\]
%We first show that $\psi(a)$ is an upset of $\left(E_{p\mathbf{A}p}, \preccurlyeq_{E_{p\mathbf{A}p}}\right)$ for all $a \in pAp \subseteq A$. 

\begin{lemma}\label{lem:psi_injective_to_Up}
The map $\psi$ is an injective function from $pAp$ to   
$\mathsf{Up}\left(E_{p\mathbf{A}p}, \preccurlyeq_{E_{p\mathbf{A}p}}\right)$.
\end{lemma}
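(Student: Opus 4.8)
The plan is to verify three things about $\psi$: that membership in $\psi(a)$ is independent of the chosen class representatives (so that $\psi(a)$ is a genuinely well-behaved subset and the later steps go through), that each $\psi(a)$ lies in $\mathsf{Up}\left(E_{p\mathbf{A}p}, \preccurlyeq_{E_{p\mathbf{A}p}}\right)$, and that $\psi$ is injective. The single fact driving all three is the absorption identity
\[
\varphi(p)\mathbin{;}\varphi(a)=\varphi(a)=\varphi(a)\mathbin{;}\varphi(p)\qquad(a\in pAp),
\]
which follows at once from Lemma~\ref{lem:TFAE-pAp} (giving $pa=a=ap$) together with the fact that $\varphi$ is a homomorphism carrying the monoid operation $\cdot$ to composition $\mathbin{;}$.

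First I would prove the strong well-definedness statement $([x],[y])\in\psi(a)\iff(x,y)\in\varphi(a)$. The content is representative-independence: if $x\equiv x'$, $y\equiv y'$ and $(x,y)\in\varphi(a)$, then $(x',y')\in\varphi(a)$. Here $x\equiv x'$ yields $(x',x)\in\varphi(p)$ and $y\equiv y'$ yields $(y,y')\in\varphi(p)$; composing $(x',x)\in\varphi(p)$ with $(x,y)\in\varphi(a)$ gives $(x',y)\in\varphi(p)\mathbin{;}\varphi(a)=\varphi(a)$, and then composing with $(y,y')\in\varphi(p)$ gives $(x',y')\in\varphi(a)\mathbin{;}\varphi(p)=\varphi(a)$ by the absorption identity. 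This equivalence is what pins down $\psi(a)$ and is reused below.

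Next I would check $\psi(a)\in\mathsf{Up}\left(E_{p\mathbf{A}p},\preccurlyeq_{E_{p\mathbf{A}p}}\right)$. Containment $\psi(a)\subseteq E_{p\mathbf{A}p}$ is immediate from $\varphi(a)\subseteq E_\mathbf{A}$ and the definition of $E_{p\mathbf{A}p}$. For the up-set property, suppose $([u],[v])\in\psi(a)$ and $([u],[v])\preccurlyeq_{E_{p\mathbf{A}p}}([x],[y])$; unravelling the quotient order, the latter says $(x,u)\in\varphi(p)$ and $(v,y)\in\varphi(p)$, while the former gives $(u,v)\in\varphi(a)$. Composing $(x,u)\in\varphi(p)$, $(u,v)\in\varphi(a)$, $(v,y)\in\varphi(p)$ and applying the absorption identity twice yields $(x,y)\in\varphi(a)$, i.e.\ $([x],[y])\in\psi(a)$.

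Finally, injectivity is a short consequence of the strong well-definedness equivalence: if $\psi(a)=\psi(b)$ then for all $x,y$ we have $(x,y)\in\varphi(a)\iff([x],[y])\in\psi(a)=\psi(b)\iff(x,y)\in\varphi(b)$, so $\varphi(a)=\varphi(b)$, and since $\varphi$ is an embedding (hence injective) we conclude $a=b$. I expect the only genuinely delicate point to be the representative-independence in the first step; once the absorption identity is in hand, the remaining steps are routine bookkeeping of compositions.
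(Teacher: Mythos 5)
Your proof is correct, and on the key step it takes a genuinely different route from the paper. Both proofs share the same decomposition (each $\psi(a)$ lands in $\mathsf{Up}(E_{p\mathbf{A}p},\preccurlyeq_{E_{p\mathbf{A}p}})$; $\psi$ is injective), but the paper establishes the upset property by transporting the inequality $([x],[y])\preccurlyeq_{E_{p\mathbf{A}p}}([u],[v])$ down to $(x,y)\preccurlyeq_{E_{\mathbf{A}}}(u,v)$ and invoking that $\varphi(a)$ is an upset of $(E_\mathbf{A},\preccurlyeq_{E_\mathbf{A}})$; to do this it asserts $\varphi(p)\subseteq{\leqslant_{X_\mathbf{A}}}$, which is the reverse of the inclusion ${\leqslant_{X_\mathbf{A}}}\subseteq\varphi(p)$ that positivity of $p$ actually yields, so that step only holds verbatim when $p=1$. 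Your argument instead composes $(x,u)\in\varphi(p)$, $(u,v)\in\varphi(a)$, $(v,y)\in\varphi(p)$ and absorbs via $\varphi(p)\mathbin{;}\varphi(a)\mathbin{;}\varphi(p)=\varphi(pap)=\varphi(a)$, using only $a\in pAp$ and the fact that $\varphi$ is a homomorphism; this is exactly the repair the paper's version needs. You also prove explicitly the representative-independence $([x],[y])\in\psi(a)\iff(x,y)\in\varphi(a)$, which the paper never states but silently relies on both in its injectivity argument (concluding $([x],[y])\notin\psi(b)$ from $(x,y)\notin\varphi(b)$) and in the ``iff'' chains of the later preservation lemmas. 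Your injectivity step is the contrapositive of the paper's and otherwise identical. Net effect: same skeleton, but your absorption-identity mechanism buys a sound proof of the upset property and makes the well-definedness of $\psi$ explicit.
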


%\begin{lemma}\label{lem:psi(a)_upset}
%For all $a \in pAp$, the set $\psi(a)$ is an upset of $\left(E_{p\mathbf{A}p}, \preccurlyeq_{E_{p\mathbf{A}p}}\right)$. 
%\end{lemma}

\begin{proof}
We first show that $\psi(a)$ is an upset for all $a \in pAp$. 
%Let $a \in pAp$. 
Let $([x], [y])\in \psi(a)$ and $([u], [v]) \in E_{p\mathbf{A}p}$. Assume $([x], [y]) \preccurlyeq_{p\mathbf{A}p} ([u], [v])$. Since $([x], [y])\in \psi(a)$ and $([u], [v]) \in E_{p\mathbf{A}p}$, we have $(x, y) \in \varphi(a)$ and $(u, v)  \in E_\mathbf{A}$. We also have $[u] \leqslant_{X_\mathbf{A}/{\equiv}} [x]$ and $[y] \leqslant_{X_\mathbf{A}/{\equiv}} [v]$, and thus $(u, x) \in \varphi(p)$ and $(y, v) \in \varphi(p)$. Now $\varphi(p) \subseteq {\leqslant_{X_\mathbf{A}}}$, so $u\leqslant_{X_\mathbf{A}} x$ and $y \leqslant_{X_\mathbf{A}} v$. Hence, $(x, y) \preccurlyeq_{E_{\mathbf{A}}} (u, v)$, and therefore, since $\varphi(a)$ is an upset of $(E_\mathbf{A}, \preccurlyeq_{E_\mathbf{A}})$, we get $(u, v) \in \varphi(a)$, which shows $([u], [v]) \in \psi(a)$. 
%\qed
%\end{proof}

%Next we show that $\psi$ is injective. 

%\begin{lemma}\label{lem:psi_injective}
%The map $\psi: pAp \to \mathsf{Up}\left(E_{p\mathbf{A}p}, \preccurlyeq_{E_{p\mathbf{A}p}}\right)$ is injective. 
%\end{lemma}

%\begin{proof}
Let $a, b \in pAp$ and assume $a \neq b$. Then $a, b \in A$, and so, since $\varphi$ is injective, it follows that $\varphi(a) \neq \varphi(b)$. This implies $\varphi(a) \not\subseteq \varphi(b)$ or $\varphi(b) \not\subseteq \varphi(a)$. Assume the first. Hence, there exist $x, y \in X_\mathbf{A}$ such that $(x, y) \in \varphi(a)$ but $(x, y) \notin \varphi(b)$. Consequently, $([x], [y]) \in \psi(a)$ and $([x], [y]) \notin \psi(b)$. This shows that $\psi(a) \neq \psi(b)$. Similarly, if $\varphi(b) \not\subseteq \varphi(a)$, then $\psi(a) \neq \psi(b)$. 
\qed
\end{proof}

In the next few lemmas we will show that $\psi$ is a homomorphism from $p\mathbf{A}p$ into $\mathbf{Dq}\left(E_{p\mathbf{A}p}, \preccurlyeq_{E_{p\mathbf{A}p}}\right)$. It is immediate from the definitions of $\psi$ and $\leqslant_{X_{\mathbf{A}}/{\equiv}}$ that $\psi(p) = {\leqslant_{X_{\mathbf{A}}/{\equiv}}}$.

\begin{lemma}\label{lem:psi_preserves_binary_ops}
The map $\psi: pAp \to \mathsf{Up}\left(E_{p\mathbf{A}p}, \preccurlyeq_{E_{p\mathbf{A}p}}\right)$ preserves the lattice operations and the monoid operation. 
\end{lemma}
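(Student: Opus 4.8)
The plan is to verify that $\psi$ preserves $\cap$, $\cup$, and the composition $\mathbin{;}$ separately, leaning on the fact that $\varphi$ is already a homomorphism of DqRAs and that $\psi(a)$ is obtained from $\varphi(a)$ by descending to $\equiv$-classes. For the lattice operations the argument is essentially formal: since $([x],[y])\in\psi(a)$ iff $(x,y)\in\varphi(a)$, and both meet and join in $\mathsf{Up}$ are given by intersection and union of relations, I would show $\psi(a\wedge b)=\psi(a)\cap\psi(b)$ and $\psi(a\vee b)=\psi(a)\cup\psi(b)$ by chasing membership through the definition of $\psi$ and using that $\varphi$ preserves $\wedge,\vee$. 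The only subtlety is well-definedness at the level of classes, but this is already secured by the representative-independence built into the relations $\varphi(a)$ (they are $\equiv$-saturated because $\varphi(p)\subseteq{\leqslant_{X_\mathbf{A}}}$ forces the relevant invariance).

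The substantive step is preservation of the monoid operation, i.e. $\psi(a\cdot b)=\psi(a)\mathbin{;}\psi(b)$ for $a,b\in pAp$. First I would establish the easy inclusion $\psi(a)\mathbin{;}\psi(b)\subseteq\psi(a\cdot b)$: if $([x],[z])\in\psi(a)$ and $([z],[y])\in\psi(b)$ then $(x,z')\in\varphi(a)$ and $(z'',y)\in\varphi(b)$ for some representatives with $z'\equiv z''$; using that $a,b\in pAp$ gives $\varphi(a)=\varphi(p)\mathbin{;}\varphi(a)\mathbin{;}\varphi(p)$ and likewise for $b$ (the relational shadow of $a=pap$), one can absorb the discrepancy between $z'$ and $z''$ through $\varphi(p)$ and conclude $(x,y)\in\varphi(a)\mathbin{;}\varphi(b)=\varphi(ab)$, whence $([x],[y])\in\psi(ab)$. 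The reverse inclusion $\psi(a\cdot b)\subseteq\psi(a)\mathbin{;}\psi(b)$ is the delicate direction: given $(x,y)\in\varphi(ab)=\varphi(a)\mathbin{;}\varphi(b)$, there is a witness $z\in X_\mathbf{A}$ with $(x,z)\in\varphi(a)$ and $(z,y)\in\varphi(b)$, and the class $[z]$ then serves as the witness in the quotient, giving $([x],[z])\in\psi(a)$ and $([z],[y])\in\psi(b)$.

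I expect the main obstacle to be the careful bookkeeping of representatives in the forward inclusion for $\mathbin{;}$, specifically showing that the absorption property $\varphi(a)=\varphi(p)\mathbin{;}\varphi(a)\mathbin{;}\varphi(p)$ holds for $a\in pAp$ and that it lets a single $\equiv$-class witness stand in for two $\equiv$-related representatives $z',z''$. This is where the positivity and idempotency of $p$ (via $a=pap$, so $\varphi(a)=\varphi(p)\mathbin{;}\varphi(a)\mathbin{;}\varphi(p)$, together with $\varphi(p)\mathbin{;}\varphi(p)=\varphi(p^2)=\varphi(p)$) do the real work, reducing the set-theoretic composition in the quotient to the composition upstairs. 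Once these inclusions are in hand, preservation of $\mathbin{;}$ follows by mutual containment, and together with the lattice cases this completes the proof that $\psi$ preserves all the operations named in the statement.
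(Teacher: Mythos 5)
Your proposal is correct and follows essentially the same route as the paper: both reduce everything to the fact that $\varphi$ is a homomorphism and then push membership through the quotient map, with the lattice operations handled formally and the monoid operation via the existence of a middle witness. The only difference is that the paper's proof is a bare chain of equivalences that silently uses the $\equiv$-saturation of $\varphi(c)$ for $c\in pAp$ when converting a witness $[z]$ back into a witness $z$, whereas you make this explicit through the absorption identity $\varphi(c)=\varphi(p)\mathbin{;}\varphi(c)\mathbin{;}\varphi(p)$ together with $\varphi(p)\mathbin{;}\varphi(p)=\varphi(p)$ --- a detail worth retaining, even though you mislabel which inclusion is the delicate one.
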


%ac1 I reckon we should merge the next two lemmas into one. What do you think? 
%\begin{lemma}\label{lem:psi_preserves_lattice_operations}
%The map $\psi: pAp \to \mathsf{Up}\left(E_{p\mathbf{A}p}, \preccurlyeq_{E_{p\mathbf{A}p}}\right)$ preserves the lattice operations,  i.e., for all $a, b \in pAp$, $\psi(a\wedge b) = \psi(a)\cap\psi(b)$ and $\psi(a\vee b) = \psi(a)\cup\psi(b)$. 
%\end{lemma}

\begin{proof}
The fact that $\psi(a\wedge b) = \psi(a)\cap\psi(b)$ and $\psi(a\vee b) = \psi(a)\cup\psi(b)$ for all $a,b\in pAp$
follows from basic set theory and the fact that $\varphi$ preserves $\wedge$ and $\vee$.
%preserves the lattice operations. 
%\end{proof}

%In the next lemma we show that the map $\psi: pAp \to \mathsf{Up}\left(E_{p\mathbf{A}p}\preccurlyeq_{E_{p\mathbf{A}p}}\right)$ preserves the monoid operation. 

%\begin{lemma}\label{lem:psi_preserves_monoid_operation}
%The map $\psi: pAp \to \mathsf{Up}\left(E_{p\mathbf{A}p}, \preccurlyeq_{E_{p\mathbf{A}p}}\right)$ preserves the monoid operation, $\cdot$, i.e., for all $a, b \in pAp$, $\psi(a\cdot b) = \psi(a)\mathbin{;}\psi(b)$. 
%\end{lemma}

%\begin{proof}
Let $a, b \in pAp$ and $x, y \in X_\mathbf{A}$. The first part implies $a, b \in A$. Hence, % and assume $([x]_{\equiv}, [y]_\equiv) \in \psi(a\cdot b)$. Then $([x]_{\equiv}, [y]_\equiv) \in \psi(a\cdot b)$, and so, since $\varphi_\mathbf{A}$ preserves the monoid operation, we have $(x, y) \in \varphi_{\mathbf{A}}(a)\mathbin{;} \varphi(b)$. Hence, there exists $z \in X$ such that $(x, z) \in \varphi(a)$  and $(z, y) \in \varphi(b)$
%\begin{align*}
$([x], [y]) \in \psi(a\cdot b)$ iff $(x, y) \in \varphi_{\mathbf{A}}(a\cdot b)$
 iff $(x, y) \in \varphi_{\mathbf{A}}(a)\mathbin{;} \varphi(b)$
iff there exists  $z \in X$  such that $(x, z) \in \varphi(a)$ and $(z, y) \in \varphi(b)$ iff there exists $[z] \in X_\mathbf{A}/{\equiv}$ such that $([x], [z]) \in \psi(a)$  and $([z], [y]) \in \psi(b)$ 
 iff $([x], [y]) \in \psi(a)\mathbin{;}\psi(b)$.
\qed
\end{proof}

%Our next goal is to show that $\psi$ preserves the linear negations. 
%We will need the following lemma to show that $\psi$ preserves $\sim$. It follows immediately from the definition of $\alpha_{p\A p}$.

To show that $\psi$ preserves $\tw$ we will need the fact that 
$\alpha^{-1}_{p\mathbf{A}p}([x]) = [\alpha^{-1}_\mathbf{A}(x)]$ for all $[x] \in X_\mathbf{A}/{\equiv}$. This follows immediately from the definition of $\alpha_{p \A p}$. 

%\begin{lemma}\label{lem:inv_alpha}
%The inverse of $\alpha_{p\mathbf{A}p}: X_\mathbf{A}/{\equiv} \to X_\mathbf{A}/{\equiv}$ is $\alpha^{-1}_{p\mathbf{A}p}([x]) = [\alpha^{-1}_\mathbf{A}(x)]$ for all $[x] \in X_\mathbf{A}/{\equiv}$. 
%\end{lemma}

\begin{lemma}\label{lem:psi_preserves_unary_ops}
The map $\psi: pAp \to \mathsf{Up}\left(E_{p\mathbf{A}p}, \preccurlyeq_{E_{p\mathbf{A}p}}\right)$ preserves 
the unary operations $\tw$, $-$ and $\neg$. 
%the linear negations,  i.e., for all $a\in pAp$, $\psi(-a) = -\psi(a)$ and $\psi({\sim}a) = {\sim}\psi(a)$. 
\end{lemma}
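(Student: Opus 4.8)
The plan is to factor $\psi$ through the quotient map: writing $\overline{R} = \{([x],[y]) \mid (x,y) \in R\}$ for the image of a relation $R \subseteq E_\mathbf{A}$, I would first note $\psi(a) = \overline{\varphi(a)}$ and then show that $\overline{(\cdot)}$ commutes with converse, relative complement and composition, and that it carries $\alpha_\mathbf{A}, \beta_\mathbf{A}$ to $\alpha_{p\mathbf{A}p}, \beta_{p\mathbf{A}p}$. Since $\tw$, $-$ and $\neg$ are built in both $\mathbf{Dq}(E_\mathbf{A})$ and $\mathbf{Dq}(E_{p\mathbf{A}p})$ from exactly these three primitives (Theorem~\ref{Theorem:Dq(E)}), and since $\varphi$ already preserves $\tw$, $-$ and $\neg$, preservation by $\psi$ then drops out of short chains of equalities, e.g.
\[
\tw\psi(a) = \psi(a)^{c\smile}\mathbin{;}\alpha_{p\mathbf{A}p} = \overline{\varphi(a)^{c\smile}}\mathbin{;}\overline{\alpha_\mathbf{A}} = \overline{\varphi(a)^{c\smile}\mathbin{;}\alpha_\mathbf{A}} = \overline{\tw\varphi(a)} = \overline{\varphi(\tw a)} = \psi(\tw a),
\]
and analogously $-\psi(a) = \overline{\alpha_\mathbf{A}\mathbin{;}\varphi(a)^{c\smile}} = \psi(-a)$ and $\neg\psi(a) = \overline{\alpha_\mathbf{A}\mathbin{;}\beta_\mathbf{A}\mathbin{;}\varphi(a)^c\mathbin{;}\beta_\mathbf{A}} = \psi(\neg a)$, where I use that $\tw a, -a, \neg a \in pAp$ (Lemma~\ref{lem:pAp-clos-unary}) so that each $\psi(\cdot)$ on the right equals the corresponding $\overline{\varphi(\cdot)}$.

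The three transfer identities all rest on one saturation property: every relation in play is a union of $\equiv$-classes of pairs, i.e.\ $x\equiv x'$, $y\equiv y'$ imply membership is unchanged. I would record this first for $\varphi(a)$; this is precisely where $a \in pAp$ enters, since $a = pap$ yields $\varphi(a) = \varphi(p)\mathbin{;}\varphi(a)\mathbin{;}\varphi(p)$, and $x\equiv x'$ means $(x,x'),(x',x)\in\varphi(p)$, so composing on both sides transfers $(x,y)\in\varphi(a)$ to $(x',y')\in\varphi(a)$. In particular $\psi$ is well-defined on classes and $\psi(a)=\overline{\varphi(a)}$. Next, $E_\mathbf{A}$ is saturated because $\equiv\subseteq\varphi(p)\subseteq E_\mathbf{A}$ and $E_\mathbf{A}$ is symmetric and transitive; hence $\varphi(a)^c = E_\mathbf{A}\setminus\varphi(a)$ is saturated, and since $E_{p\mathbf{A}p}=\overline{E_\mathbf{A}}$ by definition one gets $\overline{\varphi(a)^c} = \overline{\varphi(a)}^{\,c}$. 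Saturation of $\varphi(a)^\smile$ is immediate and gives $\overline{\varphi(a)^\smile}=\overline{\varphi(a)}^{\,\smile}$, while $\alpha_\mathbf{A}$ and $\beta_\mathbf{A}$ are saturated by Lemmas~\ref{lem:alpha_order_aut_varphi(p)} and~\ref{lem:beta_dual_order_aut_varphi(p)}, with $\overline{\alpha_\mathbf{A}}=\alpha_{p\mathbf{A}p}$ and $\overline{\beta_\mathbf{A}}=\beta_{p\mathbf{A}p}$ directly from the definitions of the quotient maps.

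The composition transfer $\overline{R}\mathbin{;}\overline{S}=\overline{R\mathbin{;}S}$ for saturated $R,S$ is proved by the same representative-chasing already carried out for $\psi(a\cdot b)$ in Lemma~\ref{lem:psi_preserves_binary_ops}: an existential witness $[z]$ on the quotient lifts to a representative $z\in X_\mathbf{A}$, and conversely, with saturation guaranteeing independence of the chosen representatives. I expect the genuine obstacle to be the complement step rather than the composition step, since it is the only place where the \emph{relative} nature of $R^c$ (taken inside $E_\mathbf{A}$ on one side and inside $E_{p\mathbf{A}p}$ on the other) must be reconciled with passage to the quotient; the argument there must combine the defining equivalence $([x],[y])\in E_{p\mathbf{A}p}\iff(x,y)\in E_\mathbf{A}$ with saturation of both $E_\mathbf{A}$ and $\varphi(a)$ to ensure that the statement ``$([x],[y])\notin\overline{\varphi(a)}$'' is representative-independent and transfers faithfully to ``$(x,y)\notin\varphi(a)$''. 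Once these transfer identities are established, the three displayed chains complete the proof.
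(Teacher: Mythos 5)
Your proposal is correct in substance and, despite the different packaging, travels the same road as the paper's proof. The paper argues by a direct membership chase: it unfolds ${\sim}R=R^{c\smile}\mathbin{;}\alpha$, ${-}R=\alpha\mathbin{;}R^{c\smile}$ and $\neg R=\alpha\mathbin{;}\beta\mathbin{;}R^{c}\mathbin{;}\beta$ into an iff-chain of conditions on representatives, and at the decisive steps it uses exactly your two transfer facts, namely that $([u],[v])\in\psi(a)$ forces $(u,v)\in\varphi(a)$ for $a\in pAp$, and that $[\alpha_\mathbf{A}^{-1}(y)]=\alpha_{p\mathbf{A}p}^{-1}([y])$ and $[\beta_\mathbf{A}(y)]=\beta_{p\mathbf{A}p}([y])$. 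Your reorganization into transfer identities for $\overline{(\cdot)}$ with respect to $(\cdot)^{c}$, $(\cdot)^{\smile}$ and $\mathbin{;}$ is a modular repackaging of the same computation; your explicit isolation of the saturation of $\varphi(a)$ via $a=pap$, hence $\varphi(a)=\varphi(p)\mathbin{;}\varphi(a)\mathbin{;}\varphi(p)$, is in fact more careful than the paper, which uses the implication ``$(u,v)\notin\varphi(a)$ implies $([u],[v])\notin\psi(a)$'' without comment.

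One claim does need repair: $\alpha_\mathbf{A}$ and $\beta_\mathbf{A}$ are \emph{not} saturated in your sense whenever $\equiv$ has a non-singleton class. Saturation of the graph of $\alpha_\mathbf{A}$ would require that $u'\equiv u$ and $v'\equiv\alpha_\mathbf{A}(u)$ imply $v'=\alpha_\mathbf{A}(u')$; taking $u'=u$ this forces $[\alpha_\mathbf{A}(u)]$ to be a singleton. (Concretely, in Example~\ref{Example:Consturction1} with $p=a$ one has $(w,x)\in\alpha_\mathbf{A}$ and $x\equiv z$ but $(w,z)\notin\alpha_\mathbf{A}$.) What Lemmas~\ref{lem:alpha_order_aut_varphi(p)} and~\ref{lem:beta_dual_order_aut_varphi(p)} actually deliver is class-compatibility, i.e.\ that $\alpha_\mathbf{A}$ and $\beta_\mathbf{A}$ descend to well-defined maps on $X_\mathbf{A}/{\equiv}$ with $\overline{\alpha_\mathbf{A}}=\alpha_{p\mathbf{A}p}$ and $\overline{\beta_\mathbf{A}}=\beta_{p\mathbf{A}p}$. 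Consequently your composition-transfer lemma, stated for two \emph{saturated} factors, does not literally apply to $\overline{\varphi(a)^{c\smile}}\mathbin{;}\overline{\alpha_\mathbf{A}}$ or to the $\beta$-conjugation in the $\neg$ case. The fix is small: prove the variant $\overline{R\mathbin{;}\gamma}=\overline{R}\mathbin{;}\overline{\gamma}$ and $\overline{\gamma\mathbin{;}R}=\overline{\gamma}\mathbin{;}\overline{R}$ for $R$ saturated and $\gamma$ the graph of a class-compatible bijection (the witness $z$ in the quotient lifts to the representative $\gamma^{-1}$ or $\gamma$ produces, and saturation of $R$ absorbs the change of representative). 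With that adjustment the three displayed chains go through and the proof is complete.
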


\begin{proof}
We will show that $\psi$ preserves $\tw$. The proof that it preserves $-$ follows similarly. Let $a \in pAp$ and $[x], [y] \in X_\mathbf{A}/{\equiv}$. 
%Then:
\begin{comment}
\begin{align*}
([x], [y]) \in \psi(-a) & \textnormal{ iff } (x, y) \in \varphi(-a)\\
& \textnormal{ iff } (x, y) \in -\varphi(a)\\
& \textnormal{ iff } (x, y) \in \alpha_\mathbf{A}\mathbin{;}\varphi(a)^{c\smile}\\
& \textnormal{ iff } (\alpha_\mathbf{A}(x), y) \in \varphi(a)^{c\smile}\\
& \textnormal{ iff } (y, \alpha_\mathbf{A}(x)) \in \varphi(a)^c\\
& \textnormal{ iff } (y, \alpha_\mathbf{A}(x)) \in E_\mathbf{A} \textnormal{ and } (y, \alpha_\mathbf{A}(x)) \notin \varphi(a)\\
& \textnormal{ iff } ([y], [\alpha_\mathbf{A}(x)]) \in E_{p\mathbf{A}p} \textnormal{ and } ([y], [\alpha_\mathbf{A}(x)]) \notin \psi(a)\\
& \textnormal{ iff } ([y], \alpha_{p\mathbf{A}p}([x])) \in E_{p\mathbf{A}p} \textnormal{ and } ([y], \alpha_{p\mathbf{A}p}([x])) \notin \psi(a)\\
& \textnormal{ iff } ([y], \alpha_{p\mathbf{A}p}([x])) \in \psi(a)^c\\
& \textnormal{ iff }
(\alpha_{p\mathbf{A}p}([x]), [y]) \in \psi(a)^{c\smile}\\
& \textnormal{ iff } ([x], [y]) \in \alpha_{p\mathbf{A}p}\mathbin{;}\psi(a)^{c\smile}\\
& \textnormal{ iff } ([x], [y]) \in -\psi(a)
\end{align*}
\end{comment}
%Similarly,
Observe that $ ([x], [y]) \in \psi({\sim}a)$   iff $(x, y) \in \varphi({\sim}a)$ iff $(x, y) \in {\sim}\varphi(a)$.
Hence, 
\begin{align*}
([x], [y]) \in \psi({\sim}a) 
%& %\textnormal{ iff } (x, y) \in \varphi({\sim}a)\\
%& \textnormal{ iff } (x, y) \in {\sim}\varphi(a)\\
& \textnormal{ iff } (x, y) \in \varphi(a)^{c\smile}\mathbin{;}\alpha_\mathbf{A}\\
& \textnormal{ iff } (x,\alpha^{-1}_\mathbf{A}(y)) \in \varphi(a)^{c\smile}\\
& \textnormal{ iff } (\alpha^{-1}_\mathbf{A}(y), x) \in \varphi(a)^c\\
& \textnormal{ iff } (\alpha^{-1}_\mathbf{A}(y), x) \in E_\mathbf{A} \textnormal{ and } (\alpha^{-1}_\mathbf{A}(y), x) \notin \varphi(a)\\
& \textnormal{ iff } ([\alpha^{-1}_\mathbf{A}(y)], [x]) \in E_{p\mathbf{A}p} \textnormal{ and } ([\alpha^{-1}_\mathbf{A}(y)], [x]) \notin \psi(a)\\
& \textnormal{ iff } (\alpha^{-1}_{p\mathbf{A}p}([y]), [x]) \in E_{p\mathbf{A}p} \textnormal{ and } (\alpha^{-1}_{p\mathbf{A}p}([y]), [x]) \notin \psi(a)\\
%& \textnormal{ iff } (\alpha^{-1}_{p\mathbf{A}p}([y]), [x]) \in \psi(a)^c\\
& \textnormal{ iff }
([x], \alpha^{-1}_{p\mathbf{A}p}([y])) \in \psi(a)^{c\smile}\\
& \textnormal{ iff } ([x], [y]) \in \psi(a)^{c\smile}\mathbin{;}\alpha_{p\mathbf{A}p}\\
& \textnormal{ iff } ([x], [y]) \in {\sim}\psi(a)
\end{align*}
%\qed
%\end{proof}

%Finally, we show that $\psi$ preserves $\neg$. 

%\begin{lemma}\label{lem:psi_preserves_neg}
%The map $\psi: pAp \to \mathsf{Up}\left(E_{p\mathbf{A}p}, \preccurlyeq_{E_{p\mathbf{A}p}}\right)$ preserves $\neg$, i.e., for all $a \in pAp$, $\psi(\neg a) = \neg \psi(a)$. 
%\end{lemma}

%\begin{proof}
To show that $\psi$ preserves $\neg$, let $a \in pAp$ and $[x], [y] \in X_\mathbf{A}/{\equiv}$. Then we have $([x], [y]) \in \psi(\neg a)$ iff $(x, y) \in \varphi(\neg a)$ iff $(x, y) \in \neg \varphi(a)$ and hence,
\begin{align*}
& ([x], [y]) \in \psi(\neg a)\\
%& \textnormal{ iff } (x, y) \in \varphi(\neg a)\\
%& \textnormal{ iff } (x, y) \in \neg \varphi(a)\\
 \textnormal{ iff } &(x, y) \in \alpha_\mathbf{A}\mathbin{;}\beta_\mathbf{A}\mathbin{;}\varphi(a)^{c}\mathbin{;}\beta_\mathbf{A}\\
 \textnormal{ iff } &(\beta_\mathbf{A}(\alpha_\mathbf{A}(x)), \beta_\mathbf{A}(y)) \in \varphi(a)^{c}\\
 \textnormal{ iff } &(\beta_\mathbf{A}(\alpha_\mathbf{A}(x)), \beta_\mathbf{A}(y))  \in E_\mathbf{A} \textnormal{ and } (\beta_\mathbf{A}(\alpha_\mathbf{A}(x)), \beta_\mathbf{A}(y)) \notin \varphi(a)\\
 \textnormal{ iff } & ([\beta_\mathbf{A}(\alpha_\mathbf{A}(x))],[\beta_\mathbf{A}(y)]) \in E_{p\mathbf{A}p} \textnormal{ and } ([\beta_\mathbf{A}(\alpha_\mathbf{A}(x))],[\beta_\mathbf{A}(y)]) \notin \psi(a)\\
 \textnormal{ iff } &(\beta_{p\mathbf{A}p}(\alpha_{p\mathbf{A}p}([x]),\beta_{p\mathbf{A}p}([y])) \in E_{p\mathbf{A}p} \textnormal{ and } (\beta_{p\mathbf{A}p}(\alpha_{p\mathbf{A}p}([x]),\beta_{p\mathbf{A}p}([y]))  \notin \psi(a)\\
 \textnormal{ iff } & (\beta_{p\mathbf{A}p}(\alpha_{p\mathbf{A}p}([x]),\beta_{p\mathbf{A}p}([y])) \in \psi(a)^c\\
\textnormal{ iff } & ([x], [y]) \in \alpha_{p\mathbf{A}p}\mathbin{;}\beta_{p\mathbf{A}p}\mathbin{;}\psi(a)^{c}\mathbin{;}\beta_{p\mathbf{A}p}\\
\textnormal{ iff } & ([x], [y]) \in \neg\psi(a)
\end{align*}
\qed
\end{proof}

Combining Lemmas~\ref{lem:psi_injective_to_Up}, \ref{lem:psi_preserves_binary_ops}, and \ref{lem:psi_preserves_unary_ops} shows that the map $\psi$ is an embedding from $p \mathbf Ap$ into $\mathbf{Dq}\left(E_{p\mathbf{A}p}, \preccurlyeq_{E_{p\mathbf{A}p}}\right)$. 
It is clear from the construction that if $\mathbf{A}$ is a finitely representable DqRA, then $p\mathbf{A}p$ must also be finitely representable. 
We thus obtain the following result. 

\begin{theorem}\label{thm: pAp_representable}
Let $\mathbf{A} = \langle A,\wedge, \vee, \cdot, {\sim},{-},{\neg}, 1\rangle$ be a (finitely) representable DqRA, and let $p$ be a positive symmetric idempotent of $\mathbf{A}$. Then $p\mathbf{A}p$ is (finitely) representable. 
\end{theorem}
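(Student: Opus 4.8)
The plan is to invoke the equivalent formulation of representability in Definition~\ref{Definition:RDqRA}: it suffices to embed $p\mathbf{A}p$ into some member of $\mathsf{EDqRA}$, which places $p\mathbf{A}p$ in $\mathbb{IS}(\mathsf{EDqRA})$. Since Theorem~\ref{Theorem:RelativizationDqRA}, together with the remark that distributivity of $\langle A,\wedge,\vee\rangle$ passes to $\langle pAp,\wedge,\vee\rangle$, already ensures that $p\mathbf{A}p$ is a DqRA, the whole task reduces to producing one such embedding. All the ingredients have been prepared in the construction preceding the theorem, so the argument is largely a matter of assembling them in the correct order.

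I would begin by fixing a representation of $\mathbf{A}$, namely a poset $\mathbf{X}_\mathbf{A}$, an equivalence relation $E_\mathbf{A}$, an order automorphism $\alpha_\mathbf{A}$, a self-inverse dual order automorphism $\beta_\mathbf{A}$, and an embedding $\varphi\colon \mathbf{A}\hookrightarrow \mathbf{Dq}(E_\mathbf{A},\preccurlyeq_{E_\mathbf{A}})$. From these I form the quotient poset $(X_\mathbf{A}/{\equiv},\leqslant_{X_\mathbf{A}/{\equiv}})$ and the data $E_{p\mathbf{A}p}$, $\alpha_{p\mathbf{A}p}$, $\beta_{p\mathbf{A}p}$ exactly as defined above. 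By Lemma~\ref{lem:new_alpha_beta} these maps are well-defined and satisfy the hypotheses of Theorem~\ref{Theorem:Dq(E)}; consequently, as recorded in Theorem~\ref{thm:new_DqRA}, the algebra $\mathbf{Dq}(E_{p\mathbf{A}p},\preccurlyeq_{E_{p\mathbf{A}p}})$ is a DqRA, and by its very construction it lies in $\mathsf{EDqRA}$.

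Next I would assemble the three lemmas on $\psi$ into the desired embedding. Lemma~\ref{lem:psi_injective_to_Up} shows $\psi$ is an injective map from $pAp$ into $\mathsf{Up}(E_{p\mathbf{A}p},\preccurlyeq_{E_{p\mathbf{A}p}})$; Lemma~\ref{lem:psi_preserves_binary_ops}, together with the observation that $\psi(p)={\leqslant_{X_\mathbf{A}/{\equiv}}}$, shows $\psi$ preserves $\wedge$, $\vee$, $\cdot$ and the monoid identity; and Lemma~\ref{lem:psi_preserves_unary_ops} shows $\psi$ preserves ${\sim}$, $-$ and $\neg$. Hence $\psi$ is an embedding of $p\mathbf{A}p$ into $\mathbf{Dq}(E_{p\mathbf{A}p},\preccurlyeq_{E_{p\mathbf{A}p}})\in\mathsf{EDqRA}$, so $p\mathbf{A}p\in\mathbb{IS}(\mathsf{EDqRA})$ and is therefore representable.

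For the parenthetical finite case I would note that the representing poset is built on $X_\mathbf{A}/{\equiv}$, a quotient of $X_\mathbf{A}$; if $\mathbf{A}$ is finitely representable, $X_\mathbf{A}$ may be taken finite, so $X_\mathbf{A}/{\equiv}$ is finite and $p\mathbf{A}p$ is finitely representable. For this theorem itself the steps are purely organisational; the genuine difficulty sits upstream, in the verification (in Lemma~\ref{lem:psi_preserves_unary_ops}) that $\psi$ commutes with the three negations, since the relative complement $(\cdot)^c$ and converse must be shown to descend compatibly through the quotient. That work having already been carried out, the present proof is just a clean citation of the preceding results.
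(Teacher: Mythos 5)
Your proposal is correct and follows essentially the same route as the paper, which likewise obtains the theorem by combining Lemmas~\ref{lem:psi_injective_to_Up}, \ref{lem:psi_preserves_binary_ops} and \ref{lem:psi_preserves_unary_ops} (together with the observation that $\psi(p)={\leqslant_{X_\mathbf{A}/{\equiv}}}$) to conclude that $\psi$ embeds $p\mathbf{A}p$ into $\mathbf{Dq}\left(E_{p\mathbf{A}p}, \preccurlyeq_{E_{p\mathbf{A}p}}\right)$, and notes that finiteness of $X_\mathbf{A}$ passes to the quotient. Your explicit appeal to the $\mathbb{IS}(\mathsf{EDqRA})$ formulation of Definition~\ref{Definition:RDqRA} just makes precise what the paper leaves implicit.
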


%It is clear from the construction that if $\mathbf{A}$ is a finitely representable DqRA, then $p\mathbf{A}p$ must also be finitely representable. 

%\newpage 
%\includegraphics[scale=0.12,angle=270]{Representability1}

%\newpage 
%\includegraphics[scale=0.12,angle=270]{Representability2.jpg}

%\includegraphics[scale=0.12,angle=270]{Representability3.jpg}

%\includegraphics[scale=0.12,angle=270]{Representability4.jpg}

Returning to 
Example~\ref{Example:Consturction1}, the theorem above shows that the posets with their respective $E$, $\alpha$ and $\beta$ in Figure~\ref{fig:Construction1} will give representations of the contractions of $\mathbf{A}=D^6_{3,5,2}$. Note that the algebras $\mathbf{Dq}(\mathbf{X}/{\equiv_a}, \preccurlyeq_{\equiv_a})$ and 
$\mathbf{Dq}(\mathbf{X}/{\equiv_b}, \preccurlyeq_{\equiv_b})$ are not isomorphic because $\beta_{a\A a} \neq \beta_{b \A b}$, but they both contain a subalgebra isomorphic to $\mathbf{S}_3$ (since   $\mathbf{S}_3 \cong a\mathbf{A} a \cong b\mathbf{A} b$). 

%\begin{example}\label{Example:Construction2}    
%\end{example}

\begin{comment}
\section{Positive symmetric idempotents in equivalence DqRAs}\label{sec:PSI-in-EDqRA}

\begin{lemma}\label{lem:RsubseteqE}
Let $E$ be an equivalence relation on a set $X$ and $R \subseteq E$. For $(a,b)\in R$  and $x \in X$, we have $a \in [x]$ if and only if $b \in [x]$. 
\end{lemma}
\begin{proof}
NB: I think we can leave this proof out in the final version, but I'm including it for now so that you can check.\\

Let $(a,b) \in R$ and $a \in [x]$. Since $R \subseteq E$  we have $(a,b) \in E$ and since $(x,a) \in E$ we get $(x,b) \in E$, i.e. $b \in [x]$. If $(a,b) \in R$ and $b \in [x]$ then we use $(a,b) \in E$ and $(b,x) \in E$ to get $a \in [x]$.    
\end{proof}

\begin{lemma}
If $x \in X$ then $[x]^2 \cup {\leqslant} \in \mathsf{Up(\mathbf{E})}$.
\end{lemma}

\begin{lemma}
If $x \in X$, then $[x]^2\cup {\leqslant}$ is idempotent and symmetric. 
\end{lemma}
\begin{proof}

\end{proof}

\end{comment}

%\section{Examples}\label{sec:examples}

\begin{comment}
Something to check out: in these slides (\url{https://ramics20.lis-lab.fr/slides/SlidesLevet.pdf}) they mention that the Point Algebra is a finite relation algebra that only has an infinite representation. We should (a) think about this in the DqRA set-up and (b) see if there is something interesting that can be done with $bAb$. 
\end{comment}

%\centerline{\includegraphics[scale=0.15]{bAb example.jpg}}

%\centerline{\includegraphics[scale=0.15]{bAb example.jpg}}
%%%%%%%%%%%%%%%%%%%%%%%%%%%%%%%%%%%%%%%%%%%%%%%%%%%%%%%%%%%%%%%%%%%%

%\newpage 

\section{DqRAs that are not finitely representable}\label{sec:not-finite}

It has been shown that under certain conditions a DqRA cannot be represented using a finite poset $(X,\leqslant)$. 
\begin{theorem}{\normalfont \cite[Theorem 5.12]{RDqRA25}}\label{Theorem:NoFiniteRep}
Let $\A = \langle A, \wedge, \vee, \cdot, \sim, -,\neg,1\rangle$ be a DqRA. If there exists $a\in A$ such that $0<a< 1$ and $a^2 \leqslant 0$, then
$\A$ is not finitely representable.
%\begin{enumerate}[label=\textup{(\roman*)}]
%\item $\A$ is not representable over $(X,\leqslant)$ and $E$ with $\alpha$ having finite order;
 %   \item  $\A$ is not finitely representable.
%\end{enumerate}
\end{theorem}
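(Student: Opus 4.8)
The plan is to argue by contradiction. Suppose $\A$ is finitely representable, so that there is a \emph{finite} poset $(X,\leqslant)$, an equivalence relation $E$ with ${\leqslant}\subseteq E$, an order automorphism $\alpha$ and a self-inverse dual order automorphism $\beta$ (all contained in $E$), together with an embedding $\varphi:\A\hookrightarrow \mathbf{Dq}(\mathbf{E})$. Write $R:=\varphi(a)\in\mathsf{Up}(\mathbf{E})$. First I would translate the three hypotheses into relational statements about $R$. Since $\varphi$ is an order-embedding that preserves $1$ and $0={\sim}1$, the condition $a<1$ gives $R\subsetneq{\leqslant}$ (in particular $R\subseteq{\leqslant}$), the condition $0<a$ gives $\varphi(0)\subsetneq R$, and $a^2\leqslant 0$ gives $R\mathbin{;}R\subseteq\varphi(0)$, where $\varphi(0)={\sim}{\leqslant}={\leqslant}^{c\smile}\mathbin{;}\alpha$.

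Next I would pick a witness $(x,y)\in R\setminus\varphi(0)$, which exists because $\varphi(0)\subsetneq R$, and extract two facts from it. From $R\subseteq{\leqslant}$ we get $x\leqslant y$. To use $(x,y)\notin\varphi(0)$ I would unwind the definition: $(x,y)\in{\leqslant}^{c\smile}\mathbin{;}\alpha$ is equivalent to $(\alpha^{-1}(y),x)\in{\leqslant}^{c}$, and since $\alpha^{-1}\subseteq E$ forces $(\alpha^{-1}(y),x)\in E$ automatically, this reduces to $\alpha^{-1}(y)\nleqslant x$. Hence $(x,y)\notin\varphi(0)$ says precisely that $\alpha^{-1}(y)\leqslant x$. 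Combining the two facts yields $\alpha^{-1}(y)\leqslant x\leqslant y$, and in particular $\alpha^{-1}(y)\leqslant y$.

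This is where finiteness enters, and it is the crux of the argument. Because $\alpha$ is an order automorphism of a \emph{finite} poset, it is a permutation of finite order, so it cannot strictly decrease any point: if $\alpha^{-1}(y)<y$, then iterating the order-preserving map $\alpha^{-1}$ produces a strictly descending sequence $y>\alpha^{-1}(y)>\alpha^{-2}(y)>\cdots$ which, returning to $y$ after finitely many steps, yields $y<y$. Thus $\alpha^{-1}(y)\leqslant y$ forces $\alpha^{-1}(y)=y$, and then $y=\alpha^{-1}(y)\leqslant x\leqslant y$ collapses to $x=y$. Consequently the witness is a loop $(x,x)\in R$, so $(x,x)\in R\mathbin{;}R\subseteq\varphi(0)$, contradicting $(x,x)=(x,y)\notin\varphi(0)$.

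Since no witness $(x,y)\in R\setminus\varphi(0)$ can exist, we must have $R=\varphi(0)$, contradicting $\varphi(0)\subsetneq R$; hence no finite representation exists. I expect the only delicate points to be the bookkeeping in the second step (checking $(\alpha^{-1}(y),x)\in E$, so that membership in $\varphi(0)$ is governed solely by $\alpha^{-1}(y)\nleqslant x$) and the finiteness argument forbidding strict $\alpha$-descent; the passage from the algebraic inequalities to the relational containments via $\varphi$ is routine. Note that this approach uses neither $\beta$ nor the de Morgan structure, relying only on $\alpha$, $\leqslant$ and the interpretation of $0$.
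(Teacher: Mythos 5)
Your proof is correct, and it follows essentially the same strategy that the paper uses for the more general Theorem~\ref{Theorem:NewNoFiniteRep} specialised to $p=1$: pick a witness $(x,y)\in\varphi(a)\setminus\varphi(0)$, unwind $(x,y)\notin\varphi(0)={\leqslant}^{c\smile}\mathbin{;}\alpha$ into the order condition $\alpha^{-1}(y)\leqslant x$, exploit the finite order of $\alpha$ to produce a loop $(x,x)\in\varphi(a)$, and contradict $a^2\leqslant 0$. The only (harmless) divergence is in how the loop is obtained: the paper iterates the composition $\varphi(b)\mathbin{;}\varphi(p)$ around the $\alpha$-orbit of $x$, whereas you collapse $\alpha^{-1}(y)\leqslant x\leqslant y$ to $x=y$ directly using the fact that an order automorphism of a finite poset admits no strict descent --- a small simplification available precisely because $\varphi(p)={\leqslant}$ when $p=1$.
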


\begin{example}
Each of the DqRAs depicted in Figure~\ref{Figure:DqRAChainsNotFinRep}
contains an element $a$ (also labelled $a$ in their depictions) that satisfies the
criteria in Theorem~\ref{Theorem:NoFiniteRep}, i.e.,
$0<a<1$ and $a^2\leqslant 0$.  Hence, these
DqRAs are not finitely representable.
\end{example}

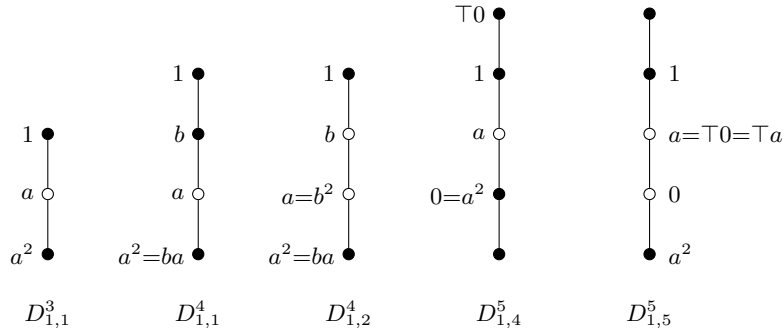
\begin{figure}
    \centering
\begin{tikzpicture}
    %ac scale made smaller
\begin{scope}[xshift=-4cm,scale=0.8]
% Elements
\node[draw,circle,inner sep=1.5pt,fill] (bot) at (0,0) {};
\node[draw,circle,inner sep=1.5pt] (a) at (0,1) {};
\node[draw,circle,inner sep=1.5pt,fill] (top) at (0,2) {};
% Order
\path [-] (bot) edge node {} (a);
\path [-] (a) edge node {} (top);
% Labels
\node[label,anchor=east,xshift=1pt] at (top) {$1$};
\node[label,anchor=east,xshift=1pt] at (a) {$a$};
\node[label,anchor=east,xshift=1pt] at (bot) {$a^2$};

\node[] at (0,-1) {$D^3_{1,1}$};
\end{scope}

\begin{scope}[xshift=-2cm,scale=0.8]
% Elements
\node[draw,circle,inner sep=1.5pt,fill] (bot) at (0,0) {};
\node[draw,circle,inner sep=1.5pt] (b) at (0,1) {};
\node[draw,circle,inner sep=1.5pt,fill] (a) at (0,2) {};
\node[draw,circle,inner sep=1.5pt,fill] (top) at (0,3) {};
% Order
\path [-] (bot) edge node {} (b);
\path [-] (b) edge node {} (a);
\path [-] (a) edge node {} (top);
% Labels
\node[label,anchor=east,xshift=1pt] at (top) {$1$};
\node[label,anchor=east,xshift=1pt] at (a) {$b$};
\node[label,anchor=east,xshift=1pt] at (b) {$a$};
\node[label,anchor=east,xshift=1pt] at (bot) {$a^2{=}ba$};

\node[] at (0,-1) {$D^4_{1,1}$};
\end{scope}

\begin{scope}[xshift=0cm,scale=0.8]
% Elements
\node[draw,circle,inner sep=1.5pt,fill] (bot) at (0,0) {};
\node[draw,circle,inner sep=1.5pt] (b) at (0,1) {};
\node[draw,circle,inner sep=1.5pt] (a) at (0,2) {};
\node[draw,circle,inner sep=1.5pt,fill] (top) at (0,3) {};
% Order
\path [-] (bot) edge node {} (b);
\path [-] (b) edge node {} (a);
\path [-] (a) edge node {} (top);
% Labels
\node[label,anchor=east,xshift=1pt] at (top) {$1$};
\node[label,anchor=east,xshift=1pt] at (a) {$b$};
\node[label,anchor=east,xshift=1pt] at (b) {$a{=}b^2$};
\node[label,anchor=east,xshift=1pt] at (bot) {$a^2{=}ba$};

\node[] at (0,-1) {$D^4_{1,2}$};
\end{scope}

\begin{scope}[xshift=2cm,scale=0.8]
% Elements
\node[draw,circle,inner sep=1.5pt,fill] (bot) at (0,0) {};
\node[draw,circle,inner sep=1.5pt,fill] (0) at (0,1) {};
\node[draw,circle,inner sep=1.5pt] (a) at (0,2) {};
\node[draw,circle,inner sep=1.5pt,fill] (1) at (0,3) {};
\node[draw,circle,inner sep=1.5pt,fill] (top) at (0,4) {};
% Order
\path [-] (bot) edge node {} (0);
\path [-] (0) edge node {} (a);
\path [-] (a) edge node {} (1);
\path [-] (1) edge node {} (top);
% Labels
\node[label,anchor=east,xshift=1pt] at (top) {$\top0$};
\node[label,anchor=east,xshift=1pt] at (1) {$1$};
\node[label,anchor=east,xshift=1pt] at (a) {$a$};
\node[label,anchor=east,xshift=1pt] at (0) {$0{=}a^2$};
%\node[label,anchor=east,xshift=1pt] at (bot) {};

\node[] at (0,-1) {$D^5_{1,4}$};
\end{scope}

\begin{scope}[xshift=4cm,scale=0.8]
% Elements
\node[draw,circle,inner sep=1.5pt,fill] (bot) at (0,0) {};
\node[draw,circle,inner sep=1.5pt] (0) at (0,1) {};
\node[draw,circle,inner sep=1.5pt] (a) at (0,2) {};
\node[draw,circle,inner sep=1.5pt,fill] (1) at (0,3) {};
\node[draw,circle,inner sep=1.5pt,fill] (top) at (0,4) {};
% Order
\path [-] (bot) edge node {} (0);
\path [-] (0) edge node {} (a);
\path [-] (a) edge node {} (1);
\path [-] (1) edge node {} (top);
% Labels
\node[label,anchor=west,xshift=1pt] at (1) {$1$};
\node[label,anchor=west,xshift=1pt] at (a) {$a{=}\top0{=}\top a$};
\node[label,anchor=west,xshift=1pt] at (0) {$0$};
\node[label,anchor=west,xshift=1pt] at (bot) {$a^2$};

\node[] at (0,-1) {$D^5_{1,5}$};
\end{scope}

\end{tikzpicture}
    \caption{DqRAs that are not finitely representable due to Theorem~\ref{Theorem:NoFiniteRep}.}
    \label{Figure:DqRAChainsNotFinRep}
\end{figure}

The contrapositive of Theorem~\ref{thm: pAp_representable} states that if 
$\mathbf{A}=\langle A,\wedge,\vee,\cdot,
\sim, {-},\neg,1\rangle$ is a DqRA, $p$ a
positive symmetric idempotent of $\mathbf{A}$ and $p\mathbf{A}p$ is not
finitely representable, then $\mathbf{A}$ 
is not finitely representable. This can be used in combination with Theorem~\ref{Theorem:NoFiniteRep} 
to find DqRAs that are not finitely
representable even though they do not satisfy the conditions laid out in Theorem~\ref{Theorem:NoFiniteRep}.

\begin{example}
Table~\ref{Table:NotFinRep} lists
small DqRAs that do not meet the
conditions of Theorem~\ref{Theorem:NoFiniteRep}.  For
each of these DqRAs there is
no element $a$ such that $0<a<1$ since
$0\nleqslant 1$. 
However, as indicated in the
table, they each have
at least one positive symmetric
idempotent element $p$ such that 
%its
the 
contraction 
$p\A p$
is a DqRA that
is not finitely representable due to
Theorem~\ref{Theorem:NoFiniteRep}. 
Hence, by Theorem~\ref{thm: pAp_representable} none of the DqRAs 
in Table~\ref{Table:NotFinRep} are finitely 
representable.
\end{example}

    \begin{table}[ht!]
    \centering
    \begin{tblr}{|Q|Q[2.5cm,valign=h]|Q|}\hline
    The DqRA $\mathbf{A}$
    &Positive symmetric idempotent
    &$p\mathbf{A}p$\\ \hline    
    \begin{tikzpicture}
           \begin{scope}[xshift=4.3cm,scale=0.55]
    % Elements
    \node[draw,circle,inner sep=1.5pt,fill] (bot) at (0,0) {};
    \node[draw,circle,inner sep=1.5pt,fill] (1) at (-1,1) {};
    \node[draw,circle,inner sep=1.5pt] (a) at (1,1) {};
    \node[draw,circle,inner sep=1.5pt,fill] (top) at (0,2) {};
    % Order
    \path [-] (bot) edge node {} (1);
    \path [-] (bot) edge node {} (a);
    \path [-] (1) edge node {} (top);
    \path [-] (a) edge node {} (top);
    % Labels
    \node[label,anchor=east,xshift=1pt] at (top) {$\top$};
    \node[label,anchor=east,xshift=1pt] at (1) {$1{=}0$};
    \node[label,anchor=west,xshift=1pt] at (a) {$a{=}\top a$};
    \node[label,anchor=east,xshift=1pt] at (bot) {$\bot{=}a^2$};
    \node[] at (0,-1) {$D^4_{3,1}$};
    \end{scope} 
    \end{tikzpicture}
           &
    $p=\top$
        &
    \begin{tikzpicture}
        \begin{scope}[xshift=0cm,scale=0.55]
    % Elements
    \node[draw,circle,inner sep=1.5pt,fill] (bot) at (0,0) {};
    \node[draw,circle,inner sep=1.5pt] (a) at (0,1) {};
    \node[draw,circle,inner sep=1.5pt,fill] (top) at (0,2) {};
    % Order
    \path [-] (bot) edge node {} (a);
    \path [-] (a) edge node {} (top);
    % Labels
    \node[label,anchor=east,xshift=1pt] at (top) {$1_p=\top$};
    \node[label,anchor=east,xshift=1pt] at (a) {$a$};
    \node[label,anchor=east,xshift=1pt] at (bot) {$0_p=a^2$};
    
    \node[] at (0,-1) {$pAp\cong D^3_{1,1}$};
    \end{scope}
    \end{tikzpicture}\\
    \hline
    \begin{tikzpicture}
           \begin{scope}[xshift=0cm,scale=0.55]
    % Elements
    \node[draw,circle,inner sep=1.5pt,fill] (bot) at (0,0) {};
    \node[draw,circle,inner sep=1.5pt,fill] (c) at (0,1) {};
    \node[draw,circle,inner sep=1.5pt,fill] (1) at (-1,2) {};
    \node[draw,circle,inner sep=1.5pt] (b) at (1,2) {};
     \node[draw,circle,inner sep=1.5pt,fill] (a) at (0,3) {};
    \node[draw,circle,inner sep=1.5pt,fill] (top) at (0,4) {};
    % Order
    \path [-] (bot) edge node {} (c);
    \path [-] (c) edge node {} (1);
    \path [-] (c) edge node {} (b);
    \path [-] (1) edge node {} (a);
    \path [-] (b) edge node {} (a);
    \path [-] (a) edge node {} (top);
    % Labels
    \node[label,anchor=east,xshift=1pt] at (top) {$\top c$};
    \node[label,anchor=east,xshift=1pt] at (a) {$a$};
    \node[label,anchor=east,xshift=1pt] at (1) {$1{=}0$};
    \node[label,anchor=west,xshift=1pt] at (b) {$b{=}ab$};
    \node[label,anchor=east,xshift=1pt] at (c) {$c{=}bc{=}b^2{=}ac$};
    \node[] at (0,-1) {$D^6_{3,2}$};
    \end{scope} 
    \end{tikzpicture}
        &
    $p=a$
        &
    \begin{tikzpicture}
        \begin{scope}[xshift=0cm,scale=0.55]
            % Elements
    \node[draw,circle,inner sep=1.5pt,fill] (bot) at (0,0) {};
    \node[draw,circle,inner sep=1.5pt,fill] (0) at (0,1) {};
    \node[draw,circle,inner sep=1.5pt] (a) at (0,2) {};
    \node[draw,circle,inner sep=1.5pt,fill] (1) at (0,3) {};
    \node[draw,circle,inner sep=1.5pt,fill] (top) at (0,4) {};
    % Order
    \path [-] (bot) edge node {} (0);
    \path [-] (0) edge node {} (a);
    \path [-] (a) edge node {} (1);
    \path [-] (1) edge node {} (top);
    % Labels
    \node[label,anchor=east,xshift=1pt] at (top) {$\top c=\top$};
    \node[label,anchor=east,xshift=1pt] at (1) {$1_p=a$};
    \node[label,anchor=east,xshift=1pt] at (a) {$b$};
    \node[label,anchor=east,xshift=1pt] at (0) {$0_p{=}b^2=c$};
    %\node[label,anchor=east,xshift=1pt] at (bot) {};
    
    \node[] at (0,-1) {$pAp\cong D^5_{1,4}$};
        \end{scope}
    \end{tikzpicture}\\
        \hline
    \begin{tikzpicture}
           \begin{scope}[xshift=0cm,scale=0.55]
    % Elements
    \node[draw,circle,inner sep=1.5pt,fill] (bot) at (0,0) {};
    \node[draw,circle,inner sep=1.5pt] (c) at (0,1) {};
    \node[draw,circle,inner sep=1.5pt,fill] (1) at (-1,2) {};
    \node[draw,circle,inner sep=1.5pt] (b) at (1,2) {};
     \node[draw,circle,inner sep=1.5pt,fill] (a) at (0,3) {};
    \node[draw,circle,inner sep=1.5pt,fill] (top) at (0,4) {};
    % Order
    \path [-] (bot) edge node {} (c);
    \path [-] (c) edge node {} (1);
    \path [-] (c) edge node {} (b);
    \path [-] (1) edge node {} (a);
    \path [-] (b) edge node {} (a);
    \path [-] (a) edge node {} (top);
    % Labels
    \node[label,anchor=east,xshift=1pt] at (a) {$a$};
    \node[label,anchor=east,xshift=1pt] at (1) {$1{=}0$};
    \node[label,anchor=west,xshift=1pt] at (b) {$b{=}ab{=}\top a{=}\top b$};
    \node[label,anchor=east,xshift=1pt] at (c) {$c{=}ac$};
    \node[label,anchor=east,xshift=1pt] at (bot) {$b^2$};
    \node[] at (0,-1) {$D^6_{3,4}$};
    \node[] at (0,4.3) {$\;$};
    \end{scope} 
    \end{tikzpicture}
        &
    $p=a$ (or use $p=\top$ to
    obtain $D^3_{1,1}$)
        &
    \begin{tikzpicture}
        \begin{scope}[xshift=4cm,scale=0.55]
    % Elements
    \node[draw,circle,inner sep=1.5pt,fill] (bot) at (0,0) {};
    \node[draw,circle,inner sep=1.5pt] (0) at (0,1) {};
    \node[draw,circle,inner sep=1.5pt] (a) at (0,2) {};
    \node[draw,circle,inner sep=1.5pt,fill] (1) at (0,3) {};
    \node[draw,circle,inner sep=1.5pt,fill] (top) at (0,4) {};
    % Order
    \path [-] (bot) edge node {} (0);
    \path [-] (0) edge node {} (a);
    \path [-] (a) edge node {} (1);
    \path [-] (1) edge node {} (top);
    % Labels
    \node[label,anchor=east,xshift=1pt] at (1) {$1_p=a$};
    \node[label,anchor=east,xshift=1pt] at (a) {$b$};
    \node[label,anchor=east,xshift=1pt] at (0) {$0_p=c$};
    \node[label,anchor=east,xshift=1pt] at (bot) {$b^2=\bot$};
    
    \node[] at (0,-1) {$pAp\cong D^5_{1,5}$};
    \end{scope}
    \end{tikzpicture}\\
    \hline
        \begin{tikzpicture}
           \begin{scope}[xshift=0cm,scale=0.55]
    % Elements
    \node[draw,circle,inner sep=1.5pt,fill] (bot) at (-1,0) {};
    \node[draw,circle,inner sep=1.5pt] (0) at (-2,1) {};
    \node[draw,circle,inner sep=1.5pt] (b) at (0,1) {};
    \node[draw,circle,inner sep=1.5pt,fill] (a) at (-1,2) {};
     \node[draw,circle,inner sep=1.5pt, fill] (1) at (1,2) {};
    \node[draw,circle,inner sep=1.5pt,fill] (top) at (0,3) {};
    % Order
    \path [-] (bot) edge node {} (0);
    \path [-] (bot) edge node {} (b);
    \path [-] (0) edge node {} (a);
    \path [-] (b) edge node {} (a);
    \path [-] (b) edge node {} (1);
    \path [-] (1) edge node {} (top);
    \path [-] (a) edge node {} (top);   
    % Labels
    \node[label,anchor=east,xshift=1pt] at (a) {$a{=}0^2{=}\top a$};
    \node[label,anchor=west,xshift=1pt] at (1) {$1$};
    \node[label,anchor=east,xshift=1pt] at (0) {$0$};
    \node[label,anchor=west,xshift=1pt] at (b) {$b{=}\top b$};
    \node[label,anchor=east,xshift=1pt] at (bot) {$ab$};
    \node[] at (0,-1) {$D^6_{4,3}$};
    \end{scope} 
    \end{tikzpicture}
        &
    $p=\top$
        &
        \begin{tikzpicture}
    \begin{scope}[xshift=0cm,scale=0.55]
    % Elements
    \node[draw,circle,inner sep=1.5pt,fill] (bot) at (0,0) {};
    \node[draw,circle,inner sep=1.5pt] (b) at (0,1) {};
    \node[draw,circle,inner sep=1.5pt,fill] (a) at (0,2) {};
    \node[draw,circle,inner sep=1.5pt,fill] (top) at (0,3) {};
    % Order
    \path [-] (bot) edge node {} (b);
    \path [-] (b) edge node {} (a);
    \path [-] (a) edge node {} (top);
    % Labels
    \node[label,anchor=east,xshift=1pt] at (top) {$1_p=\top$};
    \node[label,anchor=east,xshift=1pt] at (a) {$a$};
    \node[label,anchor=east,xshift=1pt] at (b) {$b$};
    \node[label,anchor=east,xshift=1pt] at (bot) {$0_p=b^2=ab$};
    
    \node[] at (0,-1) {$pAp\cong D^4_{1,1}$};
    \end{scope}
    \end{tikzpicture}\\
    \hline  
    \begin{tikzpicture}
           \begin{scope}[xshift=0cm,scale=0.55]
    % Elements
    \node[draw,circle,inner sep=1.5pt,fill] (bot) at (-1,0) {};
    \node[draw,circle,inner sep=1.5pt] (0) at (-2,1) {};
    \node[draw,circle,inner sep=1.5pt] (b) at (0,1) {};
    \node[draw,circle,inner sep=1.5pt] (a) at (-1,2) {};
     \node[draw,circle,inner sep=1.5pt,fill] (1) at (1,2) {};
    \node[draw,circle,inner sep=1.5pt,fill] (top) at (0,3) {};
    % Order
    \path [-] (bot) edge node {} (0);
    \path [-] (bot) edge node {} (b);
    \path [-] (0) edge node {} (a);
    \path [-] (b) edge node {} (a);
    \path [-] (b) edge node {} (1);
    \path [-] (1) edge node {} (top);
    \path [-] (a) edge node {} (top);   
    % Labels
    \node[label,anchor=east,xshift=1pt] at (a) {$a{=}\top 0{=}\top a$};
    \node[label,anchor=west,xshift=1pt] at (1) {$1$};
    \node[label,anchor=east,xshift=1pt] at (0) {$0$};
    \node[label,anchor=west,xshift=1pt] at (b) {$b{=}0^2{=}a^2{=}\top b$};
    \node[label,anchor=east,xshift=1pt] at (bot) {$ab$};
    \node[] at (0,-1) {$D^6_{4,4}$};
    \end{scope} 
    \end{tikzpicture}
        &
    $p=\top$
        &
        \begin{tikzpicture}
    \begin{scope}[xshift=0cm,scale=0.55]
    % Elements
    \node[draw,circle,inner sep=1.5pt,fill] (bot) at (0,0) {};
    \node[draw,circle,inner sep=1.5pt] (b) at (0,1) {};
    \node[draw,circle,inner sep=1.5pt] (a) at (0,2) {};
    \node[draw,circle,inner sep=1.5pt,fill] (top) at (0,3) {};
    % Order
    \path [-] (bot) edge node {} (b);
    \path [-] (b) edge node {} (a);
    \path [-] (a) edge node {} (top);
    % Labels
    \node[label,anchor=east,xshift=1pt] at (top) {$1_p=\top$};
    \node[label,anchor=east,xshift=1pt] at (a) {$a$};
    \node[label,anchor=east,xshift=1pt] at (b) {$a^2{=}b$};
    \node[label,anchor=east,xshift=1pt] at (bot) {$b^2{=}ab$};
    
    \node[] at (0,-1) {$pAp\cong D^4_{1,2}$};
    \end{scope}
    \end{tikzpicture}\\
    \hline
    \end{tblr}    
    \caption{DqRAs not finitely representable due to their contractions.}
    \label{Table:NotFinRep}
    \end{table}

Using the examples from Table~\ref{Table:NotFinRep}, we are able to formulate and prove a stronger theorem which gives more general conditions under which a DqRA $\mathbf{A}$ will not be finitely representable.  

\begin{theorem}\label{Theorem:NewNoFiniteRep}
Let $\mathbf{A} = \langle A, \wedge, \vee, \cdot, \sim, -,\neg,1\rangle$ be a DqRA. If $p$ is a positive symmetric idempotent of $\mathbf A$ and there exists $b\in A$ such that $p\,b=b=b\,p$, $-p< b < p$ and $b^2\leqslant -p$,
then $A$ is not finitely representable.
\end{theorem}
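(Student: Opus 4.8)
The plan is to reduce the statement to Theorem~\ref{Theorem:NoFiniteRep} applied to the contraction $p\mathbf{A}p$, and then to invoke the contrapositive of Theorem~\ref{thm: pAp_representable}. The essential idea is that, although $\mathbf{A}$ itself need not contain an element $a$ with $0 < a < 1$ and $a^2\leqslant 0$ (typically $0\nleqslant 1$ in $\mathbf{A}$), the element $b$ becomes exactly such a witness once it is measured against the identity and zero of the \emph{contraction} rather than those of $\mathbf{A}$.

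First I would note that, since $pb = b = bp$, Lemma~\ref{lem:TFAE-pAp} gives $b \in pAp$, so $b$ is an element of the DqRA $p\mathbf{A}p$ furnished by Theorem~\ref{Theorem:RelativizationDqRA}. The crucial bookkeeping step is to identify the constants of $p\mathbf{A}p$: its monoid identity is $1_p = p$, while its zero is $0_p = {\sim}p$. The latter holds because $p\mathbf{A}p$ is itself a qRA, so its distinguished constant is the $\sim$-image of its own identity; and since $\sim$, $-$, $\neg$ act on $pAp$ exactly as in $\mathbf{A}$ and $p$ is symmetric, we obtain $0_p = {\sim}p = -p = \neg p$. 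Consequently the hypothesis $-p < b < p$ reads precisely as $0_p < b < 1_p$, and because $pAp$ is a sublattice of $A$ by Lemma~\ref{lem:pAp-sublat}, these inequalities remain strict in the order of $p\mathbf{A}p$.

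Next I would check the multiplicative condition. The monoid operation of $p\mathbf{A}p$ is the restriction of $\cdot$ (Lemma~\ref{lem:pAp-sublat}), so the square of $b$ computed in the contraction coincides with $b^2$ computed in $\mathbf{A}$; the hypothesis $b^2 \leqslant -p$ therefore becomes $b^2 \leqslant 0_p$ in $p\mathbf{A}p$. Thus $b$ satisfies both $0_p < b < 1_p$ and $b^2 \leqslant 0_p$, so Theorem~\ref{Theorem:NoFiniteRep}, applied to the DqRA $p\mathbf{A}p$, shows that $p\mathbf{A}p$ is not finitely representable. The contrapositive of Theorem~\ref{thm: pAp_representable} then yields that $\mathbf{A}$ is not finitely representable, as required.

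The one point that genuinely requires care is the identification $0_p = -p$: one must be certain that the constant $0$ of the InFL-reduct of $p\mathbf{A}p$ is really ${\sim}p$, and not some vestige of the zero of $\mathbf{A}$. This is settled by observing that in any qRA the zero equals the negation of the identity, combined with the symmetry of $p$ and the invariance of the unary operations under the contraction. Once this identification is secured, the theorem is a direct transfer of Theorem~\ref{Theorem:NoFiniteRep} across the contraction, with no further computation needed.
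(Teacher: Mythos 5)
Your proof is correct, but it takes a genuinely different route from the paper's. You reduce the statement to Theorem~\ref{Theorem:NoFiniteRep} applied to the contraction $p\mathbf{A}p$ and then invoke the contrapositive of Theorem~\ref{thm: pAp_representable}; the bookkeeping you flag as the delicate point --- that $b\in pAp$ via Lemma~\ref{lem:TFAE-pAp}, that $1_p=p$ and $0_p={\sim}p=-p=\neg p$ because the zero of a qRA is the linear negation of its identity and the unary operations of $p\mathbf{A}p$ are restrictions of those of $\mathbf{A}$, and that the order and product on $pAp$ are inherited --- is exactly right, and it exposes the hypotheses of the theorem as nothing other than the hypotheses of Theorem~\ref{Theorem:NoFiniteRep} transported to $p\mathbf{A}p$ (indeed, this is how the examples in Table~\ref{Table:NotFinRep} are organised). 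The paper instead gives a direct, self-contained argument at the level of a putative finite representation: it chooses a pair $(x,y)\in\varphi(b)$ with $(x,y)\notin\varphi(-p)$, unwinds $-\varphi(p)=\alpha\mathbin{;}\varphi(p)^{c\smile}$ to obtain $(y,\alpha(x))\in\varphi(p)$, composes with $\varphi(b)\mathbin{;}\varphi(p)=\varphi(bp)=\varphi(b)$ and iterates, using the finite order of $\alpha$ to conclude $(x,x)\in\varphi(b)$ and hence $(x,y)\in\varphi(b^2)\subseteq\varphi(-p)$, a contradiction; this is essentially the argument behind Theorem~\ref{Theorem:NoFiniteRep} re-run inside the relativized setting. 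Your modular argument is shorter and makes the provenance of the hypotheses transparent, at the cost of relying on the full strength of Theorem~\ref{thm: pAp_representable}; the paper's computation is longer but independent of that machinery and displays the combinatorial obstruction explicitly.
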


\begin{proof}
Let $\mathbf{A}$ be a distributive quasi relation algebra with $p\in A$ a positive
symmetric idempotent and $b\in A$ such that
$b\,p=b=p\,b$, 
$-p< b < p$ and $b^2\leqslant -p$.
Suppose to the contrary that $\mathbf{A}$ is
representable over $(X,\leqslant)$ and $E$ with $|\alpha|=n\in\mathbb{Z}^+$, i.e., 
$\alpha^n(x)=x$ for all $x \in X$.  Then
there exists an embedding $\varphi:A\to  \mathsf{Up}(\mathbf{E})$ such that 
%$\varphi(0)=\alpha;(\leqslant^c)^\smile$
%and 
$\varphi(1)={\leqslant}$.

Moreover, there exists $(x,y)\in E$ such
that $(x,y)\in\varphi(b)$ but $(x,y)\notin\varphi(-p)$.
Therefore, 
    
$\begin{array}{lll}
(x,y)\notin -\varphi(p) &\textnormal{ iff }
(x,y)\notin \alpha;\left(\varphi(p)^c\right)^\smile\\
&\textnormal{ iff } (x,y)\notin \left(\alpha;\varphi(p)^\smile\right)^c\\
&\textnormal{ iff } (x,y)\in \alpha;\varphi(p)^\smile\\
&\textnormal{ iff } (\alpha(x),y)\in\varphi(p)^\smile\\
&\textnormal{ iff } (y,\alpha(x))\in\varphi(p).\\
    \end{array}$

As $(x,y)\in\varphi(b)$ and $(y,\alpha(x))\in\varphi(p)$, it
follows that $(x,\alpha(x))\in\varphi(b);\varphi(p)
=\varphi(b\,p)=\varphi(b)$ since $b\,p=b$
by assumption. 
    
Next, %by %Lemma~\ref{lem:alpha_order_aut_varphi(p)}, 
$(x,y)\in\varphi(b) \subseteq \varphi(p)$, so since $(y,\alpha(x))\in\varphi(p)$ and $p^2 = p$, we get
$(x,\alpha(x))\in \varphi(p)\mathbin{;}\varphi(p) = \varphi(p)$. Hence, by Lemma~\ref{lem:alpha_order_aut_varphi(p)}, $(\alpha(x),\alpha(\alpha(x))) \in \varphi(p)$. 
But $(x,\alpha(x))\in\varphi(b)$, so $(x,\alpha(\alpha(x)))\in\varphi(b);\varphi(p)=\varphi(b\,p)=\varphi(b)$. 
%$(y,\alpha(x))\in\varphi(p)$
%implies $(\alpha(x),\alpha(y))\in\varphi(b)$
%and %$(\alpha(y),\alpha(\alpha(x))\in\varphi(p)$.
%Hence, $(\alpha(x),\alpha(\alpha(x)))\in\varphi(b);\varphi(p)=\varphi(b\,p)=\varphi(b)$ since $b\,p=b$. 
Repeated application of Lemma~\ref{lem:alpha_order_aut_varphi(p)} and the composition of successive pairs obtained in this way then gives $(x,\alpha^n(x))=(x,x)\in\varphi(b)$ since $|\alpha|=n$. 

Finally, combining $(x,x)\in\varphi(b)$
and $(x,y)\in\varphi(b)$ gives   $(x,y)\in\varphi(b);\varphi(b)=\varphi(b^2)\subseteq \varphi(-p)$, contradicting that $(x,y)\notin \varphi(-p)$. \qed 
\end{proof}

\begin{example}
    Consider $D^4_{3,1}$, depicted in the first row of
    Table~\ref{Table:NotFinRep}. Here $\top$ is a positive symmetric idempotent and the element $a$
    satisfies $\top a=a=a\top$ and $-\top=\bot<a<\top$ and $a^2=\bot\leqslant -\top$.  Thus, by
    Theorem~\ref{Theorem:NewNoFiniteRep}, $D^4_{3,1}$
    is not finitely representable.
\end{example}

%\begin{credits}
%\subsubsection{\ackname}
%\end{credits}

%%%%%%%%%%%%%%%%%%%%%%%%%%%%%%%%%%%%

\end{document}